\documentclass{article}
\usepackage{kr}

\usepackage{times}
\usepackage{soul}
\usepackage{url}
\usepackage[hidelinks]{hyperref}
\usepackage[utf8]{inputenc}
\usepackage[small]{caption}
\usepackage{graphicx}
\usepackage{amsmath}
\usepackage{amsthm}
\usepackage{booktabs}
\usepackage{algorithm}
\usepackage{algorithmic}
\usepackage{lineno}
\newtheorem{theorem}{Theorem}

\title{Partially Finite Model Reasoning in Description Logics\\ Extended Version}

\author{
Tomasz~Gogacz\textsuperscript{\rm 1}
\and
Filip~Murlak\textsuperscript{\rm 1}
\and
Marcin~Przyby{\l}ko\textsuperscript{\rm 1}
\and
Alexandra~Rogova\textsuperscript{\rm 1}
\And 
Micha{\l}~Skrzypczak\textsuperscript{\rm 1} \\
\affiliations
\textsuperscript{\rm 1}University of Warsaw\\
\emails
\{t.gogacz, fmurlak, m.przybylko, rogova, mskrzypczak\}@mimuw.edu.pl
}

\usepackage{booktabs}
\usepackage{amsthm}
\usepackage{amsmath}
\usepackage{amssymb}
\usepackage{amsopn}
\usepackage{mathptmx}
\usepackage{stmaryrd}
\usepackage{thm-restate}
\newtheorem{definition}[theorem]{Definition}
\newtheorem{lemma}[theorem]{Lemma}
\newtheorem{claim}[theorem]{Claim}
\newtheorem{fact}[theorem]{Fact}

\usepackage[dvipsnames]{xcolor}

\usepackage[shortcuts]{extdash}

\usepackage{xspace}

\usepackage{graphicx}
\usepackage{subcaption}

\usepackage{tikz}
\usetikzlibrary {automata,positioning,graphs, calc, decorations.pathmorphing, decorations.shapes, patterns}

\usepackage{pifont}
\usepackage{ mathdots }

\newcommand{\distN}{k}

\newcommand{\Del}{D}
\newcommand{\Nel}{E}

\newcommand{\ignore}[1]{}

\newcommand{\TT}{{\mathcal T}}

\newcommand{\EE}{{\mathcal E}}

\newcommand{\KK}{{\mathcal K}}
\newcommand{\QQ}{{\mathcal Q}}
\newcommand{\II}{{\mathcal I}}
\newcommand{\JJ}{{\mathcal J}}
\newcommand{\LL}{{\mathcal L}}
\newcommand{\UU}{{\mathcal U}}

\newcommand{\GG}{{\mathcal G}}

\newcommand{\HH}{{\mathcal H}}
\DeclareMathOperator{\Ima}{Im}

\newcommand{\Slogic}{\ensuremath{\mathcal{S}}\xspace}

\newcommand{\ConNameSet}{\ensuremath{\mathtt{N_C}}\xspace}
\newcommand{\RolNameSet}{\ensuremath{\mathtt{N_R}}\xspace}
\newcommand{\TRolNameSet}{\ensuremath{\mathtt{N^t_R}}\xspace}
\newcommand{\NTRolNameSet}{\ensuremath{\mathtt{N^{nt}_R}}\xspace}
\newcommand{\IndNameSet}{\ensuremath{\mathtt{N_I}}\xspace}
\newcommand{\VarSet}{\ensuremath{\mathtt{N_V}}\xspace}
\newcommand{\TypesSet}[1]{\ensuremath{\mathtt{Tp}(#1)}\xspace}

\newcommand{\dom}{\mathrm{dom}\xspace}
\newcommand{\range}{\mathrm{rg}\xspace}

\newcommand{\KB}{\ensuremath{\mathcal{K}}\xspace}
\newcommand{\TBox}{\ensuremath{\mathcal{T}}\xspace}
\newcommand{\ABox}{\ensuremath{\mathcal{A}}\xspace}
\newcommand{\Fin}{\ensuremath{F}\xspace}
\newcommand{\intp}{\ensuremath{\mathcal{I}}\xspace}
\newcommand{\intb}{\ensuremath{\mathcal{J}}\xspace}
\newcommand{\intq}{\ensuremath{\mathcal{Q}}\xspace}

\newcommand{\query}{\ensuremath{Q}\xspace}

\newcommand{\from}{\colon}
\newcommand{\N}{\mathbb{N}\xspace}

\newcommand{\conceptsIn}[1]{\ensuremath{\mathtt{CN}(#1)}\xspace}
\newcommand{\rolesIn}[1]{\ensuremath{\mathtt{Rol}(#1)}\xspace}

\newcommand{\indsIn}[1]{\ensuremath{\mathtt{Ind}(#1)}\xspace}
\newcommand{\types}[2]{\ensuremath{\mathtt{tp}^{#1}(#2)}\xspace}

\newcommand{\restr}{\upharpoonright}

\newcommand{\eqdef}{\stackrel{\text{def}}=}

\newcommand{\rch}[2]{\mathtt{rch}^{#2}_{#1}}

\newcommand{\exptime}{\ensuremath{\mathrm{ExpTime}}\xspace}

\definecolor{myred}{HTML}{E8584B}
\definecolor{mypurple}{HTML}{663B91}
\definecolor{myorange}{HTML}{DE9502}
\definecolor{mybrown}{HTML}{3D2901} 
\definecolor{mydarkblue}{HTML}{0402A2}
\definecolor{mylightblue}{HTML}{9C9AFE}
\definecolor{mylightgreen}{HTML}{7CCB95}
\definecolor{mylightred}{HTML}{EC796F}
\newcommand{\mypurple}[1]{{\color{mypurple}#1}} 
\newcommand{\myorange}[1]{{\color{myorange}#1}} 
\newcommand{\mygreen}[1]{{\color{ForestGreen}#1}}
\newcommand{\myblue}[1]{{\color{RoyalBlue}#1}}

\DeclareMathAlphabet{\mathcal}{OMS}{cmsy}{m}{n}

\begin{document}

\maketitle

\begin{abstract}
Aiming to harmonise finite and infinite model reasoning, we initiate the study of \emph{partially finite models}, where the reasoning task comes with a formula that specifies a part of the model that must be finite. We focus on the problem of \emph{partially finite query entailment} in description logics (DLs): given a~knowledge base (KB), a~query, and a~distinguished concept, decide whether the query holds in all models of the KB that interpret the distinguished concept as a finite set. To break the ground, we work with the DL $\Slogic$, an extension of the basic DL $\mathcal{ALC}$ with transitive roles, which is one of the simplest cases where finite and infinite query entailment diverge. Generalising previous results on the finite and infinite cases,  we show that also partially finite entailment of conjunctive queries is in 2\=/\exptime 
for $\Slogic$. The solution involves sophisticated infinite model surgery and goes far beyond combining the arguments for the two special cases. 
As a direct application, we show how the problem of query containment in the presence of closed predicates can be solved by reduction to partially finite query entailment.
\end{abstract}

\section{Introduction}
\label{sec:intro}





Query answering is one of the fundamental tasks in knowledge\=/base related reasoning, where a~query language -- often, \emph{conjunctive queries} -- is enhanced by an ontology -- a~set of inference rules. The ontology
augments the query by facilitating the unification of heterogeneous data, allowing reconstruction of incomplete data, or allowing domain knowledge injection~\cite{cali-framework-2012,calvanese-ontologies-and-databases}.

Classically, in ontology\=/mediated data access one 
asks for so\=/called \emph{certain answers}, that is, answers valid in every, possibly infinite, model of the knowledge base. This \emph{infinite model reasoning} mode
has been the~subject of a~rich and intense study for many ontology languages, including existential rules and description logics, e.g.~\cite{existential-rules-cali,DBLP:journals/jcss/CalvaneseGLR12,DBLP:conf/sdb/Perez-UrbinaMH08,DBLP:conf/ijcai/AmendolaLMV18}.
In the alternative mode of 
\emph{finite model reasoning}, inspired by applications in database theory, 
only finite models
are considered; see  \cite{Murlak23} for an overview. 

For many ontology and query languages the above modes of query entailment coincide, which is known as  \emph{finite controllability}. Some languages, however, are not finitely controllable even with respect to conjunctive queries: two well-known examples among DLs are $\Slogic$ and $\mathcal{ALCIF}$. For an in-depth study see, e.g.,~\cite{amendola-finite-control,BednarczykK22,figueira-control-20,gogacz-control-17}.

This motivates us to consider a~hybrid mode of \emph{partially finite model reasoning}  that combines the universality of infinite models with the real\=/world restrictions of finite domains.
In this mode, we are additionally provided with a~distinguished concept and consider only models where this  concept is interpreted as finite set of individuals.
This can be seen as a~relaxation of query entailment with  \emph{closed predicates}, see e.g.~\cite{DBLP:journals/lmcs/LutzSW19}, where considered models might extend the interpretation of a closed predicate but must keep it finite. 

\begin{figure*}
\begin{subfigure}{0.3\textwidth}
            \begin{tikzpicture}[scale=0.6, sm/.style={scale=0.7, circle, fill=black, inner sep=0pt,minimum size=5pt, outer sep=2pt}, baseline=(A1.base)]
                \node[label=above:\myorange{$A$}, sm, fill=myorange] (A1) {};
                \node[label=above:\myblue{$B$}, sm, fill=RoyalBlue] (B1) [right of=A1] {};
                \node[label=above:\myorange{$A$}, sm, fill=myorange] (A2) [right of=B1] {};
                \node[label=above:\myblue{$B$}, sm, fill=RoyalBlue] (B2) [right of=A2] {};
                \node[label=above:\myorange{$A$}, sm, fill=myorange] (A3) [right of=B2] {};
                \node[right of=A3] (next) {$\cdots$};
                
                \node[label=below:\mygreen{$F$}, sm, fill=ForestGreen] (DA1) [below of=A1] {};
                \node[label=below:\mygreen{$F$}, sm, fill=ForestGreen] (DB1) [right of=DA1] {};
                \node[label=below:\mygreen{$F$}, sm, fill=ForestGreen] (DA2) [right of=DB1] {};
                \node[label=below:\mygreen{$F$}, sm, fill=ForestGreen] (DB2) [right of=DA2] {};
                \node[label=below:\mygreen{$F$}, sm, fill=ForestGreen] (DA3) [right of=DB2] {};
                \node[right of=DA3] {$\cdots$};
                
                \draw[->] (A1) -- (B1);
                \draw[->] (B1) -- (A2);
                \draw[->] (A2) -- (B2);
                \draw[->] (B2) -- (A3);
                \draw[->] (A3) -- (next);
                \draw[->] (A1) -- (DA1);
                \draw[->] (B1) -- (DB1);
                \draw[->] (A2) -- (DA2);
                \draw[->] (B2) -- (DB2);
                \draw[->] (A3) -- (DA3);
            \end{tikzpicture}
    \caption{Unrestricted (infinitely many $F$ nodes).}
    \label{fig:ex-1-unrestricted-model}
\end{subfigure} \hfill
\begin{subfigure}{0.3\textwidth}
            \begin{tikzpicture}[scale=0.6, sm/.style={scale=0.7, circle, fill=black, inner sep=0pt,minimum size=5pt, outer sep=2pt}, baseline=(A1.base)]
                \node[label=above:\myorange{$A$}, sm, fill=myorange] (A1) {};
                \node[label=above:\myblue{$B$}, sm, fill=RoyalBlue] (B1) [right of=A1] {};
                \node[label=above:\myorange{$A$}, sm, fill=myorange] (A2) [right of=B1] {};
                \node[label=above:\myblue{$B$}, sm, fill=RoyalBlue] (B2) [right of=A2] {};
                \node[label=above:\myorange{$A$}, sm, fill=myorange] (A3) [right of=B2] {};
                \node[right of=A3] (next) {$\cdots$};
                
                \node[label=below:\mygreen{$F$}, sm, fill=ForestGreen] (D) [below of=A2] {};
                
                \draw[->] (A1) -- (B1);
                \draw[->] (B1) -- (A2);
                \draw[->] (A2) -- (B2);
                \draw[->] (B2) -- (A3);
                \draw[->] (A3) -- (next);
                \draw[->] (A1) -- (D);
                \draw[->] (B1) -- (D);
                \draw[->] (A2) -- (D);
                \draw[->] (B2) -- (D);
                \draw[->] (A3) -- (D);
            \end{tikzpicture}
    \caption{Partially finite (single $F$ node).}
    \label{fig:ex-1-mixed-model-single-D}
\end{subfigure} \hfill
\begin{subfigure}{0.3\textwidth}
            \begin{tikzpicture}[scale=0.6, sm/.style={scale=0.7, circle, fill=black, inner sep=0pt,minimum size=5pt, outer sep=2pt}, baseline=(A1.base)]
                \node[label=above:\myorange{$A$}, sm, fill=myorange] (A1) {};
                \node[label=above:\myblue{$B$}, sm, fill=RoyalBlue] (B1) [right of=A1] {};
                \node[label=above:\myorange{$A$}, sm, fill=myorange] (A2) [right of=B1] {};
                \node[label=above:\myblue{$B$}, sm, fill=RoyalBlue] (B2) [right of=A2] {};
                \node[label=above:\myorange{$A$}, sm, fill=myorange] (A3) [right of=B2] {};
                \node[right of=A3] (next) {$\cdots$};
                
                \node[label=below:\mygreen{$F$}, sm, fill=ForestGreen] (D1) [below right of=B1, node distance=1.2cm] {};
                \node[label=below:\mygreen{$F$}, sm, fill=ForestGreen] (D2) [below right of=A2, node distance=1.2cm] {};
                
                \draw[->] (A1) -- (B1);
                \draw[->] (B1) -- (A2);
                \draw[->] (A2) -- (B2);
                \draw[->] (B2) -- (A3);
                \draw[->] (A3) -- (next);
                \draw[->] (A1) -- (D1);
                \draw[->] (B1) -- (D2);
                \draw[->] (A2) -- (D1);
                \draw[->] (B2) -- (D2);
                \draw[->] (A3) -- (D1);
            \end{tikzpicture}
    \caption{Partially finite (two $F$ nodes).}
    \label{fig:ex-1-mixed-model-multiple-D}
\end{subfigure}
\caption{Unrestricted and partially finite interpretations.}
\end{figure*}
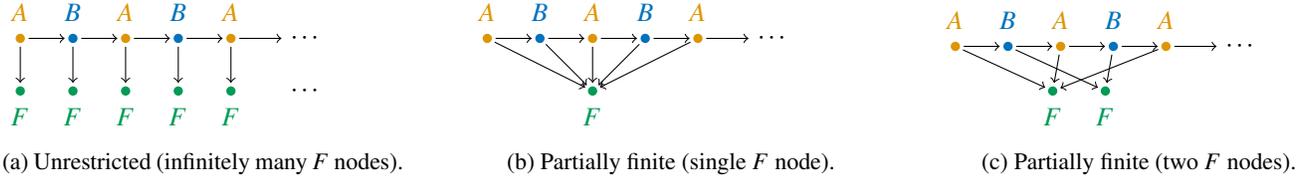

\paragraph{No universal models.}
Consider the knowledge base consisting of the ABox $\{A(a)\}$ and
the TBox 
\[\{A  \sqsubseteq \exists r.B,~ B \sqsubseteq \exists r.A, ~ A \sqcup B \sqsubseteq \exists r.F\}\,.\]
In every model of this knowledge base, there is an infinite chain of alternating $A$\=/, and $B$\=/labelled nodes,  each linked to an~$F$\=/labelled node. 
A~natural infinite  model for this TBox is represented in Figure~\ref{fig:ex-1-unrestricted-model}. In the partially finite reasoning mode, however, we might be required to ensure that there are only finitely many $F$-labelled nodes. There are several ways to make the model conform to this new requirement. The most obvious choice, shown in Figure~\ref{fig:ex-1-mixed-model-single-D}, is to merge all $F$-labelled nodes into one and redirect all edges to this single representative. 
This, however,  might not always be a~good solution. Consider the query 
\[
\exists x\, \exists y\, \exists z \; r(x,y) \wedge r(x,z) \wedge r(y,z)\,.
\]
which checks if some node and its successor share a common successor.
The  model in Figure~\ref{fig:ex-1-unrestricted-model} shows that this query is not entailed over  infinite models. But the model in 
Figure~\ref{fig:ex-1-mixed-model-single-D} does satisfy the query. Is there one that does not? Instead of merging all $F$\=/nodes in the model from Figure~\ref{fig:ex-1-unrestricted-model} into one, separately merge those connected to an $A$-labelled node and those connected to a $B$-labelled node, as shown in Figure~\ref{fig:ex-1-mixed-model-multiple-D}. The resulting  model does not satisfy the query. Hence, we can conclude that the query is not entailed. 

The partially finite model in Figure~\ref{fig:ex-1-mixed-model-multiple-D} is relatively simple, but it is tailored to a specific query -- unlike the unrestricted model in  Figure~\ref{fig:ex-1-unrestricted-model}, which can serve as a universal countermodel for all conjunctive queries that are not entailed. For \[
\exists x\, \exists y\, \exists z\,  \exists w \;  r(x,y) \wedge r(y,z) \wedge r(x,w) \wedge r(z,w)
\] we would need a different partially finite countermodel. And indeed, no partially finite model can serve as a counter-model for all conjunctive queries that are not entailed.

\paragraph{Contribution.}
We begin the investigation of partially finite reasoning 
by inspecting the problem of partially finite entailment for the logic $\Slogic$, 
a~description logic that extends $\mathcal{ALC}$ with role transitivity. As a~relatively small and syntactically simple logic without finite controllability, it is a~prime candidate for this initial study.  As our main contribution, we show that for $\Slogic$ partially finite entailment of conjunctive queries is 2\=/\exptime-complete, matching the complexities of both finite~\cite{garcia-finite-s-arxiv} and infinite entailment~\cite{eiter-unrestricted-s,shiq-infinite-s,shoq-infinite-s}. Additionally, to showcase the usefulness of partially finite entailment, we apply it to solve query containment in the presence of closed predicates.


We start the presentation with the necessary preliminaries in Section~\ref{sec:prelim}. In Section~\ref{sec:single-role}, we describe a~solution for a~single transitive role. In Section~\ref{sec:PartiallyfiniteTheorem}, we develop a model-theoretic tool that allows us to solve the general problem in  Section~\ref{sec:multiple-roles}. In Section~\ref{sec:application-to-closed-predicates}, we discuss the connection to  query containment with closed predicates. Finally, we conclude with a discussion of potential future work  in Section~\ref{sec:ending}. An~appendix with full proofs can be found in the supplementary material. 

\paragraph{Related work.}

In the 1980's and 1990's, a~line of work investigated \emph{circumscription}~\cite{mccarthy1980circumscription,DBLP:journals/jair/BonattiLW09}, and \emph{negation as failure rules}~\cite{10.1016/0004-3702(86)90001-9,DBLP:journals/tocl/DoniniNR02,DBLP:journals/ngc/Katsuno90} as alternatives to the big divide of Closed vs Open World Assumptions. In a~\emph{circumscribed knowledge base}, a~set of predicates (or concepts) must be minimal (but not necessarily finite), whereas in a~language with \emph{negation as failure rules}, a~set of predicates can be marked to be interpreted as false unless provably true. \citeauthor{gelfond1989relationship} (\citeyear{gelfond1989relationship}), show that these two approaches can be made equivalent under some natural constraints. More recently, combinations of the classical DL open-world semantics and closed-world rules have been considered \cite{DBLP:conf/ijcai/Lifschitz91,DBLP:journals/jacm/MotikR10,DBLP:journals/ai/KnorrAH11}: unlike partially finite entailment, they suffer from increased  complexity or even undecidability. 

The complexity of query answering in the presence of \emph{closed predicates}, has been studied in the context of description logics, ranging from DL-Lite and $\mathcal{EL}$ \cite{DBLP:journals/lmcs/LutzSW19,DBLP:conf/aaai/AndreselOS20,DBLP:conf/kr/NgoOS16}, to very expressive DLs like $\mathcal{ALCHI}$ \cite{DBLP:conf/aaai/BajraktariOS18,DBLP:journals/lmcs/LutzSW19} and $\mathcal{ALCHQIO}$ \cite{DBLP:conf/amw/AhmetajOS16,DBLP:conf/aaai/LukumbuzyaOS20}, as well as database\=/flavoured existential rules \cite{DBLP:conf/kr/GogaczLOS20,DBLP:conf/aaai/0002025,benedikt2018pspace,DBLP:conf/ijcai/BienvenuB19}. 

The idea of restricting a~part of the model to be finite also appears in practice, notably in the SUMA system \cite{DBLP:journals/dase/QinZYWFX21}, where only the parts of the model which are relevant for querying are materialised in~memory. Experiments on query answering show that SUMA performs comparably or~better than similar systems, which corresponds to our complexity upper\=/bounds.

\section{Preliminaries}
\label{sec:prelim}

We fix disjoint infinite countable sets \ConNameSet of concept names, 
\RolNameSet of role names, \IndNameSet of individuals, and 
\VarSet of variables.

\paragraph{The description logic \Slogic.}

We consider the description logic  {\Slogic} which extends $\mathcal{ALC}$ with role transitivity. We assume that $\RolNameSet$ is split into the sets $\TRolNameSet$ and $\NTRolNameSet$ of \emph{transitive} and \emph{non-transitive role names}. \Slogic \emph{concepts}, usually denoted by $C$ or $D$, are then defined just like for $\mathcal{ALC}$ by the grammar
\[
C, D := \bot ~|~ \top ~|~ A ~|~ \neg C ~|~ C \sqcap D
~|~ C \sqcup D ~|~ \exists r.\,C ~|~ \forall r.\,C 
\]
where $A$ ranges over {concept names} from \ConNameSet and $r$ ranges over {role names} from \RolNameSet. 
A \emph{concept inclusion} is a~formula of the shape $C \sqsubseteq D$ where $C$ and $D$ are \Slogic~concepts. 
%
%
A  {TBox} is a~finite set of  {concept inclusions}.

Without loss of generality we can assume that TBoxes are normalised and only use concept inclusion of the forms
\[ \textstyle
\bigsqcap_i A_i \sqsubseteq \bigsqcup_j B_j\,, \qquad 
A \sqsubseteq \forall r.B\,, \qquad
A \sqsubseteq \exists r.B\,,
\]
with empty conjunction and disjunction treated as $\top$ and $\bot$  \cite{DBLP:journals/ai/Gutierrez-Basulto23}. 

An  {ABox} is a~finite set of  {concept assertions} of the form $A(a)$ and  {role assertions} of the form $r(a,b)$ where $A$ is a~{concept name}, $r$ is a~{role name} and $\{a,b\}$ are  {individuals}.

\paragraph{Interpretations.}

An interpretation \intp consists of a~(possibly infinite)  {domain} $\Delta^{\intp}$ and  a {function} $\cdot^{\intp}$ that maps each concept name $A$ to  $A^\intp \subseteq \Delta^{\intp}$, and each role name $r$ to $r^\intp \subseteq \Delta^{\intp}\times\Delta^{\intp}$. We extend $\cdot^{\intp}$ to $\Slogic$ concepts as follows:
\begin{gather*}
\bot^\intp = \emptyset\:, \quad 
\top^\intp = \Delta^{\intp}\,,\quad
(\neg C)^\intp = \Delta ^{\intp} \setminus C^{\intp}\,,\\ 
(C \sqcap D)^\intp = C^{\intp} \cap D^{\intp},\quad (C \sqcup D)^\intp = C^{\intp} \cup D^{\intp}\,,\\
(\exists r.C)^\intp = \{a \in \Delta^{\intp} ~|~ \exists b.\  (a,b) \in r^{\intp} \land b \in C^{\intp}\}\,,\\
(\forall r.C)^\intp = \{a \in \Delta^{\intp} ~|~ \forall b.\  (a,b) \in r^{\intp} \implies b \in C^{\intp}\}\,.
\end{gather*}

We call a role $r \subseteq \Delta \times \Delta$ \emph{transitive} if for all $a, b, c \in \Delta$, $(a,b)\in r$ and $(b,c)\in r$ implies $(a,c)\in r$. 
The \emph{transitive closure} of $r\subseteq \Delta\times\Delta$ (with respect to $\Delta$), written as $r^*$ is the least transitive $r' \subseteq \Delta \times \Delta$ such that $r \subseteq r'$.
An interpretation $\intp$ is \emph{transitive} if $t^\intp$ is transitive for all $t\in\TRolNameSet$.
The \emph{transitive closure} of an interpretation $\intp$, written as $\intp^*$ is the interpretation such that $\Delta^{\intp^*} = \Delta^\intp$, $A^{\intp^*} = A^\intp$,  $r^{\intp^*} = r^\intp$, and $t^{\intp^*} = \big(t^\intp\big)^*$ for all $A\in\ConNameSet$, $r\in\NTRolNameSet$, and $t\in\TRolNameSet$. 

An interpretation \intp  \emph{satisfies}: 
a~concept inclusion $C \sqsubseteq D$ if $C^{\intp} \subseteq D^{\intp}$; 
a~{concept assertion} $A(a)$ if $a \in A^{\intp}$; 
a~{role assertion} $r(a,b)$ if $(a,b) \in r^{\intp}$; 
a~TBox $\TBox$, written as $\intp\models \TBox$, if it is transitive and satisfies each concept inclusion in $\TBox$; 
and an ABox $\ABox$, written as $\intp\models\ABox$, if it satisfies each assertion in $\ABox$. (Note that we adopt the Standard Name Assumption; in particular,  $\Delta^\intp$ includes all individuals used in $\ABox$.)

An {interpretation} $\intp$ is a~\emph{subinterpretation} of an~{interpretation} $\intb$, written as  $\intp \subseteq \intb$, if $\Delta^{\intp} \subseteq \Delta^{\intb}$ and $A^{\intp} \subseteq A^{\intb}, r^{\intp} \subseteq r^{\intb}$ for all $A \in  \ConNameSet$, and $r \in  \RolNameSet$. A subinterpretation $\intp$ of $\intb$ is \emph{induced} by $\Delta\subseteq\Delta^{\intb}$, written as $\intp = \intb \restr \Delta$, if $\Delta^{\intp} = \Delta$, $A^{\intp} = A^{\intb} \cap \Delta$ and $r^{\intp} = r^{\intb} \cap \Delta \times \Delta$ for all $A \in \ConNameSet$, and $r \in  \RolNameSet$.  For $\Sigma \subseteq \RolNameSet$, an interpretation \intp is \emph{over $\Sigma$} if  $r^\intp = \emptyset$ for all  $r\notin\Sigma$. If $\Sigma = \{r\}$ we say simply that $\intp$ is over $r$. 
A~concept name $A$ \emph{occurs} in  $\intp$ if $A^\intp \neq \emptyset$. We consider only interpretations in which finitely many concept names occur.

A homomorphism $h$ from $\intp$ to $\intb$, written as $h:\intp\to\intb$ is a~function from $\Delta^\intp$ to $\Delta^{\intb}$ such that $e \in A^\intp$ implies $h(e) \in A^{\intb}$, $(e,e') \in r^\intp$ implies $(h(e), h(e')) \in r^{\intb}$, and $h(a) = a$ for all $e,e'\in\Delta^\intp$, 
$A\in\ConNameSet$, 
$r\in\RolNameSet$, 
$a\in\IndNameSet$. By $\range(h)\subseteq\Delta^{\intb}$ we denote the set of values of $h$, i.e.,~$\{h(e)\mid e\in\Delta^\intp\}$.

An interpretation $\intp$ can be seen as directed multigraph $G_\intp$ whose nodes are the elements of $\intp$ and edges are obtained as the disjoint union of the interpretations of role names in $\intp$. Nodes of $G_\intp$ are labelled with sets of concept names and edges are labelled with role names (note that parallel edges have different labels). Throughout the paper we apply the standard graph-theoretic terminology directly to interpretations. In particular, we will speak of reachability and \emph{strongly\=/connected components} (SCCs) of interpretations, which are maximal sets of elements in which every element is reachable from every other element.

\paragraph{Knowledge Bases.}


A~{knowledge base} (KB) $\KB$ is a~pair $(\TBox, \ABox)$ composed of a~{TBox} $\TBox$ and an {ABox} $\ABox$. An  {interpretation} \intp is a~\emph{model} of a~{KB} $\KB = (\TBox, \ABox)$, written as $\intp \models \KB$, if $\intp\models\TBox$, $\intp\models\ABox$, and $\intp$ is transitive.

We write $\|\KB\|$ for the total size of  concept inclusions, transitivity declarations, and assertions in $\KB$. By ${\conceptsIn{\KB}}$, $ {\rolesIn{\KB}}$
, and $ {\indsIn{\KB}}$, we denote the sets of all  {concept names},  {role names}
~and  {individuals} that appear in \KB. $\KB$ is \emph{over $\Sigma \subseteq \RolNameSet$} if $\rolesIn{\KB} \subseteq \Sigma$. We extend this notation and terminology in the natural way to  {TBoxes} and  {ABoxes}.


A  \emph{unary type} is a~set of {concept names}. A unary type over $\Gamma \subseteq \ConNameSet$ is a~set of concept names from $\Gamma$.  For a~{KB} \KB, we write $ {\TypesSet{\KB}}$ for the set of all unary types over $\conceptsIn{\KB}$. We write $\types{\intp}{e}$ for the \emph{type of element $e$ in interpretation $\intp$}, defined as  $\types{\intp}{e} = \{A\in\ConNameSet \bigm| e \in A^\intp\}$. 

\paragraph{Conjunctive Queries.}

We consider \emph{conjunctive queries} (CQs), i.e.~queries of the form $\exists \bar{x}. ~ q_1(\bar{y}_1) \wedge q_2(\bar{y}_2)\wedge \ldots\wedge q_n(\bar{y}_n)$ where $\bar{x}$ is a~tuple of variables, $\bar{y}_1,\bar{y}_2,\dots \bar{y}_n$ are tuples of  {variables} from $\bar x$, and each $q_i$ is either a~unary atom  of the form $A(x)$ for some  $A\in\ConNameSet$ and $x\in\VarSet$  or a~binary atom of the form $r(x,x')$ for some $r\in\RolNameSet$ and $x, x'\in\VarSet$.

A  \emph{match} of a~CQ $q$ in an {interpretation} \intp is a~function $\pi$ that maps each  {variable} in $q$ to an element of $\Delta^{\intp}$ such that $\pi(x) \in A^{\intp}$ for each atom in $q$ of the shape $A(x)$ and $(\pi(x), \pi(x')) \in r^{\intp}$ for each atom in $q$ of the shape $r(x, x')$. We say that an  {interpretation} $\intp$ {satisfies} a~ CQ $q$, written as $\intp \models q$, if there is a {match}  of $q$ in $\intp$. 

A \emph{union of conjunctive queries} (UCQ) is a finite set $\query$ of CQs. An interpretation $\intp$ satisfies $\query$, written as $\intp\models \query$,  if  $\intp\models q$ for some $q\in \query$. It is well known that if $\intp\models\query$ and there is a homomorphism from $\intp$ to $\JJ$, then $\JJ\models\query$. 

A query $Q$ is \emph{over $\Sigma \subseteq \RolNameSet$} if uses role names only from $\Sigma$.


\paragraph{Partially Finite Entailment.}

A {KB} $\KB$ \emph{entails} a~query $\query$, written as $\KB \models \query$, if every  {model} of $\KB$ {satisfies} $\query$.  A  \emph{counter-model} is a~{model} of $\KB$ that does not  {satisfy} $\query$. Similarly, $\KB$ \emph{finitely entails} $\query$, written as $\KB \models_{\mathit{fin}} Q$, if every \emph{finite} {model} of $\KB$  {satisfies} $\query$.
We focus on a~common generalisation of  these problems, where the implication must hold for models where a~specified concept is finite.  Given a~concept $\Fin$, we say that \KB \emph{$\Fin$-entails} \query, written 
$\KB \models_\Fin Q$,
if every model \intp of \KB with $\Fin^{\intp}$ finite, satisfies \query. For simplicity, we assume $F$ is a~concept name.

\smallskip

\noindent \fbox{
  \begin{tabular}{r l}     \multicolumn{2}{c}{\large \textsc{Partially finite entailment}}  \\
     \textbf{Input:} & A  {KB} \KB, a~concept name $\Fin$, and a~ {query} \query \\
     \textbf{Output:} & True iff $\KB  \models_\Fin \query$
\end{tabular}}

\smallskip

Throughout the paper we assume that  $F$ is the distinguished concept to be interpreted as a finite set. We call an element $e$ of interpretation $\intp$ \emph{critical} if $e\in F^\intp$.


\section{Single Transitive Role}
\label{sec:single-role}

We first deal with the case of a~single transitive role $t$: throughout the section we  assume that the interpretations of all remaining role names are empty. To make this subcase easier to use in the general solution, we work with rooted and labelled interpretations. 

An \emph{$X$\=/labelled interpretation} $(\intp,\lambda)$ is an interpretation $\intp$ along with a~labelling function $\lambda :\Delta^\intp\to X$ that assigns to each element of $\intp$ a~label from $X$. The set $X$ will vary, depending on the context. A homomorphism $h\from (\intp, \lambda) \to (\intp', \lambda')$ is a~homomorphism $h\from\intp \to \intp'$ such that $\lambda(a) = \lambda'(h(a))$ for all $a \in \Delta^\intp$.

A \emph{rooted interpretation} $(\intp, a)$ is an interpretation $\intp$ along with a~distinguished element $a\in\Delta^\intp$ called the \emph{root}. A \emph{homomorphism} $h\from(\intp,a) \to (\intp', a')$ is a~homomorphism $h\from\intp \to \intp'$ such that $h(a)=a'$.

A \emph{rooted $X$\=/labelled interpretation} is $(\intp, a, \lambda)$ where $(\intp, a)$ is a~rooted interpretation and $(\intp, \lambda)$ is an $X$\=/labelled interpretation. A homomorphism $h\from(\intp, a, \lambda)\to (\intp', a', \lambda')$ is a~homomorphism $h\from\intp \to \intp'$ such that both $h\from(\intp, a)\to (\intp', a')$ and $h\from(\intp, \lambda)\to (\intp',\lambda')$ are homomorphisms.

This additional labelling in $X$ will be used inside our inductive bottom\=/up construction to store the already\=/simplified parts of our interpretation as external values (labels), allowing us to focus on the currently modified part.

Intuitively, we would like to replace a~given (rooted $X$\=/labelled) interpretation with a~simpler one, but preserving certain properties. Eventually, we will care about satisfying a~given KB and not satisfying a~given query. For now, we use an abstract sufficient condition, formulated purely in terms of interpretations.

For an interpretation $\intp$, an element $u \in \Delta^\intp$, and a~transitive role name $t$, we write $\rch{t}{\intp}(u)$ for the set of concept names reachable from $u$, i.e. all $A$ such that $A^\intp\setminus \{u\}$ contains a~node $t$-reachable from $u$. 

A homomorphism $h\from\intp \to \intp'$ is \emph{$t$\=/strong} 
if \[\types{\intp}{a} = \types{\intp'}{h(a)} \quad  \text{and} \quad \rch{t}{\intp}(a) = \rch{t}{\intp'}(h(a))\] for all $a\in\Delta^\intp$. Clearly, the composition of two $t$\=/strong homomorphisms is a~$t$\=/strong homomorphism.

\begin{restatable}{lmm}{lemStrong}
\label{lem:strong}
Consider a~TBox $\TBox$ over $t$, a~UCQ $Q$, and transitive interpretations $\intp$ and $\intp'$ such that there is a~$t$\=/strong homomorphism $h\from \intp \to \intp'$. 
\begin{itemize}
\item If  $\intp'\models \TBox$ and $\intp'\not\models Q$,  then $\intp\models \TBox$ and $\intp\not\models Q$.
\item If $h$ is surjective and $\intp\models \TBox$, then $\intp' \models \TBox$.
\end{itemize}
\end{restatable}

The notion of $t$\=/strong homomorphism lifts naturally to rooted and $X$\=/labelled interpretations: we call a~homomorphism $h\from (\intp, a, \lambda) \to (\intp', a', \lambda')$  $t$\=/strong if the underlying homomorphism from $\intp$ to $\intp'$ is $t$\=/strong.

Our goal is the following:
for a~transitive rooted $X$\=/labelled interpretation $(\intp,a, \lambda)$ over $t$, construct a~rooted $X$\=/labelled interpretation over $t$ that has a~small finite representation
and maps to $ (\intp,a, \lambda)$ via a~$t$\=/strong homomorphism.

Next, we make this goal more concrete by explaining how we represent infinite interpretations using finite ones.  

\subsection{Quasi-unravelling}

We modify the standard unravelling to keep the set of critical elements finite: the \emph{quasi\=/unravelling} procedure builds a~tree-like model by making a separate copy of an element for each path leading to this element, except for critical elements, of which only one copy is kept (see Figure~\ref{fig:quasiunravelling}).

Let \intp be an interpretation over a~single transitive role name $t$. We refer to SCCs of $\intp$ as \emph{clusters}. Notice that since we have a~single transitive role, a~cluster is either a~single element (with or without a~loop) or a~clique. We call a~cluster in \intp \emph{critical} if it contains a~critical element.
We write  $\Delta^\intp_F$ for the union of all critical clusters in interpretation \intp. For an element $a \in \Delta^\intp$, we let $K^\intp_a$ be the cluster in $\intp$ that contains $a$ if this cluster is critical, and $K^\intp_a=\{a\}$ otherwise.

\begin{figure}
    \begin{tikzpicture}[scale=0.25, sm/.style={scale=0.5, circle, fill=black, inner sep=0pt,minimum size=5pt, outer sep=2pt}]
        \node[label=above:\myorange{$A$}, sm, fill=myorange] (A1) {};
        \node[label=above:\myblue{$B$}, sm, fill=RoyalBlue] (B1) [above right of=A1] {};
        \node[label=below:\myblue{$B$}, sm, fill=RoyalBlue] (B2) [below right of=A1] {};
        \node[label=above:\mypurple{$C$}, sm, fill=mypurple] (C1) [below right of=B1] {};
        \node[label=right:\mygreen{$F$}, sm, fill=ForestGreen] (F1) [right of=C1] {};
        \node[label=above:\myblue{$B$}, sm, fill=RoyalBlue] (B3) [above right of=F1, node distance=1.5cm] {};
        \node[label=below:\mypurple{$C$}, sm, fill=mypurple] (C2) [below right of=F1, node distance=1.5cm] {};
        \node[label=above:\myorange{$A$}, sm, fill=myorange] (A2) [right of=F1, node distance=2.2cm] {};

        \draw[->] (A1) -- (B1);
        \draw[->] (A1) -- (B2);
        \draw[->] (B1) -- (C1);
        \draw[->] (B2) -- (C1);
        \draw[->] (C1) -- (F1);
        \draw[->] (F1) -- (B3);
        \draw[->] (F1) -- (C2);
        \draw[->] (B3) -- (A2);
        \draw[->] (C2) -- (A2); 

        \node (unrav) [right of=F1, node distance=2.5cm] {\Large $\xrightarrow{\text{quasi unrav.}}$};

        \node[label=above:\myorange{$A$}, sm, fill=myorange] (A1u) [right of= unrav, node distance=3cm] {};
        \node[label=above:\myblue{$B$}, sm, fill=RoyalBlue] (B1u) [above right of=A1u] {};
        \node[label=below:\myblue{$B$}, sm, fill=RoyalBlue] (B2u) [below right of=A1u] {};
        \node[label=above:\mypurple{$C$}, sm, fill=mypurple] (C1u) [right of=B1u] {};
        \node[label=below:\mypurple{$C$}, sm, fill=mypurple] (C2u) [right of=B2u] {};
        \node[label=right:\mygreen{$F$}, sm, fill=ForestGreen] (F1u) [below right of=C1u] {};
        \node[label=above:\myblue{$B$}, sm, fill=RoyalBlue] (B3u) [above right of=F1u, node distance=1.5cm] {};
        \node[label=below:\mypurple{$C$}, sm, fill=mypurple] (C3u) [below right of=F1u, node distance=1.5cm] {};
        \node[label=above:\myorange{$A$}, sm, fill=myorange] (A2u) [right of=B3u] {};
        \node[label=below:\myorange{$A$}, sm, fill=myorange] (A3u) [right of=C3u] {};

        \draw[->] (A1u) -- (B1u);
        \draw[->] (A1u) -- (B2u);
        \draw[->] (B1u) -- (C1u);
        \draw[->] (B2u) -- (C2u);
        \draw[->] (C1u) -- (F1u);
        \draw[->] (C2u) -- (F1u);
        \draw[->] (F1u) -- (B3u);
        \draw[->] (F1u) -- (C3u);
        \draw[->] (B3u) -- (A2u);
        \draw[->] (C3u) -- (A3u);
    \end{tikzpicture}
    \caption{An interpretation and its quasi-unravelling.}
    \label{fig:quasiunravelling}
\end{figure}
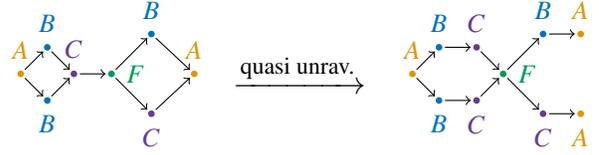
 
\begin{definition}[Quasi\=/unravelling] The quasi\=/unravelling of a~rooted interpretation $(\intp, a)$ over a~single transitive role $t$ is the rooted interpretation $(\widetilde\intp,a)$ over $t$ obtained as follows. The domain of $\widetilde\intp$ is the union of  $\Delta_F^\intp\cup\{a\}$ and the set of pairs of the form $(K,p)$ where $K$ is either $K^\intp_a$ or a~critical cluster of \intp, and $p$ is a~non\=/empty path in \intp that begins in a~successor of an element of $K$ and visits  
elements from $\Delta^\intp \setminus (\Delta_F^\intp\cup\{a\})$ only. For a~concept name $A$, we let  \[A^{\widetilde\intp}\ = \ \Big(A^\intp \cap\, \big(\Delta_F^\intp \cup \{a\}\big)\Big) \ \cup \ \big\{ (K,pu) \in \Delta^{\widetilde \intp} \bigm| u\in A^\intp \big\}\,.\] Finally, $t^{\widetilde\intp}$ is the transitive closure of 
\begin{gather*}
    \Big(t^\intp  \, \cap \:   \big(\big(\Delta_F^\intp \cup \{a\}\big)\times  \Delta_F^\intp  \big)\Big)\  \cup \\
    \cup \ \big\{\big(u,(K,p)\big) \bigm| (K,p)\in\Delta^{\widetilde\intp}, u \in K\big\} \ \cup \\
 \cup  \ \big\{\big((K,p), (K,pu)\big) \bigm| (K,p), (K,pu)\in\Delta^{\widetilde\intp}\big\}\ \cup \\
\cup \ \big\{\big( (K,pu), v \big) \bigm| (K,pu)\in\Delta^{\widetilde\intp}, (u,v) \in t^\intp, v\in\Delta_F^\intp  \big\} \,.
\end{gather*}
\end{definition}


The reader might be familiar with a different style of defining unravellings where copies of elements from $(\intp,a)$  are added successively to $(\widetilde\intp,a)$. In our definition, elements from $\Delta^\intp_F \cup\{a\}$ are copies of themselves, and each $(K,p)$ is a copy of the last element in path $p$.

The quasi\=/unravelling $(\widetilde \intp, a, \widetilde \lambda)$ of a~rooted $X$\=/labelled interpretation $(\intp, a, \lambda)$ is the quasi\=/unravelling $(\widetilde \intp, a)$ of $(\intp,a)$  with the labelling function $\widetilde\lambda$ defined as $\widetilde\lambda(u) = \lambda(u)$ for $u\in\Delta^\intp_F\cup\{a\}$ and 
$\widetilde\lambda((K,pu)) = \lambda(u)$ for $(K,pu) \in \Delta^{\widetilde\intp} \setminus \big(\Delta^\intp_F\cup\{u\}\big)$. 

Quasi-unravelling preserves the satisfaction of TBox constraints and conjunctive queries. Indeed, mapping each copy of an element of $\intp$ to its original gives a $t$-strong homomorphism from $\widetilde\intp$ to $\intp^*$.

\begin{restatable}{lmm}{lemUnravelling}
\label{lem:unravelling}
If $(\widetilde \intp, a, \widetilde \lambda)$ is the quasi\=/unravelling of $(\intp, a, \lambda)$ where $\intp$ is an interpretation over $t$ then there is a~$t$\=/strong homomorphism $h\from (\widetilde \intp, a, \widetilde \lambda) \to (\intp^*, a, \lambda)$. Moreover, if each element in $\intp$ is reachable from $a$, we can additionally require $h$ to be surjective.
\end{restatable}

From Lemmas~\ref{lem:strong} and~\ref{lem:unravelling} we get the following. 

\begin{restatable}{lemma}{CorUnravelling}
\label{cor:unravelling}    
Consider an interpretation $\intp$ over $t$ and an element $a$ of $\intp$ from which every other element of $\intp$ is reachable. Let
$(\widetilde \intp, a)$ be the quasi\=/unravelling of 
$(\intp, a)$.  Then 
\begin{itemize}
\item for every TBox $\TBox$ over $t$, $\widetilde \intp  \models \TBox$ iff   
$\intp^* \models \TBox$;
\item  for every UCQ $Q$ over $t$, $\widetilde \intp \models Q$ implies $\intp^* \models Q$.
\end{itemize}
\end{restatable}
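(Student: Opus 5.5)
The plan is to derive the statement directly from Lemma~\ref{lem:unravelling} and Lemma~\ref{lem:strong}, forgetting labels (or taking a trivial one-element label set $X$). Since every element of $\intp$ is reachable from $a$, Lemma~\ref{lem:unravelling} yields a surjective $t$\=/strong homomorphism $h\from(\widetilde\intp,a)\to(\intp^*,a)$. In particular $h\from\widetilde\intp\to\intp^*$ is a $t$\=/strong homomorphism of plain interpretations.

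To apply Lemma~\ref{lem:strong} we need both interpretations to be transitive. The interpretation $\intp^*$ is transitive by definition of the transitive closure. The interpretation $\widetilde\intp$ is transitive because $t^{\widetilde\intp}$ is defined as a transitive closure, and all other role names are interpreted as empty sets since $\widetilde\intp$ is over $t$. Reachability in $\intp$ and in $\intp^*$ coincide, so the surjectivity clause of Lemma~\ref{lem:unravelling} applies as stated.

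For the first item, the ``if'' direction follows from the first bullet of Lemma~\ref{lem:strong} applied with $Q$ chosen as any non-entailed query; more cleanly, its proof only needs the TBox part, since the two conclusions there are independent. I would note that the proof of Lemma~\ref{lem:strong} establishes $\intp'\models\TBox \Rightarrow \intp\models\TBox$ separately. If one insists on using the lemma as stated, take $Q$ to be a CQ using a fresh concept name absent from $\intp^*$, so that $\intp^*\not\models Q$ trivially. This gives $\intp^*\models\TBox\Rightarrow\widetilde\intp\models\TBox$. The ``only if'' direction is exactly the second bullet of Lemma~\ref{lem:strong}, using surjectivity of $h$.

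For the second item, a match of $Q$ in $\widetilde\intp$ composed with the homomorphism $h$ gives a match in $\intp^*$; this is the standard preservation of UCQs under homomorphisms. The only mild obstacle is the bookkeeping just described: the transitivity hypotheses, and separating the TBox claim from the query claim in Lemma~\ref{lem:strong}. No genuinely hard step is expected.
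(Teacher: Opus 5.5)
Your proposal is correct and follows the paper's proof: Lemma~\ref{lem:unravelling} gives a surjective $t$-strong homomorphism from $\widetilde\intp$ to $\intp^*$, and the two bullets of Lemma~\ref{lem:strong} then give both directions for $\TBox$ as well as the preservation of $Q$. The extra bookkeeping you add only spells out what the paper's one-line instantiation leaves implicit, namely checking that both interpretations are transitive and choosing a query with a fresh concept name so that the first bullet of Lemma~\ref{lem:strong} yields the TBox direction by itself.
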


Lemma~\ref{cor:unravelling} shows that $(\intp, a)$ is a~good representation of $(\widetilde\intp, a)$, but the second property is not sufficient to allow deciding if $\widetilde \intp \models Q$ by just examining $\intp$. Later we will show how to pick $(\intp, a)$  so that $\widetilde \intp \models Q$ iff $\intp \models Q$.
This will involve interpretations of a special shape,  described next.

\subsection{Elementary interpretations}

\begin{figure*}
    \centering
    \colorlet{singleton}{myorange}
\colorlet{cycle}{ForestGreen}
\colorlet{cluster}{RoyalBlue}

\begin{subfigure}{0.3\textwidth}
    \centering
    \begin{tikzpicture}[scale=0.6, sm/.style={scale=0.7, circle, fill=black, inner sep=0pt,minimum size=5pt, outer sep=2pt}]
        \node[label=above:$A$, sm] (singl) at (2.5,2) {};
        \fill [nearly transparent,singleton] (2.5,2) circle [x radius=0.3, y radius=0.3];

        \node[label=left:$A$, sm] (cyc-A) at (4,1) {};
        \node[label=right:$B$, sm] (cyc-B) at ($(cyc-A) + (1,0)$) {};
        \node[label=right:$C$, sm] (cyc-C) at ($(cyc-A) + (1,-1)$) {};
        \node[label=left:$D$, sm] (cyc-D) at ($(cyc-A) + (0,-1)$) {};
        \draw[->] (cyc-A) -- (cyc-B);
        \draw[->] (cyc-B) -- (cyc-C);
        \draw[->] (cyc-C) -- (cyc-D);
        \draw[->] (cyc-D) -- (cyc-A);
        \fill [nearly transparent,cycle] (4.5,0.5) circle [x radius=1, y radius=1];

        \node[label=above:$F$, sm] (cl-E) at (1,1) {};
        \node[label=below:$A$, sm] (cl-A) at ($(cl-E) + (-1,-1)$) {};
        \node[label=below:$B$, sm] (cl-B) at ($(cl-E) + (0, -1)$) {};
        \node[label=below:$C$, sm] (cl-C) at ($(cl-E) + (1,-1)$) {};
        \draw[->] (cl-E) -- (cl-A);
        \draw[->] (cl-E) -- (cl-C);
        \draw[->] (cl-A) -- (cl-B);
        \draw[->] (cl-C) -- (cl-B);
        \draw[->] (cl-B) -- (cl-E);
        \fill [nearly transparent,cluster] (1,0.4) circle [x radius=1.2, y radius=0.8];
        \end{tikzpicture}
    \caption{A singleton (orange), cycle (green) and cluster (blue).}
    \label{fig:elementary:first}
\end{subfigure}
\begin{subfigure}{0.3\textwidth}
    \centering
    \begin{tikzpicture}[scale=0.6, sm/.style={scale=0.7, circle, fill=black, inner sep=0pt,minimum size=5pt, outer sep=2pt}]
        \node[label=left:$A$, sm] (cyc-A) at (1,3.5) {};
        \node[label=left:$B$, sm] (cyc-B) at ($(cyc-A) + (0, -1)$) {};
        \node[label=left:$C$, sm] (cyc-C) at ($(cyc-B) + (0, -1)$) {};
        \node[label=left:$D$, sm] (cyc-D) at ($(cyc-C) + (0, -1)$) {};

        \draw[->] (cyc-A) -- (cyc-B);
        \draw[->] (cyc-B) -- (cyc-C);
        \draw[->] (cyc-C) -- (cyc-D);
        \draw[->] (cyc-D) to[bend left=20] (cyc-A);

        \fill [nearly transparent,cycle] (1,2) circle [x radius=0.5, y radius=1.7];

        \node[label=above:$A$, sm] (out-1) at ($(cyc-A) + (1.5, 0)$) {};
        \draw[->] (cyc-A) -- (out-1);
        \fill [nearly transparent,singleton] (2.5,3.5) circle [x radius=0.3, y radius=0.3];

        \node[label=above:$C$, sm] (out-2-a) at ($(cyc-B) + (1,0)$) {};
        \node[label=above:$D$, sm] (out-2-b) at ($(cyc-B) + (2,0)$) {};
        \node[label=right:$\cdots$] (out-2-contd) at ($(cyc-B) + (2.1,0)$) {};
        \draw[->] (cyc-A) -- (out-2-a);
        \draw[->] (out-2-a) to[bend left] (out-2-b);
        \draw[->] (out-2-b) to[bend left] (out-2-a);
=        \fill [nearly transparent,cycle] (2.5,2.5) circle [x radius=0.7, y radius=0.3];

        \node[label=above:$F$, sm] (out-3) at ($(cyc-C) + (1.5,0)$) {};
        \draw[->] (cyc-A) -- (out-3);
        \fill [nearly transparent,cluster] (2.5,1.5) circle [x radius=0.3, y radius=0.3];

        \node[label=above:$C$, sm] (out-4-a) at ($(cyc-D) + (1,0)$) {};
        \node[label=above:$D$, sm] (out-4-b) at ($(cyc-D) + (2,0)$) {};
        \node[label=right:$\cdots$] (out-4-contd) at ($(cyc-D) + (2.1,0)$) {};
        \draw[->] (cyc-A) -- (out-4-a);
        \draw[->] (out-4-a) to[bend left] (out-4-b);
        \draw[->] (out-4-b) to[bend left] (out-4-a);
        \fill [nearly transparent,cycle] (2.5,0.5) circle [x radius=0.7, y radius=0.3];
    \end{tikzpicture}
    \caption{A loop tree.}
    \label{fig:elementary:second}
\end{subfigure}
\begin{subfigure}{0.3\textwidth}
    \centering
    \begin{tikzpicture}[scale=0.6, sm/.style={scale=0.7, circle, fill=black, inner sep=0pt,minimum size=5pt, outer sep=2pt}]
        \node[label=above:$F$, sm] (cl-E) at (0,6) {};
        \node[label=below:$A$, sm] (cl-A) at ($(cl-E) + (-1,-1)$) {};
        \node[label=below:$B$, sm] (cl-B) at ($(cl-E) + (0, -1)$) {};
        \node[label=below:$C$, sm] (cl-C) at ($(cl-E) + (1,-1)$) {};
        \draw[->] (cl-E) -- (cl-A);
        \draw[->] (cl-E) -- (cl-C);
        \draw[->] (cl-A) -- (cl-B);
        \draw[->] (cl-C) -- (cl-B);
        \draw[->] (cl-B) -- (cl-E);
        \fill [nearly transparent,cluster] (0,5.4) circle [x radius=1.2, y radius=0.8];
    
        \node[label=right:$A$, sm] (cyc-A-left) at (-1,3.5) {};
        \node[label=right:$B$, sm] (cyc-B-left) at ($(cyc-A-left) + (0, -1)$) {};
        \node[label=right:$C$, sm] (cyc-C-left) at ($(cyc-B-left) + (0, -1)$) {};
        \node[label=right:$D$, sm] (cyc-D-left) at ($(cyc-C-left) + (0, -1)$) {};

        \draw[->] (cl-B) -- (cyc-A-left);
        \draw[->] (cyc-A-left) -- (cyc-B-left);
        \draw[->] (cyc-B-left) -- (cyc-C-left);
        \draw[->] (cyc-C-left) -- (cyc-D-left);
        \draw[->] (cyc-D-left) to[bend right=20] (cyc-A-left);

        \fill [nearly transparent,cycle] (-1,2) circle [x radius=0.5, y radius=1.7];

        \node[label=above:$A$, sm] (out-1-left) at ($(cyc-A-left) + (-1.5, 0)$) {};
        \draw[->] (cyc-A-left) -- (out-1-left);
        \fill [nearly transparent,singleton] (-2.5,3.5) circle [x radius=0.3, y radius=0.3];

        \node[label=above:$C$, sm] (out-2-a-left) at ($(cyc-B-left) + (-1,0)$) {};
        \node[label=above:$D$, sm] (out-2-b-left) at ($(cyc-B-left) + (-2,0)$) {};
        \node[label=left:$\cdots$] (out-2-left-contd) at ($(cyc-B-left) + (-2.1,0)$) {};
        \draw[->] (cyc-A-left) -- (out-2-a-left);
        \draw[->] (out-2-a-left) to[bend left] (out-2-b-left);
        \draw[->] (out-2-b-left) to[bend left] (out-2-a-left);
        \fill [nearly transparent,cycle] (-2.5,2.5) circle [x radius=0.7, y radius=0.3];

        \node[label=above:$F$, sm] (out-3-left) at ($(cyc-C-left) + (-1.5,0)$) {};
        \draw[->] (cyc-A-left) -- (out-3-left);
        \fill [nearly transparent,cluster] (-2.5,1.5) circle [x radius=0.3, y radius=0.3];

        \node[label=above:$C$, sm] (out-4-a-left) at ($(cyc-D-left) + (-1,0)$) {};
        \node[label=above:$D$, sm] (out-4-b-left) at ($(cyc-D-left) + (-2,0)$) {};
        \node[label=left:$\cdots$] (out-4-left-contd) at ($(cyc-D-left) + (-2.1,0)$) {};
        \draw[->] (cyc-A-left) -- (out-4-a-left);
        \draw[->] (out-4-a-left) to[bend left] (out-4-b-left);
        \draw[->] (out-4-b-left) to[bend left] (out-4-a-left);
        \fill [nearly transparent,cycle] (-2.5,0.5) circle [x radius=0.7, y radius=0.3];

        \node[label=left:$A$, sm] (cyc-A-right) at (1,3.5) {};
        \node[label=left:$B$, sm] (cyc-B-right) at ($(cyc-A-right) + (0, -1)$) {};
        \node[label=left:$C$, sm] (cyc-C-right) at ($(cyc-B-right) + (0, -1)$) {};
        \node[label=left:$D$, sm] (cyc-D-right) at ($(cyc-C-right) + (0, -1)$) {};

        \draw[->] (cl-B) -- (cyc-A-right);
        \draw[->] (cyc-A-right) -- (cyc-B-right);
        \draw[->] (cyc-B-right) -- (cyc-C-right);
        \draw[->] (cyc-C-right) -- (cyc-D-right);
        \draw[->] (cyc-D-right) to[bend left=20] (cyc-A-right);

        \fill [nearly transparent,cycle] (1,2) circle [x radius=0.5, y radius=1.7];

        \node[label=above:$A$, sm] (out-1-right) at ($(cyc-A-right) + (1.5, 0)$) {};
        \draw[->] (cyc-A-right) -- (out-1-right);
        \fill [nearly transparent,singleton] (2.5,3.5) circle [x radius=0.3, y radius=0.3];

        \node[label=above:$C$, sm] (out-2-a-right) at ($(cyc-B-right) + (1,0)$) {};
        \node[label=above:$D$, sm] (out-2-b-right) at ($(cyc-B-right) + (2,0)$) {};
        \node[label=right:$\cdots$] (out-2-right-contd) at ($(cyc-B-right) + (2.1,0)$) {};
        \draw[->] (cyc-A-right) -- (out-2-a-right);
        \draw[->] (out-2-a-right) to[bend left] (out-2-b-right);
        \draw[->] (out-2-b-right) to[bend left] (out-2-a-right);
        \fill [nearly transparent,cycle] (2.5,2.5) circle [x radius=0.7, y radius=0.3];

        \node[label=above:$F$, sm] (out-3-right) at ($(cyc-C-right) + (1.5,0)$) {};
        \draw[->] (cyc-A-right) -- (out-3-right);
        \fill [nearly transparent,cluster] (2.5,1.5) circle [x radius=0.3, y radius=0.3];

        \node[label=above:$C$, sm] (out-4-a-right) at ($(cyc-D-right) + (1,0)$) {};
        \node[label=above:$D$, sm] (out-4-b-right) at ($(cyc-D-right) + (2,0)$) {};
        \node[label=right:$\cdots$] (out-4-right-contd) at ($(cyc-D-right) + (2.1,0)$) {};
        \draw[->] (cyc-A-right) -- (out-4-a-right);
        \draw[->] (out-4-a-right) to[bend left] (out-4-b-right);
        \draw[->] (out-4-b-right) to[bend left] (out-4-a-right);
        \fill [nearly transparent, cycle] (2.5,0.5) circle [x radius=0.7, y radius=0.3];
    \end{tikzpicture}
    \caption{An elementary interpretation.}
    \label{fig:elementary:third}
\end{subfigure}
    \vspace{-1ex}
    \caption{The three levels of elementary interpretations. Critical elements indicated with concept name $F$.}
    \label{ex:interpretation-def}
\end{figure*}
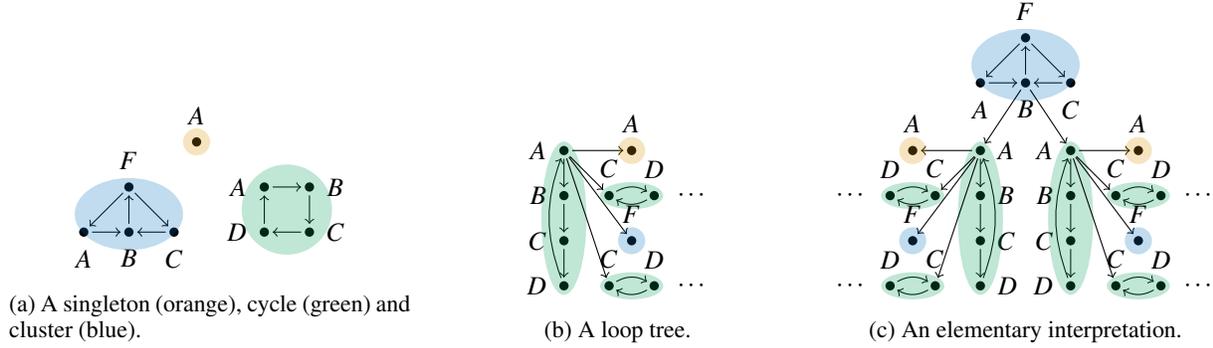


Elementary interpretations are defined hierarchically, with three levels of increasing complexity, illustrated in Figure~\ref{ex:interpretation-def}.

Three basic building blocks form the first level (Figure~\ref{fig:elementary:first}).  
We call an interpretation $\intp$ over $t$ a~\emph{singleton} if the associated graph $G_\intp$ is a~single isolated node; a~\emph{cycle} if $G_\intp$ is a~simple cycle; and a~\emph{critical cluster} if  $G_\intp$ consists of a~single SCC and $\intp$ contains a~critical element. 
Note that a~singleton interpretation is a~critical cluster as long as its unique element is critical.
We lift this terminology to rooted  interpretations naturally.

Loop trees, defined below, constitute the second level  (Figure~\ref{fig:elementary:second}). They are similar to \emph{cactuses} defined  e.g. in~\cite{danecki1984propositional}. The operation of \emph{attaching} an interpretation $(\JJ, v)$ to an element $u'$ of an interpretation $(\II, u)$ results in interpretation $(\II',u)$ where $\II'$ is $\II \cup \JJ$ with an additional edge from $u'$~to~$v$; note that  $\Delta^\II$ and $\Delta^\JJ$ need not be disjoint.

\begin{definition}
We define \emph{loop trees}  inductively as follows.
\begin{itemize}
\item Every rooted singleton interpretation is a~loop tree.
\item If $(\intp,u)$ is a~singleton or a~cycle and has no critical elements, and $(\intp_1, u_1), \dots, (\intp_n, u_n)$ are loop trees that share no non-critical elements with each other and with $(\intp,u)$, then the interpretation obtained from $(\intp,u)$ by attaching $(\intp_1, u_1), \dots, (\intp_n, u_n)$ to $u$ is a~loop tree. 
\end{itemize}
\end{definition}

Finally, we define elementary interpretations as trees built from  critical clusters and loop trees (Figure~\ref{fig:elementary:third}). 

\begin{definition}
We define  \emph{elementary interpretations} inductively as follows. 
\begin{itemize}
\item Every rooted critical cluster and every loop tree  is an elementary interpretation. 
\item If $(\intp,u)$ is a~loop tree and $(\intp_1, u_1), \dots, (\intp_n, u_n)$ are pairwise disjoint elementary interpretations such that $u_i$ is the only element shared by $(\intp,u)$ and $(\intp_i,u_i)$, and it is critical in both interpretations, then $(\intp\cup\intp_1\cup \dots\cup\intp_n, u)$ is elementary. 
\item If $(\intp,u)$ is a~critical cluster, $(\JJ, v)$ is an elementary interpretation without cycles visiting $v$, and the only element shared by $(\intp,u)$ and $(\JJ,v)$ is $v$, then $(\II \cup \JJ, u)$ is an elementary interpretation.
\end{itemize}
\end{definition}

Elementary interpretations will be used as finite representations of potential countermodels, via quasi-unravelling.  Not every countermodel has such a representation, but, as we show next,  each can be turned into one that does.

\subsection{From arbitrary to elementary interpretations}

We are now ready to realize the goal of this section: given an interpretation $\mathcal{I}$ with finitely many critical elements, we find a small elementary interpretation $\mathcal{E}$ with  a $t$-strong homomorphism from the quasi-unravelling of $\mathcal{E}$ to $\mathcal{I}$. 


\begin{theorem}
\label{thm:onerole}
For every transitive rooted $X$\=/labelled interpretation $(\intp,a,\iota)$ over a~ transitive role $t$ with finitely many critical elements there is an elementary rooted $X$\=/labelled interpretation $(\EE, a, \epsilon)$ over $t$ of size at most $(\ell+1)^{(\ell+1)^2}$  for $\ell=|\conceptsIn{\intp}|$ 
 whose quasi\=/unravelling $(\widetilde\EE, a, \widetilde\epsilon)$ maps to $(\intp,a,\iota)$ via a~$t$\=/strong homomorphism. 
\end{theorem}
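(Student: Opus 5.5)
We construct $\EE$ bottom-up along the SCC structure of $\intp$ and prove everything by induction on a measure that strictly decreases when we pass from an element to a strict descendant. Since $t^\intp$ is transitive, every SCC of $\intp$ is either a single element or a clique. The labels $\iota$ are carried along unchanged: every element of $\EE$ will be a copy of some element $u$ of $\intp$, with $\epsilon=\iota(u)$. The homomorphism from $\widetilde\EE$ maps each copy to its original, so the label condition holds automatically. All the work is in making this map $t$\=/strong, i.e.\ preserving $\types{}{\cdot}$ and $\rch{t}{}(\cdot)$.

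The measure is the pair $(\rch{t}{\intp}(u), \types{\intp}{u})$. Along $t$-edges, $\rch{t}{\intp}$ can only shrink, and it stays the same exactly when we stay inside a ``stable'' region. Within such a region every type reachable from $u$ is reachable again from every later element. This gives at most $\ell+1$ strict decreases of $\rch{}{}$ along any path, and the bound $(\ell+1)^{(\ell+1)^2}$ should come from branching of at most about $(\ell+1)^{\ell+1}$ at each of these levels.

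\textbf{Case 1: $u$ lies in a critical cluster $K$.} $K$ is finite because there are finitely many critical elements. We keep $K$ verbatim. For every concept name witnessed strictly below $K$, we pick one successor $v$ outside $K$ realising it, build the elementary interpretation for $(\intp,v)$ inductively, and attach it. This uses the third clause of the definition, or the second if the attached part starts with a loop tree ending at critical elements. Critical elements below are shared, never copied, so the result stays finite in its critical part.

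\textbf{Case 2: $u$ is non-critical.} We build a loop tree. Let $R=\rch{t}{\intp}(u)$.
\begin{itemize}
\item If some successor $w$ of $u$ has $\rch{t}{\intp}(w)=R$ and $\types{\intp}{w}=\types{\intp}{u}$, then $u$ lies in an infinite stable region, typically an infinite chain. We replace this region by a cycle through at most $|R|+1$ copies: one element per type needed to realise every member of $R$, closed back to $u$. In the quasi-unravelling this cycle unfolds into an infinite path, and every element on it sees exactly the types in $R$. On this cycle we hang, inductively, the witnesses of strictly smaller measure.
\item Otherwise $u$ becomes a singleton, and we attach inductively built loop trees or elementary interpretations for one chosen successor per needed concept name.
\end{itemize}

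\textbf{Verification.} The map $\widetilde\EE\to\intp$ sends each copy to its original. It is a homomorphism because every edge we add corresponds to a $t$-path in $\intp$, and $\intp$ is transitive. Types are preserved by construction.

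The main obstacle is showing equality of $\rch{t}{}$. The inclusion $\subseteq$ follows from the homomorphism. For $\supseteq$, each concept name in $\rch{t}{\intp}(u)$ must be witnessed in $\widetilde\EE$ by an element other than the copy of $u$ itself. This is the delicate point:
\begin{itemize}
\item \emph{Loops.} When $\types{}{u}\in R$ only because of a loop or a cycle back to $u$, the quasi-unravelling must still provide a distinct element. This is why cycles rather than self-loops are used for non-critical elements, and why critical clusters are kept whole.
\item \emph{Shared critical elements.} Critical elements are shared rather than copied, so critical elements attached under different branches must not create new $t$-reachabilities. Such reachabilities would enlarge $\rch{}{}$. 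Since all attached critical elements are actual elements of $\intp$ reachable from the corresponding originals, no new concept names become reachable.
\end{itemize}
Finally, we check that the inductive steps match the three clauses of the definition of elementary interpretations. In particular, the clause requiring ``no cycles visiting $v$'' is obtained by choosing $v$ to be a singleton exit from the critical cluster.
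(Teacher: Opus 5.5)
Your proposal has a genuine gap in the stable (cycle) case, which you treat as routine, and it also does not deliver the stated size bound.

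\textbf{The cycle case.} Take $\intp$ to be the transitive closure of an infinite non-critical chain $u_0\to u_1\to\cdots$ in which every element has type $\{A\}$. Your cycle through copies of $u_0$ and some $u_j$ needs the closing edge $(u_j,u_0)$, which is not in $\intp$. So ``each copy goes to its original'' is not a homomorphism. Worse, since $\intp$ has no $t$-cycles, the unravelled cycle must be sent to infinitely many \emph{distinct} elements of the chain, so no copy-to-original map can work at all.

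This is the content of Case~2 in the paper, and it needs two ingredients your proposal lacks.
\begin{itemize}
\item \emph{Density.} Only concept names that recur infinitely often along one branch may be placed on the cycle. The paper works in the quasi-unravelling, where the non-critical part below the root is a tree, and shrinks the constant-rank region to a subtree $R'_v$ in which every occurring concept name is dense. Your trigger (a successor with the same type and the same $\rch{t}{\intp}$) does not ensure this. For $u\to w\to x$, all of type $\{A\}$, it fires although there is no infinite $A$-path. For an $A$-chain with a single $C$-element at position $5$, it fires at $u_0$ and puts $C$ on the cycle; the unravelled path of alternating $A$- and $C$-elements then has no type-preserving image.
\item \emph{Pigeonhole selection.} The $i$-th turn of the unravelled cycle, together with the witnesses hanging from it, must be sent onto a block $v^{\ell_i}_1,\dots,v^{\ell_i}_k$ further down the branch. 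To make this possible, the paper picks positions $\ell_1<\ell_2<\cdots$ at which the type tuples, the label tuples, and the sets of witness loop trees (up to isomorphism fixing critical elements) coincide.
\end{itemize}
Consequently, the loop tree built for $v$ is rooted at a copy of some $v'$ reachable from $v$ with $\types{\widetilde\intp}{v'}\cup\rch{t}{\widetilde\intp}(v')=\types{\widetilde\intp}{v}\cup\rch{t}{\widetilde\intp}(v)$, not at $v$ itself.

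\textbf{The size bound.} Keeping a critical cluster verbatim can already exceed $(\ell+1)^{(\ell+1)^2}$: a critical clique of $17$ elements of type $\{F\}$ has $\ell=1$ and bound $16$. Moreover, your measure does not decrease along a chain of critical singleton clusters of equal type, so nothing controls how many critical levels you stack. The count of ``$\ell+1$ strict decreases'' does not cover such critical levels, and the final bound is left as a heuristic.

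The paper handles this in three steps. It first applies Lemma~\ref{lem:shortPaths} to delete dispensable critical elements, which preserves $\rch{t}{}$ and bounds the critical depth by $\ell$. It then inducts on critical depth, with an inner induction on rank. Finally, it keeps only a subset $V$ of the root cluster covering its types. This is what yields size $(\ell+1)^{(\ell+1)(d+1)}$ at critical depth $d$.
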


To prove this theorem, we first observe that the length of simple paths that visit critical elements can be bounded.
%
%
%
%
%
%
We call a~path \emph{simple} if it never visits the same element twice, and  \emph{critical} if it visits only critical elements.  

\begin{restatable}{lmm}{lemShortPaths}
\label{lem:shortPaths}
Consider a~transitive interpretation  $\intp$ with finitely many critical elements and let $b\in\Delta^\intp$. Then there is an induced subinterpretation $\JJ$ of $\intp$ such that  $b\in\Delta^{\JJ}$, 
$\rch{t}{\JJ}(u) = \rch{t}{\intp}(u)$ for all $u\in\Delta^{\JJ}$, 
and every simple critical path in $\JJ$ that does not visit $b$ has length at most $|\conceptsIn{\intp}|$.
\end{restatable}

\begin{proof}[Proof of Theorem~\ref{thm:onerole}]

The \emph{critical depth} of a~rooted $X$\=/labelled interpretation $(\intp,a, \iota)$ is the maximal length of a~simple critical path in $\intp$ that does not visit the root cluster $K^\intp_a$. If $\II$ contains finitely many critical elements, then the critical depth of $(\intp,a, \iota)$ is finite. 
By Lemma~\ref{lem:shortPaths}, we can assume that the critical depth of $(\intp, a,\iota)$ is at most $\ell$. 
We can also assume that all elements of $\intp$ are reachable from $a$. 
By induction on $d$, we will construct for  $(\intp, a,\iota)$ of critical depth $d$ a~suitable  elementary $(\EE, a, \epsilon)$ of size $(\ell+1)^{ (\ell+1)\cdot(d+1)}$.

Consider $(\intp,a,\iota)$ of critical depth $d$. Let $(\widetilde \intp, a, \widetilde \iota)$ be the quasi\=/unravelling of $(\intp, a,  \iota)$. Note that the critical depth of $(\widetilde\intp,a)$ is $d$ as well. 

We call a~cluster $K$ in $\widetilde \intp$ \emph{non-root} if $K \neq K^{\widetilde\intp}_a$. Because all elements in $\intp$ are reachable from $a$, all critical clusters in $\widetilde\intp$ are reachable from $K^{\widetilde\intp}_a$. 
We call a~non\=/root critical cluster \emph{minimal} if it is not reachable from any other non-root critical cluster. 
Let $K_1, K_2, \dots, K_m$ be the minimal non-root critical clusters in $\widetilde\intp$ (if $d=0$, there are none). For each $i\in \{1, 2, \dots, m\}$ pick a~critical element  $c_i \in K_i$ and let $C=\{c_1, \dots, c_m\}$. 

For $c\in C$, let $(\widetilde\intp_{c}, \widetilde \iota_{c})$ be the $X$\=/labelled interpretation obtained by restricting $(\widetilde \intp, \widetilde \iota)$ to the domain consisting of $c$ and all elements reachable from $c$.
Clearly, the critical depth of $(\widetilde\intp_{c}, c, \widetilde\iota_c)$ is at most $d-1$. Let $(\EE_c, c, \epsilon_c)$ be the elementary interpretation of size at most $(\ell+1)^{(\ell+1)\cdot d}$  obtained from  $(\widetilde\intp_{c}, c,\widetilde\iota_c)$ by the induction hypothesis.
Without loss of generality we may assume that all elements of $\EE_c$ except $c$ are fresh; that is, they do not occur in the domain of any other interpretation. 

It remains to deal with the subinterpretation of $(\widetilde \intp, \widetilde\iota)$ induced by the set $\Delta^{\widetilde \intp} \setminus \bigcup_{c\in C} \Delta^{\widetilde \intp_c}$ of elements not reachable from $C$. By the definition of quasi\=/unravelling, this interpretation is the transitive closure of the union of the restriction $(\intp_a, \iota_a)$ of $(\intp, \iota)$ to $K^{\intp}_a$ and the restriction $(\widetilde\intp_S, \widetilde \iota_S)$ of $(\widetilde \intp, \widetilde \iota)$  to 
the set $S$ consisting of $a$ and all elements of the form $(K^{\intp}_a,p)$ from $\Delta^{\widetilde \intp}$. Note that all elements of the latter form are non-critical. Hence, $S \setminus \{a\}$ contains no critical elements.
Moreover, $\widetilde\intp_S$ is the transitive closure of a~tree with root $a$. We think of $S$ as if it were this tree and speak of nodes, leaves, subtrees, and children in $S$.  

We define the \emph{rank} of a~node $v$ in $S$ as the size of the set $\types{\widetilde\intp}{v} \cup \rch{t}{\widetilde\intp}(v)$ of concept names that occur in $v$ or elements of $\widetilde\intp$ (not necessarily in $S$) reachable from $v$. By the rank of a~subtree of $S$ rooted at $v$ we mean the rank of $v$.  Note that $\types{\widetilde\intp}{v} \cup \rch{t}{\widetilde\intp}(v) \supseteq \types{\widetilde\intp}{v'} \cup \rch{t}{\widetilde\intp}(v')$ for each descendant $v'$ of $v$, so ranks never increase as we go down $S$.

Below, we write  $(\widetilde\intp_{S\cup C}, \widetilde \iota_{S\cup C})$ for the restriction of $(\widetilde \intp, \widetilde \iota)$ to  $S\cup C$ and we say that a~homomorphism $h$ \emph{fixes} $C$ if $h(c)=c$ for each $c\in \dom(h)\cap C$.

We prove the following claim by induction on $q$.

\begin{claim}
\label{mainclaim}
For each  $v \in S\setminus\{a\}$ of rank $q$ there is a~loop tree $(\LL_v, u_v, \lambda_v)$ of size at most  $(\ell+1)^{q}$ 
%
and a~homomorphism \[h_v : \big(\widetilde\LL_v, u_v, \widetilde\lambda_v\big) \to \big(\widetilde \intp_{S\,\cup\,C}, v', \widetilde \iota_{S\,\cup\, C}\big)\] 
for some $v'$ such that 
\begin{itemize}
\item 
$v'$ is an element of $S$ reachable from $v$ and 
\begin{equation}
\label{eq:sameconcepts}
\types{\widetilde\intp}{v} \cup \rch{t}{\widetilde\intp}(v) = \types{\widetilde\intp}{v'} \cup \rch{t}{\widetilde\intp}(v') \,;
\end{equation}
\item
$h_v$ extends to a~$t$\=/strong homomorphism fixing $C$
\[\widehat h_v : \big(\widetilde \EE_v, u_v, \widetilde \epsilon_v\big) \to \big(\widetilde\intp, v', \widetilde \iota\big)\] 
where 
$\big(\EE_{v}, \epsilon_{v} \big)$
is the union of 
$\big(\LL_v, \lambda_v\big)$ and $\big(\EE_c, \epsilon_c)$ for all $c\in C$ such that $c \in \Delta^{\LL_v}$.
\end{itemize}
\end{claim}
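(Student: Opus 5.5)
We argue by induction on the rank $q$, and within a fixed rank we exploit the tree structure of $S$ below $v$. Write $R(w)=\types{\widetilde\intp}{w}\cup\rch{t}{\widetilde\intp}(w)$. The key dichotomy is whether $v$ has a descendant $w$ in $S$ with $R(w)=R(v)$ from which the rank strictly drops everywhere, or whether the rank stays equal along an infinite branch.

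First, consider the children of $v$ in $S$, together with the elements of $C$ and of critical clusters that $v$ reaches directly. For each concept name $A\in R(v)$ we need a witness. Either $A$ is realised at a node $v''\in S$ reachable from $v$, or at an element of $\Delta^{\widetilde\intp}\setminus S$, which by the structure of quasi-unravelling lies in the part reachable from some $c\in C$. In the latter case we attach $\EE_c$ by an edge from the root $u_v$ to $c$. Since $c$ is critical, this is allowed in a loop tree: loop trees may share critical elements, and the $\EE_c$ are glued only at $c$.

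\textbf{Case 1: some node of equal rank repeats.} Suppose there is a node $v'$ reachable from $v$ with $R(v')=R(v)$ such that every concept of $R(v)$ is witnessed below $v'$ either by $\types{}{v'}$, by a node of strictly smaller rank, or by some $c\in C$. Then we build $\LL_v$ as a singleton $u_v$ (copy of $v'$). To it we attach, for each concept of $R(v)\setminus\types{}{v'}$ not realised via $C$, one loop tree $\LL_w$ from the induction hypothesis for a node $w$ of rank $<q$ realising it. We also attach the needed $c\in C$. This gives at most $\ell$ attachments, each of size $\le(\ell+1)^{q-1}$, so the size is at most $1+\ell(\ell+1)^{q-1}\le(\ell+1)^q$.

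\textbf{Case 2: no such node.} Then some concept $A\in R(v)$ is realised only by nodes of rank $q$, going down an infinite path of equal rank. The equal-rank nodes reachable from $v$ all reach the same concept set. We pick a finite sequence $v_0=v', v_1,\dots,v_k$ along such a path, realising each of the at most $\ell$ concepts of $R(v)$ that are ``only of rank $q$''. We close it into a cycle $u_0\to u_1\to\cdots\to u_k\to u_0$ with $k\le\ell$. This is sound for $t$-strongness: in the transitive closure, every $u_i$ reaches all the $u_j$'s, and these realise exactly concepts in $R(v)=R(v_i)$. The homomorphism sends the quasi-unravelling of the cycle (an infinite path of copies) to successive equal-rank nodes further down the infinite branch. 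Each time we return to $u_0$ we must map to a fresh node below, which exists because the branch is infinite and all its nodes have rank $q$. Finally we attach lower-rank loop trees $\LL_w$ and elements $c\in C$ to the cycle vertices as in Case 1. This is legal because the cycle contains no critical elements, since $S\setminus\{a\}$ is non-critical. The size is at most $(\ell+1)\bigl(1+\ell(\ell+1)^{q-1}\bigr)$. To fit within $(\ell+1)^q$, attach all witnesses to $u_0$ only, or account a little more loosely and adjust the constant.

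\textbf{Checking the homomorphisms.} Both $h_v$ and $\widehat h_v$ are defined by composing the inductive maps, and $\widehat h_v$ is the identity on each $\EE_c$ through the inductive $\widehat h_c$. For $t$-strongness, types are preserved by construction. The $\rch{}{}$ sets match because every image node reaches exactly $R(v)\setminus\types{}{}$ of itself, and the loop tree reaches, through its cycle and attachments, exactly the witnessed concepts.

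\textbf{Main obstacle.} The delicate point is Case 2. There we must ensure that the infinite quasi-unravelling of the cycle maps homomorphically, and $t$-strongly, into the tree-shaped $S$. This requires arbitrarily deep equal-rank descendants that realise the right types in the right cyclic order. I would obtain them by iterating the Case 2 hypothesis: below any equal-rank node, every concept of $R(v)$ again appears at some equal-rank node, because of rank equality.
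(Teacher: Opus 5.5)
Your overall architecture matches the paper's: induction on rank, a singleton with attached lower-rank loop trees when some rank-$q$ node has everything it reaches witnessed at lower rank or via $C$, and otherwise a non-critical cycle along an infinite rank-$q$ branch. Case~2, however, has a genuine gap.

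\textbf{Why your Case~2 map is not justified.} A $t$-strong homomorphism of labelled interpretations must send every copy of a cycle vertex $u_j$ in the quasi-unravelling to a node with exactly the type $\types{\widetilde\intp}{v_j}$ and the label $\widetilde\iota(v_j)$. Two further constraints concern the $i$-th copy of $u_0$:
\begin{itemize}
\item It has an edge to every attached $c\in C$, and $\widehat h_v$ fixes $c$, so its image must reach that same $c$.
\item It carries its own copies of the attached $\LL_w$, so below its image there must be a $t$-strong image of these copies.
\end{itemize}
Mapping the rounds to ``successive equal-rank nodes further down the branch'' guarantees none of this. The fix you propose is also false. Rank equality $R(x)=R(v)$ only says that each concept of $R(v)$ occurs at $x$ or at something reachable from $x$, possibly of lower rank or beyond some $c$. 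And even when concepts recur, exact types need not. For example, take a branch $x_0x_1x_2\cdots$ with $\types{}{x_0}=\{A,B\}$ and $x_1,x_2,\dots$ cycling through the types $\{A\},\{B\},\{E\}$. Every $x_i$ has rank $3$, and there are no lower-rank nodes. Your construction permits $v_0=x_0$, and then the second copy of $u_0$ has no possible image.

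\textbf{What the paper's Case~2 supplies.}
\begin{itemize}
\item It shrinks $R_v$ to a subtree $R'_v$ in which every occurring concept name is dense. This allows an infinite sequence of rounds $v^i_1,\dots,v^i_k$ along one branch with $v^i_j\in B_j^{\widetilde\intp}$.
\item For each round it picks a minimal witness set $W_i\subseteq (S\setminus R'_v)\cup C$, reachable from $v^i_1$, covering $\rch{t}{\widetilde\intp}(v^i_1)\setminus\{B_1,\dots,B_k\}$.
\item It applies pigeonhole over the finitely many possibilities to get $\ell_1<\ell_2<\cdots$ on which three things coincide: the type tuples, the label tuples, and the sets $\{\LL_w\mid w\in W_{\ell_i}\}$ up to isomorphism fixing $C$.
\end{itemize}
Only then can the $i$-th unrolling of the cycle built on round $\ell_1$ be sent to $v^{\ell_i}_1,\dots,v^{\ell_i}_k$. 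Its attachments go through $h_{w^{(i)}}$ for the corresponding $w^{(i)}\in W_{\ell_i}$, which also makes $v^{\ell_i}_1$ reach the same elements of $C$.

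\textbf{Smaller issues.}
\begin{itemize}
\item In Case~1 and in your verification paragraph you treat $\rch{t}{}$ as ``$R$ minus the node's own type''. But $t$-strongness needs $\rch{t}{\widetilde\EE_v}(u_v)=\rch{t}{\widetilde\intp}(v')$, which can contain concepts of $\types{\widetilde\intp}{v'}$. For instance, take $v'$ of type $\{A\}$ with leaf children of types $\{A\}$ and $\{B\}$: your construction attaches only a witness for $B$. The attached witnesses must cover all of $\rch{t}{\widetilde\intp}(v')$, and the Case~1 hypothesis should be stated for $\rch{t}{\widetilde\intp}(v')$, not for $R(v)$.
\item The bound $(\ell+1)^q$ is part of the claim, so ``adjusting the constant'' is not available. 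Attaching at most $q-k$ trees at $u_0$ only gives $k+(q-k)(\ell+1)^{q-1}\le(\ell+1)^q$, as in the paper.
\end{itemize}
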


\begin{proof}[Proof of Claim~\ref{mainclaim}]
To make the construction of $\big(\LL_v, u_v, \lambda_v\big)$ more uniform, for $c\in C$, we let $(\LL_{c}, u_c, \lambda_c)$ be the restriction of $(\widetilde\II, c, \widetilde\iota)$ to $\{c\}$ with all edges removed. Note that $(\LL_{c}, u_c, \lambda_c)$ satisfies the conditions in the statement of the claim with $c'=c$ and the identity function for $h_c$.

Consider first a~node $v \in S \setminus \{a\}$ of rank 0.
We define $(\LL_v, u_v, \lambda_v)$ as the restriction of $(\widetilde \intp, v, \widetilde \iota)$ to $\{v\}$. We let $v'=v$ and take the identity function for $h_v$. The condition in the first bullet holds trivially. For the second bullet, $\Delta^{\LL_v}=\{v\}$ and $v\notin C$, so $(\EE_v, u_v,\epsilon_v) = (\LL_v, u_v, \lambda_v)$ and $h_v$ itself is the required $t$-strong homomorphism from $(\widetilde \EE_v, u_v, \widetilde \epsilon_v)$ to $\big(\widetilde\intp, v', \widetilde \iota\big)$ 
because $\types{\widetilde\EE_v}{u_v} \cup \rch{t}{\widetilde\EE_v}(u_v)=\emptyset = \types{\widetilde\intp}{v} \cup \rch{t}{\widetilde\intp}(v)$.

Suppose now that we have shown the claim for nodes of rank at most $q-1$ and consider a~node $v \in S\setminus \{a\}$ of rank $q>0$. Let $S_v$ be the subtree of $S$ rooted at $v$ and let $R_v$ be the subset of $S_v$ consisting of all nodes of rank $q$. Note that $R_v$ forms a~\emph{prefix} of  $S_v$: if a~node from $S$ belongs to $R_v$, so do all its ancestors in $S$. We will think of $R_v$ as a~tree, too. Note also that every $v'\in R_v$ satisfies condition \eqref{eq:sameconcepts}.  We have two cases, depending on whether $R_v$ contains a leaf or not. 

\subsubsection{Case 1.} $R_v$ contains a~leaf $v'$; that is, all descendants of $v'$ in $S$  have lower rank. Pick a~minimal subset $W$ of elements of $S \cup C$ reachable from $v'$ in $\widetilde \intp$ 
such that 
\begin{equation}
\label{eq:case1}
\rch{t}{\widetilde\intp} (v') = \bigcup_{w \in W} 
\types{\widetilde\intp}{w} \cup \rch{t}{\widetilde\intp}(w)\,.  
\end{equation}
Each element $w \in W$ is an element of $S\setminus \{a\}$ of rank at most $q-1$ or an element of $C$. By the induction hypothesis and the initial step of the proof of the claim, $(\LL_{w}, u_{w}, \lambda_w)$ exists for all $w\in W$. 
We let $u_v = v'$ and define $(\LL_v, \lambda_v)$ as follows. 
We restrict $(\widetilde\intp,\widetilde\iota)$ to $\{v'\}$ 
and for each $w\in W$ we attach to $v'$ a~\emph{quasi\=/fresh} copy of the interpretation $(\LL_{w}, u_{w}, \lambda_w)$, where each non-critical element is replaced by a~fresh element.
Because $|W|\leq q \leq \ell$ and $|\Delta^{\LL_w}|\leq (\ell+1)^{q-1}$ for all $w \in W$, \[|\Delta^{\LL_v}|\leq 1+ q\cdot (\ell+1)^{q-1} \leq (\ell+1)^q\,.\]
A suitable  $h_v\colon \big(\widetilde\LL_v, u_v, \widetilde\lambda_v\big) \to \big(\widetilde \intp_{S\,\cup\,C}, v', \widetilde \iota_{S\,\cup\, C}\big)$ can be obtained by setting $h_v(u_v) = v'$ and combining homomorphisms $h_w$ for $w\in W$ as follows.  Consider a~path $p$ in the quasi-fresh copy of $\LL_w$ used in $\LL_v$, beginning in the copy of $u_w$. There is a~corresponding path $p'$ in $\LL_w$, beginning in $u_w$. We let $h_v(u_vp) = h_w(p')$. For every $c \in C \cap \Delta^{\widetilde\LL_v}$ we let $h_v(c)=c$.
It is routine to check that $h_v$ is a~homomorphism and that it extends to a~suitable $t$\=/strong homomorphism fixing $C$. Let us only verify that $\rch{t}{\widetilde\EE_v}(u_v) = \rch{t}{\widetilde\intp}(v')$. Recall that $u_v=v'$. By construction, \[\rch{t}{\widetilde\EE_v} (v') = \bigcup_{w \in W} 
\types{\widetilde\EE_w}{u_w} \cup \rch{t}{\widetilde\EE_w}(u_w)\,.\] 
By the induction hypothesis, for each $w \in W$, using the existence of a $t$-strong homomorphism from $\big(\widetilde \EE_w, u_w, \widetilde \epsilon_w\big)$ to $\big(\widetilde\intp, w', \widetilde \iota\big)$, followed by \eqref{eq:sameconcepts} for $w$, we get
\begin{align*}
\types{\widetilde\EE_w}{u_w} \cup \rch{t}{\widetilde\EE_w}(u_w) & = 
\types{\widetilde\intp}{w'} \cup \rch{t}{\widetilde\intp}(w') = \\
& =\types{\widetilde\intp}{w} \cup \rch{t}{\widetilde\intp}(w)\,.
\end{align*} 
We conclude by applying \eqref{eq:case1}.

\subsubsection{Case 2.} $R_v$ does not contain a~leaf; that is, each node in $R_v$ has a~child in $R_v$. We need a~similar, but stronger property for concept names. For a~subset $U$ of $S$, we call a~concept name $A$ \emph{dense in $U$} if each element in $U$ has a~proper descendant in $U$ that belongs to $A^{\widetilde \intp}$. We construct a~subtree $R'_v$ of $R_v$ such that each concept name occurring in $R'_v$ is dense in $R'_v$. 

Let $\types{\widetilde\intp}{v} \cup \rch{t}{\widetilde\intp}(v) = \{A_1, \dots, A_n\}$. We process $A_i$ one by one and eliminate those that are not dense by passing to a~smaller subtree of $R_v$, also without leaves. Let $R_v^i$ be the subtree of $R_v$ obtained after processing  $A_1, \dots, A_i$, with $R_v^0=R_v$. If $A_{i+1}$ is dense in $R_v^i$ or does not occur in $R_v^i$, we let $R_v^{i+1}=R_v^i$. If $A_{i+1}$ occurs  in $R_v^i$ but is not dense in $R_v^i$, there is an element $u$ of $R_v^i$ without proper descendants in $A_{i+1}^{\widetilde\intp} \cap R_v^i$. Since $R_v^i$ has no leaves, node $u$ has a~child $u'$ in $R_v^i$. We define $R_v^{i+1}$ as the  the subtree of $R_v^i$ rooted at $u'$. Note that $R_v^{i+1}$ is non-empty and has no leaves.
Moreover, $A_{i+1}$ does not occur in $R_v^{i+1}$, 
each concept name that does not occur in $R_v^{i}$ does not occur in $R_v^{i+1}$ either, and each concept name dense in $R_v^{i}$ is also dense in $R_v^{i+1}$. 
We let $R'_v = R_v^{n}$.  

Let $B_1,\ldots,B_k$ be the concept names that occur in $R'_v$. Because all these concept names are dense in $R'_v$, we can find an infinite sequence   
\[v^1_1, v^1_2, \dots, v^1_k, \; v^2_1, v^2_2, \dots, v^2_k, \; \dots,\; v^i_1, v^i_2, \dots, v^i_k, \; \dots\] of elements from $R'_v$ that form a~subsequence of a~branch in $R'_v$ ensuring that $v_{\! j}^{\,i} \in B_j^{\widetilde\intp}$ for all $i$ and $j$. 
Next, for each~$i$, we pick a~minimal subset $W_i$ of elements of $(S \setminus R'_v) \cup C$ reachable from $v^i_1$ in $\widetilde \intp$ such that \[\rch{t}{\widetilde\intp}(v^i_1) \setminus \{B_1, \dots,B_k\} =  \bigcup_{w\in W_i} 
\types{\widetilde\intp}{w} \cup \rch{t}{\widetilde\intp}(w)\,.\] 
Each element of $W_i$ is an element of $S \setminus \{a\}$ of rank at most $q-1$ or an element of $C$. By the induction hypothesis and the initial step of the proof of the claim,
for each $w\in \bigcup_{i=1}^\infty W_i$,  $(\LL_{w}, u_{w}, \lambda_w)$ is already defined and is an $X$\=/labelled loop tree of size at most $(\ell+1)^{q-1}$ using only concept names from $\conceptsIn{\widetilde\intp}$ and critical elements from $C$. Clearly, there are only finitely many such interpretations, up to an isomorphism fixing $C$. In consequence, 
there is an infinite sequence $\ell_1 < \ell_2 < \dots $ such that
\begin{itemize}
\item the sets $\big\{ (\LL_{w}, u_{w}, \lambda_w) \bigm| w \in W_{\ell_i}\big\}$ are equal for all $i$ (up to an isomorphism fixing $C$), 
\item the tuples $\big(\types{\widetilde\intp}{v^{\ell_i}_1},\dots,  \types{\widetilde\intp}{v^{\ell_i}_k}\big)$ are equal for all $i$, and
\item the tuples $\big(\iota(v^{\ell_i}_1),\dots,  \iota(v^{\ell_i}_k)\big)$ are equal for all $i$.
\end{itemize}
Let $u_v = v^{\ell_1}_1$ and construct $(\LL_v, \lambda_v)$  by taking the restriction of $(\widetilde\intp, \widetilde \iota)$ to $\{v^{\ell_1}_1, v^{\ell_1}_2, \dots, v^{\ell_1}_k\}$ with all edges dropped, 
adding edges $(v^{\ell_1}_1, v^{\ell_1}_{2}), \dots, (v^{\ell_1}_{k-1}, v^{\ell_1}_{k}), (v^{\ell_1}_k, v^{\ell_1}_{1})$, 
and attaching to $v^{\ell_1}_1$ a~quasi\=/fresh copy of $(\LL_{w}, u_{w}, \lambda_w)$ for each $w\in W_{\ell_1}$.
Then, \[|\Delta^{\LL_v}| \leq k + (q-k)\cdot (\ell + 1)^{q-1} \leq (\ell+1)^q\,.\]

Let us now define $h_v\colon\big(\widetilde\LL_v, u_v, \widetilde\lambda_v\big) \to \big(\widetilde \intp_{S\,\cup\,C}, v', \widetilde \iota_{S\,\cup\, C}\big)$ for $v'=v_1^{\ell_1}$. For $c \in C \cap \Delta^{\widetilde\LL_v}$,  let $h_v(c)=c$. The elements of $\widetilde\LL_v$ obtained by unravelling the cycle $v^{\ell_1}_1, v^{\ell_1}_2, \dots, v^{\ell_1}_k$ in $\LL_v$ are mapped to the elements $v^{\ell_1}_1, v^{\ell_1}_2, \dots, v^{\ell_1}_k, v^{\ell_2}_1, v^{\ell_2}_2, \dots, v^{\ell_2}_k, \dots $ in $\widetilde \II$: for $i\geq 1$ and $j\leq k$
we let \[\underbrace{\big(v^{\ell_1}_1 v^{\ell_1}_2 \cdots v^{\ell_1}_k\big) \cdots \big(v^{\ell_1}_1 v^{\ell_1}_2 \cdots v^{\ell_1}_k\big)}_{i-1} v^{\ell_1}_1 v^{\ell_1}_2 \cdots v^{\ell_1}_j \ \mapsto \ v^{\ell_{i}}_j\,.\]
For the remaining elements we rely on $h_w$ for $w\in W_{\ell_1}$. For every $w \in W_{\ell_1}$ and $i \geq 1$ there is $w^{(i)} \in W_{\ell_i}$ such that $(\LL_w, u_w, \lambda_w)$ is isomorphic to $\big(\LL_{w^{(i)}}, u_{w^{(i)}}, \lambda_{w^{(i)}}\big)$ via an isomorphism $\eta_{w,i}$ fixing $C$. Consider a~path $p$ in the quasi-fresh copy of $(\LL_{w}, u_{w}, \lambda_{w})$ used in $(\LL_v, u_v, \lambda_v)$, starting from the copy of $u_w$. Using the isomorphism $\eta_{w,i}$, we find a~corresponding path $p^{(i)}$ in $\big(\LL_{w^{(i)}}, u_{w^{(i)}}, \lambda_{w^{(i)}}\big)$, starting from $u_{w^{(i)}}$. 
We let 
\[\underbrace{\big(v^{\ell_1}_1 v^{\ell_1}_2 \cdots v^{\ell_1}_k\big) \cdots \big(v^{\ell_1}_1 v^{\ell_1}_2 \cdots v^{\ell_1}_k\big)}_{i-1} v^{\ell_1}_1 p \ \mapsto \ h_{w^{(i)}}(p^{(i)})\,.\]  
The element $v'$ is reachable from $v$ and satisfies \eqref{eq:sameconcepts}, because $v'\in R'_v \subseteq R_v$. Verifying that $h_v$ extends to a~suitable $t$\=/strong homomorphism is straightforward, based on the properties of $v^{\ell_1}_1, v^{\ell_1}_2, \dots, v^{\ell_1}_k, v^{\ell_2}_1, v^{\ell_2}_2, \dots, v^{\ell_2}_k, \dots $ and the inductive hypothesis. This completes the proof of Claim~\ref{mainclaim}.
\end{proof}

With $(\LL_v, u_v, \lambda_v)$ at hand for all $v\in S\setminus\{a\}$, we can define the elementary interpretation we are looking for. Let $W$ be a~minimal subset  of $S \cup C$ such that 
\[ \rch{t}{\widetilde\intp}(a) \setminus \rch{t}{\intp_a}(a) = \bigcup_{w\in W} \types{\widetilde\intp}{w} \cup \rch{t}{\widetilde\intp}(w) \,.\]
Let $V$ be a~minimal subset of $K^\intp_a$ such that $a\in V$ and $\rch{t}{\intp_a}(a) = \bigcup_{v\in V} \types{\intp_a}{v}$.
Let  $(\intp'_a, \iota'_a)$ be the restriction of $(\intp_a, \iota_a)$ to $V$.
Let $(\LL_a,\lambda_a)$ the $X$\=/labelled interpretations obtained from $(\intp'_a,\iota_a)$ by attaching to $a$ a~quasi\=/fresh copy of $(\LL_w, u_w, \lambda_w)$ for all $w\in W$. Then, 
\begin{align*}    
|\Delta^{\LL_a}| & \leq 1 + |\rch{t}{\intp_a}(a)| + \big(\ell - |\rch{t}{\intp_a}(a)|\big)\cdot(\ell+1)^\ell \leq \\
&\leq(\ell+1)^{\ell+1}\,.
\end{align*}
As in Case 1 in the claim, we can construct a~homomorphism $h_a : \big (\widetilde\LL_a, a, \widetilde \lambda_a\big) \to \big(\widetilde \intp_{S\,\cup\,C}, a, \widetilde \iota_{S\,\cup\, C}\big)$ that extends to a~$t$\=/strong homomorphism
$\widehat h_a: (\widetilde\EE_a, a, \widetilde \epsilon_a\big) \to \big(\widetilde \intp, a, \widetilde \iota\big)$ where 
$(\EE_a, \epsilon_a)$ is the union of $(\LL_a, \lambda_a)$ and $\big(\EE_c, \epsilon_c)$ for all $c\in C$ such that $c \in \Delta^{\LL_a}$.
Note that  \[|\Delta^{\EE_a}| \leq (\ell+1)^{\ell+1} \cdot (\ell+1)^{(\ell+1)\cdot  d} \leq (\ell+1)^{(\ell+1)(d+1)}\,.\] 
Hence, $(\EE_a, a, \epsilon_a)$ is the rooted $X$\=/labelled interpretation we seek.
\end{proof}

\section{Partially finite coloured blocking}\label{sec:PartiallyfiniteTheorem}

This section provides a~variant of the \emph{coloured blocking} theorem~\cite{Gogacz2018FiniteQA,DBLP:conf/ijcai/GogaczGIJM19}, suitable for partially finite models (over multiple roles). 
The original theorem offers a method to turn an infinite countermodel into a finite one,  under certain assumptions justified by the \emph{a priori} existence of a finite countermodel. Here, we only know that a partially finite countermodel exists, which does not give equally strong assumptions. In what follows we describe an alternative  construction that preserves countermodels and ensures partial finiteness when applied to an interpretation satisfying suitable weaker assumptions.

\subsection{Quotients}

Let $\intp$ be an~interpretation and $\sim$  an~equivalence relation on $\Delta^\intp$. We write $[x]_\sim$ for the equivalence class $\{y\in\Delta^\intp \bigm| y\sim x\}$ of $x\in\Delta^\intp$  
and $\intp/\sim$ for the  \emph{quotient interpretation} where 
\begin{itemize}
    \item $\Delta^{(\intp/\sim)} = \big\{[x]_\sim \bigm| x \in \Delta^\intp\big\}$,
    \item $A^{(\intp/\sim)} = \big\{[x]_\sim \bigm| x \in A^\intp\big\}$ for all concept names $A$,
    \item $r^{(\intp/\sim)} = \big\{\big([x]_\sim,[y]_\sim\big) \bigm| (x,y) \in r^\intp\big\}$ for all role names $r$. 
\end{itemize}

The quotient construction naturally induces the \emph{quotient homomorphism} $h_\sim\from \intp\to \intp/\sim$ defined as  $h_\sim(x)=[x]_\sim$.

\subsection{Distances}

An interpretation  $\intp$ can be viewed as an undirected graph with an~edge between $x\in\Delta^\intp$ and $y\in\Delta^\intp$ whenever $(x,y)$ or $(y,x)$ belongs to $r^\intp$ for some $r\in \RolNameSet$. By an \emph{undirected path} in $\intp$ we mean a path in this undirected graph. 
The \emph{distance in $\intp$} is the function $d_\intp\colon \Delta^\intp\times \Delta^\intp\to \N\cup\{\infty\}$ where $d_\intp(x,y)$ is the minimal length of an undirected path  in $\intp$ between $x$ and $y$, or $\infty$ if there is no such path. Naturally, $d_\intp$ satisfies the triangle inequality: $d_\intp(x,z) \leq d_\intp(x,y) + d_\intp(y,z)$ for all $x,y,z\in\Delta^\intp$.
In the case of a~single transitive role, the distance is always $0$, $1$, or $\infty$. But already for two transitive roles $d_\intp$ can take all possible values.

For $\distN\geq 0$ we define the \emph{$\distN$\=/neighbourhood} of an~element $x\in\Delta^\intp$ as the set $N_\distN(x)=\big\{y\in\Delta^\intp\mid d_\intp(x,y)\leq \distN\big\}$. We will often be interested in the induced subinterpretation $\intp\restr N_\distN(x)$, which might be infinite.

\subsection{Coloured interpretations}

Let us fix a set $C$ of  \emph{colours}. A~\emph{coloured interpretation} is an~interpretation $\intp$ along with a~function $\kappa\colon \Delta \to C$ for some  $\Delta\subseteq\Delta^\intp$ that assigns colours to some elements of $\intp$. 

If $\intp$ is coloured by $\kappa$ and $\intb$ is coloured by $\theta$ then a~homomorphism $h\colon \intp\to\intb$ \emph{preserves colours} if for every $x\in \Delta^\intp$ we have: $x\in\dom(\kappa)\Leftrightarrow h(x)\in\dom(\theta)$ and if $x\in\dom(\kappa)$ then $\kappa(x)= \theta(h(x))$.

Two interpretations $\intp$ and $\intb$ coloured by $\kappa$ and $\theta$ are \emph{homomorphically equivalent} if there is a~homomorphism from $\intp$ to $\intb$ and a~homomorphism from $\intb$ to $\intp$, both of them preserving colours. Homomorphic equivalence is reflexive, symmetric, and transitive, but  it does not guarantee an~isomorphism between the interpretations.

\subsection{\texorpdfstring{$\distN$}{\distN}-neighbourhood equivalence}

The original coloured blocking relies on the finite number of isomorphism types of $\distN$\=/neighbourhoods. As we cannot assume it here, we use the weaker notion of homomorphism equivalence. 

Consider $\distN\geq 0$ and an~interpretation $\intp$ coloured by $\kappa$.
For each $x\in\Delta^\intp$, the interpretation $\intp\restr N_\distN(x)$ is naturally coloured by $\kappa\restr (N_\distN(x)\cap \dom(\kappa))$.
We define \emph{$\distN$\=/neighbourhood equivalence} $\sim_\distN$ as the equivalence relation on $\Delta^\intp$ such that $x \sim_\distN y$ iff $x=y$ or $x,y\in \dom(\kappa)$, $\kappa(x)=\kappa(y)$, and the coloured interpretations $\intp\restr N_\distN(x)$ and $\intp\restr N_\distN(y)$ are homomorphically equivalent via homomorphisms that maps $x$ to $y$ and vice versa.
%
Note that $\sim_\distN$ merges only coloured elements. 

\subsection{The coloured blocking theorem}

For $\distN\geq 0$, we say that an~interpretation $\intp$ coloured by $\kappa$ is \emph{$\distN$\=/sparse} if for every $x\in\Delta^\intp$ and every two $y,y'\in N_\distN(x)\cap \dom(\kappa)$, if $y\neq y'$ then $\kappa(y)\neq\kappa(y')$.

\begin{restatable}{theorem}{thmColourBlocking}
\label{thm:pfColourBlocking}
Let $K=2\cdot\distN^3$ for $\distN\geq 0$. Consider a $K$\=/sparse interpretation $\intp$ coloured by $\kappa$ and a~finite interpretation   $\intq$ with $|\Delta^\intq|\leq \distN$.
Then $\intq$ maps homomorphically into $\intp$ iff $Q$ maps homomorphically into $\intp/{\sim_K}$.
\end{restatable}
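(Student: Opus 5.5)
The forward direction is immediate: if $g\colon\intq\to\intp$ is a homomorphism, then $h_{\sim_K}\circ g$ is a homomorphism into $\intp/{\sim_K}$. All the work is in the converse. Let $g\colon \intq\to\intp/{\sim_K}$ be a homomorphism. For every atom of $\intq$ (an edge $(x,y)\in r^\intq$ or a membership $x\in A^\intq$) we fix a \emph{witness} in $\intp$: a pair $(a,b)\in r^\intp$ with $a\in g(x)$, $b\in g(y)$, or an element $a\in g(x)\cap A^\intp$. The plan is to pull back these witnesses to a single homomorphism into $\intp$. We do this step by step, connecting the scattered witnesses with the homomorphisms that define $\sim_K$.

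First we reduce to the case where $\intq$ is connected, and we treat each component separately. Consider a component of $\intq$ with at most $\distN$ elements, and fix a spanning tree with root $x_0$. The idea is to build $f\colon\intq\to\intp$ so that $f(x)\sim_K g(x)$-representative and $d_\intp(f(x_0),f(x))\le \distN$.

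We pick $f(x_0)$ as the chosen witness element of $x_0$. We then process the elements of $\intq$ in breadth-first order along the tree. Suppose $f(x)=a'$ with $a'\in g(x)$ is already placed, and the tree edge to $y$ has witness $(a,b)$ with $a\sim_K a'$. If $a=a'$, we take $b$. Otherwise $a,a'$ are coloured with homomorphic equivalence of their $K$-neighbourhoods sending $a\mapsto a'$. Applying this homomorphism to $b\in N_1(a)$ yields $b'\in N_1(a')$ with the edge, and $b'$ lies in the class $g(y)$ because the homomorphism preserves colours and neighbourhood types. That last point is the delicate part: we need $b'\sim_K b$, and a homomorphism of $N_K$ only gives homomorphic equivalence of smaller neighbourhoods of $b$ and $b'$. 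To handle this, we track a decreasing radius, using $N_{K-j}$ at depth $j$. Since radii shrink by at most $\distN$ per level, this explains why $K$ must be larger than $\distN$.

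The real obstacle is the non-tree edges and the unary atoms. For an edge $(x,y)$ of $\intq$ not in the tree, we have $f(x)$ and $f(y)$ both within distance $\distN$ of $f(x_0)$, plus a witness edge $(a,b)$ with $a\sim_K f(x)$ and $b\sim_K f(y)$. We need the edge $(f(x),f(y))$ itself. For this we use sparsity: all the relevant elements lie in a ball of radius about $\distN^2$, and inside a $K$-ball colours are unique. Hence coloured elements in the same class inside a common ball must coincide. The intended argument is to rebuild $f$ globally: take the neighbourhood homomorphism $\phi$ from $N_K(a)$ sending $a$ to $f(x)$, and show that $\phi(b)=f(y)$, because both are coloured by the same colour and lie within $K/2$ of $f(x)$, which sparsity forbids unless they are equal. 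Uncoloured elements are singleton classes, so $g$ already fixes them and no transport is needed.

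So the key invariant is: every element in $\range(f)$ that is coloured is determined by its colour within a ball of radius $\distN^2\cdot(\text{radius loss})\le K/2$. Choosing $K=2\distN^3$ makes the nested radii (depth $\le\distN$, each step consuming up to $\distN^2$ for re-anchoring) fit. Unary atoms are preserved because the homomorphisms preserve concept names and $g$ respects them through the witnesses. I expect the bookkeeping of radii, showing that each transported neighbourhood still contains what is needed, to be the main technical step.
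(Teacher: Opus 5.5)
\textbf{There is a genuine gap: your argument breaks on uncoloured elements.}

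You write that uncoloured elements need no transport because their classes are singletons. That is false, and your own BFS step in fact transports them. Suppose $f(x)=a'\neq a$ for the witness $(a,b)$ of a tree edge from $x$ to an uncoloured $y$. Then $a'$ cannot be adjacent to $b$, since otherwise $a\neq a'$ would be two same-coloured elements of $N_1(b)$. So $f(y)=\phi(b)$ is a different uncoloured element, and $f(y)\notin g(y)=\{b\}$.

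From then on your invariant fails at $y$. No $\sim_K$-homomorphism sends $b$ to $f(y)$, so the re-anchoring step cannot be applied at $y$. Your non-tree-edge argument also fails at $y$, because sparsity identifies only coloured elements. (For non-tree edges between two coloured elements your argument is fine.)

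A concrete failure, over one role $r$ with colours red and blue. Let $\intp$ be the disjoint union of two components:
\begin{itemize}
\item the first has red $a_1$, blue $c_1$, uncoloured $b_x,b_y$, and edges $a_1\to c_1$, $a_1\to b_x$, $b_x\to b_y$, $c_1\to b_y$;
\item the second has red $a_0$, blue $w'$, uncoloured $p,q,p',q'$, and edges $a_0\to w'$, $a_0\to p\to q$, $a_0\to p'\to q'$, $w'\to q$, $w'\to q'$.
\end{itemize}
This $\intp$ is $K$-sparse, with $a_0\sim_K a_1$ and $w'\sim_K c_1$. Let $\intq$ have edges $z\to w$, $z\to x$, $x\to y$, $w\to y$, and set $g(z)=[a_0]_{\sim_K}$, $g(w)=[w']_{\sim_K}$, $g(x)=\{b_x\}$, $g(y)=\{b_y\}$. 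Root at $z$ with $f(z)=a_0$, use witness $(a_0,w')$ for $z\to w$, and take tree edges $zw,zx,wy$. Your procedure may then:
\begin{itemize}
\item transport $(a_1,b_x)$ by a map with $b_x\mapsto p$, giving $f(x)=p$;
\item transport $(c_1,b_y)$ by the equally legitimate map $c_1\mapsto w'$, $a_1\mapsto a_0$, $b_x\mapsto p'$, $b_y\mapsto q'$, giving $f(y)=q'$.
\end{itemize}
The non-tree edge $x\to y$ is now lost, since $(p,q')\notin r^\intp$. Its witness $(b_x,b_y)$ is uncoloured, and nothing in your argument forces $f(y)=q$.

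\textbf{The missing idea.} Each maximal uncoloured connected region of $\intq$, together with its coloured neighbours, must be moved rigidly by a single map, and re-anchoring may happen only at coloured elements. This is exactly the paper's decomposition:
\begin{itemize}
\item The basic pieces $\intq_e$ for $e\in\Nel$ are mapped by $g$ itself. Uncoloured classes are singletons, and sparsity forces all witnesses for a coloured boundary element to coincide.
\item Pieces are glued only at shared coloured elements. One whole piece is moved by $h'\circ h_1$, and sparsity guarantees it agrees with $h_2$.
\item The radius bookkeeping you leave open is done per merge by Lemma~\ref{lem:homo-decrease}. Each merge loses $\distN$, and there are at most $|P|\le 2\distN^2$ merges, hence $K=2\distN^3$.
\end{itemize}
That lemma, through Claim~\ref{cl:homo-iso}, also supplies what your transport step lacks: the reverse neighbourhood homomorphism sends $b'$ back to $b$, which is needed for $b'\sim_j b$. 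Your spanning-tree scheme could be repaired by running it over such blocks instead of single elements. As written, it does not establish the converse direction.
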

 
A~complete proof of this result is given in Appendix~\ref{app:PartiallyfiniteTheorem}. Here we provide only a very high level sketch.

As the quotient homomorphism $h_{\sim_K}$ maps $\intp$ to $\intp/{\sim_K}$, the right-to-left implication is clear. For the converse, consider a homomorphism $h\from \intq\to \intp/{\sim_K}$. We need to lift it to a~homomorphism from $\intq$ to $\intp$. We begin by dividing $\intq$ into finitely many  \emph{pieces}. Initially, each piece contains at most two coloured elements. For such pieces we can lift the corresponding restriction of $h$ to a homomorphism to $\intp$, using $h_{\sim_K}$. Then, we iteratively glue pieces into larger ones, while preserving the existence of homomorphisms to $\intp$. This is possible because $\sim_K$-equivalent elements of $\intp$ have homomorphically equivalent neighbourhoods. Ultimately all pieces are glued together into a single piece $\intq$ with a corresponding homomorphism into $\intp$.

In the following section we apply Theorem~\ref{thm:pfColourBlocking} to an~interpretation of a~special shape, with suitably coloured critical elements, and prove that the resulting quotient interpretation is partially finite.

\section{Multiple roles allowed}\label{MultipleRoles}
\label{sec:multiple-roles}

In this section we show how to solve the entailment problem in the general case. The strategy is familiar: we characterize non-entailment in terms of structurally simple witnesses and show that the existence of such a witness is decidable. 

A rooted interpretation $(\intp,a)$ is \emph{piecewise single-role} if it is the union of a family of pairwise disjoint rooted interpretations called \emph{pieces}, that are either singleton interpretations or single-role interpretations over a transitive role, called \textit{transitive pieces}, with some additional edges between elements from different pieces.  Specifically, the pieces can be arranged into a potentially infinite tree such that 
\begin{itemize}
\item 
the root piece is a singleton interpretation with root $a$;
\item for every two adjacent pieces in the tree, $\intp$ contains exactly one edge from an element in the parent piece to the root of the child piece, and there are no other edges in $\intp$ between elements from different pieces.
\end{itemize}
Moreover, for every transitive role name $t$, if a piece contains a $t$-edge or its root has an incoming $t$-edge, then there are no $t$-edges originating in this piece and leading to other pieces. The latter condition ensures that there is at most one decomposition into pieces. 

An interpretation $(\GG, a)$ is \emph{piecewise elementary} if it is piecewise single-role and each non-singleton piece is a transitive piece. 

The unravelling of a piecewise single-role interpretation $(\GG, a)$ is a piecewise single-role interpretation $(\widetilde\GG,a)$  obtained as follows. Starting from the (singleton) root piece of $(\GG,a)$, repeat the following exhaustively: for each newly added  element $e'$ that is a copy of element $e$ from  $\GG$ (or $e$ itself), for each piece $(\JJ,b)$ of $(\GG,a)$  such that there is an $r$-edge from $e$ to $b$, add a fresh copy of the quasi-unravelling of $(\JJ,b)$, with an $r$-edge from $e'$ to the root.  Finally, close all transitive roles by transitivity. 

\begin{restatable}{theorem}{thmpiecewiseelementary}
\label{thm:piecewise-elementary}
For $\KB=\big(\TBox, \{A(a)\}\big)$, a concept name~$F$, and a UCQ $Q$, if $\KB\not\models_F Q$ then there is a piecewise elementary interpretation $(\GG,a)$ of piece size at most $(\ell+1)^{(\ell+1)^2}$ for $\ell=|\conceptsIn{\KB}|$ and degree at most $\|\TBox\|$ whose quasi-unravelling $(\widetilde\GG, a)$ satisfies
$\widetilde\GG \models \KB$ and $\widetilde\GG\not\models Q$. 
\end{restatable}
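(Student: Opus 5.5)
We start from a counter-model $\II$ of $\KB$ with $F^\II$ finite and $\II\not\models Q$. The plan is to reshape $\II$, without leaving the class of partially finite counter-models, into a piecewise single-role interpretation. We then apply Theorem~\ref{thm:onerole} piece by piece, replacing each transitive piece by an elementary one of bounded size.

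\textbf{Step 1: a piecewise single-role counter-model.} First we perform a standard role-aware unravelling of $\II$ from $a$, modified as in quasi-unravelling so that critical elements are not duplicated. Traversing an edge over role $r$ from an element entered via a different role (or via the root) starts a new piece. For a transitive $t$, the new piece is the $t$-reachable part of the target, and it is handled only through Theorem~\ref{thm:onerole} later. Non-transitive steps produce singleton pieces.

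To keep critical elements finite we must not copy them. The cleanest choice is to keep $F$-elements as shared originals inside the transitive pieces, which is exactly what quasi-unravelling does. Pieces, however, must be disjoint. We handle this by letting each piece be the $t$-reachable part together with its own critical elements, so that a critical element belongs to exactly one piece per role context. A finite number of critical elements, times finitely many roles, remains finite after merging.

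The map back to $\II$ must be a homomorphism that is strong for every role, in the sense of Lemma~\ref{lem:strong} generalised to several roles. This gives that the result is a model of $\TBox$ and does not satisfy $Q$. We also prune successors so that each element keeps at most one witness per existential axiom $A\sqsubseteq\exists r.B$; this yields degree at most $\|\TBox\|$ in the piece tree.

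\textbf{Step 2: elementarise each piece.} For each transitive piece $(\JJ,b)$ we apply Theorem~\ref{thm:onerole} with labels $X$ that record the part of the model hanging below each element. Concretely, the label of an element is its unary type together with the set of pieces attached to it. We obtain an elementary $(\EE,b,\epsilon)$ of size at most $(\ell+1)^{(\ell+1)^2}$, together with a $t$-strong homomorphism $\widetilde\EE\to\JJ$ that preserves labels. Because labels are preserved, every element of $\widetilde\EE$ can be given exactly the child pieces of its image. We then replace the piece by $\EE$ and reattach those children. Composing these homomorphisms over the whole tree, via a top-down induction on piece depth, gives a homomorphism $\widetilde\GG\to\II'$ that is strong for every role.

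\textbf{Step 3: conclude.} By the multi-role analogue of Lemma~\ref{lem:strong}, $\widetilde\GG\not\models Q$. The condition $\widetilde\GG\models\TBox$ holds because of the following checks. Types are preserved, so concept inclusions of the form $\bigsqcap_i A_i\sqsubseteq\bigsqcup_j B_j$ hold. Existentials within a piece hold by equality of $\rch{t}{}$, and cross-piece existentials hold by the reattached children. Universal restrictions are pulled back along the homomorphism.

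The main obstacle is twofold. First, we must ensure that $\widetilde\GG$ has only finitely many critical elements even though the piece tree is infinite. Quasi-unravelling shares critical elements within a piece, but infinitely many pieces could each contain critical copies. Second, we must make the labelled homomorphisms compose correctly across pieces. For the first point we plan to make critical elements global: critical elements are identified with their originals in $\II$ across all pieces, and piece disjointness is relaxed only at critical elements. This relies on the fact that $F^\II$ is finite and that copies of critical elements are never created.
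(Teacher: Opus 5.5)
\textbf{Step~2 has a genuine gap: Theorem~\ref{thm:onerole} cannot be applied with your labels.} Its statement does not say so, but it needs a \emph{finite} label set $X$. Case~2 of its proof extracts $\ell_1<\ell_2<\cdots$ along which the tuples $(\iota(v^{\ell_i}_1),\dots,\iota(v^{\ell_i}_k))$ coincide, and it uses that the labelled loop trees $\LL_w$ come in finitely many isomorphism types.

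With infinitely many labels the theorem is false. Take a transitively closed $t$-chain $v_1,v_2,\dots$ of $A$-elements with pairwise distinct labels. By $t$-strongness every element of $\widetilde\EE$ needs a proper $t$-successor, so $\widetilde\EE$ contains an infinite $t$-path. The images of this path have strictly increasing indices and hence pairwise distinct labels. But $\widetilde\EE$ only carries the finitely many labels of the finite $\EE$, a contradiction.

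Your label (unary type plus the set of attached pieces) ranges over arbitrary infinite subtrees of the unravelling of $\II$, so it takes infinitely many values. You also cannot coarsen it to something finite such as types. All copies in $\widetilde\EE$ of one element of $\EE$ receive the same child pieces, and for each copy the edge to a child root must still map to an edge of $\II$, so the label must pin down the children. Such labels are finite only if the children are already elementarised, with bounded height, piece size and degree. That forces a bottom-up order, which is incompatible with your top-down induction and has no starting point in an infinite tree of pieces.

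This is the missing idea, and the paper supplies it. It processes the finite-depth truncations $\UU_k$ of the transitively closed unravelling of $\II$ bottom-up, level by level. At each level the labels range over the finitely many piecewise elementary interpretations of height at most $k-i-1$. This yields $\KK_k$ mapping into $\II$. K\"onig's lemma, applied to the finitely branching tree of truncations of the $\KK_k$, then gives an infinite branch whose limit is $\GG$. Finally, $\widetilde\GG\not\models Q$ by compactness: a match uses only a finite prefix, which lies in (the quasi-unravelling of) some $\KK_k$, and that maps into $\II$. Witnesses exist because every element lies above the last layer of some $\KK_k$.

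\textbf{Your ``main obstacle'' rests on a misreading of the statement.} The theorem does not ask for $\widetilde\GG$ to have finitely many critical elements, and in general it cannot. For example, with $A\sqsubseteq\exists r.F$ and $F\sqsubseteq\exists r.F$ for a non-transitive $r$, the $r$-edges run only between pieces, so critical elements occur in infinitely many pieces of $\GG$. Partial finiteness is recovered only later, in Lemma~\ref{lem:converse}, via the quotient of Theorem~\ref{thm:pfColourBlocking}.

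Sharing critical elements across pieces would violate the definition of a piecewise single-role interpretation, which requires pairwise disjoint pieces connected only by parent-to-child-root edges. So that part of your plan should be dropped. What is actually needed is only that each transitive piece has finitely many critical elements when Theorem~\ref{thm:onerole} is applied. Your choice of the $t$-reachable part of $\II$ as a piece already achieves this. The paper gets the same effect by marking each $f\in F^\II$ with a fresh concept $Cr_f$ and merging copies of $f$ within a component.
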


\begin{proof}[Proof sketch] Suppose that $\KB\not\models_F Q$. Then there exists a countermodel $\II$. Let $\JJ$ be classical unravelling of $\II$. $\JJ$ can be divided into connected components in a single role. For each component, we would like to find a subset that is the quasi\=/unravelling of an elementary interpretation. Theorem \ref{thm:onerole} gives one role substructures with desired bounds.

The problem arise when we have to decide the ordering of components to be shrunk. If we would do this in top down fashion, the final structure could end up empty. We are forced to do it in bottom tup fashion. But how do we do it in infinite tree of components?

We define initial fragments $\UU_k$ of $\JJ$ of depth $k$. Then we process $\UU_k$'s bottom up fashion getting structures $\KK_k$. We use the K\"onig lemma to get a single infinite structure out of structures $\KK_k$'s.
\end{proof}

A characterization of non-entailment could be obtained by proving the converse of Theorem~\ref{thm:piecewise-elementary}, but the condition $\widetilde \GG \not \models Q$ makes it hard to decide the existence of such witnesses. Instead, we replace this condition with a weaker one, that is easier to check and  still sufficient to prove the converse. 

We use a  blow-up operation, which can also be seen as partial unravelling. Given a piecewise elementary interpretation $(\GG,a)$ and a positive integer $n$, we define $(\GG_n,a)$ as follows. We process $\GG$ top-down, piece by piece. In each piece we blow-up the cycles in loop trees, also in the top-down order: we unravel each cycle to a~cycle $2n+1$ times longer. Each copy $e'$ of element $e$ from the original cycle gets its own fresh copy of the subinterpretation of $\GG$ induced by elements reachable from $e$ in $\GG$ without visiting any other elements on the cycle. 
When all the cycles are blown-up, we partially close the resulting interpretation by transitivity: we add all edges from the transitive closure except those between elements from the same blown-up cycle.  Finally, in each blown-up cycle  we transitively close the first $n$ of the $2n+1$ copies of the original cycle.

\begin{restatable}{lmm}{weakercriterion}
\label{lem:weaker-criterion}
For every piecewise elementary interpretation $(\GG, a)$ and positive integer $n$, the quasi-unravelling $(\widetilde{\GG_n}, a)$ of $(\GG_n, a)$ maps homomorphically into the quasi-unravelling $(\widetilde \GG, a)$ of $(\GG, a)$.
\end{restatable}

\begin{proof}[Proof sketch]
$\GG_n$ can be seen as a partial unravelling. The lemma formalise the intuition that partial unravelling followed by full unravelling is essentially the same as just full unravelling.
\end{proof}

By Lemma~\ref{lem:weaker-criterion}, Theorem~\ref{thm:piecewise-elementary} holds with  $\widetilde\GG\not\models Q$ replaced with $\widetilde{\GG_n}\not\models Q$ for any $n$. We can also prove the converse: a~partially finite counter-model is obtained by applying Theorem \ref{thm:pfColourBlocking} to $\widetilde \GG_n$ with suitably coloured critical elements. 

\begin{restatable}{lmm}{converse}
\label{lem:converse}
For $\KB=\big(\TBox, \{A(a)\}\big)$, a concept name $F$, and a UCQ $Q$, if there is a piecewise elementary interpretation $(\GG,a)$ of bounded piece size and bounded degree  such that 
$\widetilde\GG \models \KB$ and $\widetilde{\GG_n}\not\models Q$, then $\KB\not\models_F Q$.
\end{restatable}

\begin{proof}[Proof sketch]
We will use Theorem \ref{thm:pfColourBlocking}. In order to do it,  we have to define sparse colouring. Since the quasi\=/unravelling does not multiply critical elements, such colouring can be defined on $\GG_n$ instead on $\widetilde{\GG_n}$. Existence of such colouring follows from bounded  degree in $\GG_n$.
\end{proof}

Theorem~\ref{thm:piecewise-elementary}, Lemma~\ref{lem:weaker-criterion}, and Lemma~\ref{lem:converse} together establish 
an~equivalent criterion for non\=/entailment. 
%
%
But how does replacing $\widetilde \GG \not\models Q$ with $\widetilde \GG_n \not\models Q$ help with decidability? The construction ensures that  $\widetilde{\GG_n}$ and $\GG_n$ satisfy the same conjunctive queries of size at most $n$.

\begin{restatable}{lmm}{theweakestcriterion}
\label{lmm:the-weakest-criterion}
For a piecewise elementary interpretation $(\GG,a)$, a positive integer $n$, and a conjunctive query $Q$ with at most $n$ variables, $\GG_n\models Q$ iff $\widetilde{\GG_n}\models Q$.
\end{restatable}

\begin{proof}[Proof sketch]
The  only if direction follows from the fact that $\GG_n$ does not hvae short cycles. The if direction would be the standard unravelling property if not for the additional transitive clousure in the definition of quasi\=/unravelling. But those additional edges can be mimicked by partial transitive closure of loops in $\GG_n$
\end{proof}

Hence, we can replace the condition $\widetilde \GG_n \not \models Q$ in our criterion  with $\GG_n \not \models Q$ where $n$ is the maximal number of variables in a CQ from $Q$. The latter condition can be checked on the fly during a type-elimination procedure, which can be used to check the existence of a suitable witness. 

\begin{restatable}{lmm}{complexity}
\label{lem:complexity}
Given a KB $\KB=(\TBox, \{A(a)\})$, a concept name $F$, and a UCQ $Q$, one can decide in 2ExpTime whether there is 
a piecewise elementary interpretation $(\GG,a)$ of piece size at most $(\ell+1)^{(\ell+1)^2}$ for $\ell=|\conceptsIn{\KB}|$ and degree at most $\|\TBox\|$ such that
$(\widetilde\GG,a) \models \KB$ and $\GG_n\not\models Q$. 
\end{restatable}

Putting together all the ingredients, along with the  standard method for simplifying the ABox (see e.g. \cite{garcia-finite-s-arxiv}), we obtain our main result.

\begin{restatable}{theorem}{fullresult}
\label{thm:full-result}
Partially finite entailment of conjunctive queries for $\mathcal{S}$ is 2ExpTime-complete.   
\end{restatable}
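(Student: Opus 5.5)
The proof has two parts, the upper bound and the lower bound, and the upper bound is assembled from the lemmas of Section~\ref{sec:multiple-roles}.

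\textbf{Lower bound.} We reduce from an existing 2ExpTime-hard problem. Infinite (unrestricted) CQ entailment over $\Slogic$ knowledge bases is already 2ExpTime-hard, and in fact hard for $\mathcal{ALC}$ with one transitive role. Take $F$ to be a fresh concept name that occurs neither in $\KB$ nor in $Q$. Any model of $\KB$ can be modified by setting $F^\intp=\emptyset$, and this changes neither satisfaction of $\KB$ nor satisfaction of $Q$. Hence $\KB\models_F Q$ iff $\KB\models Q$, and hardness transfers.

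\textbf{Upper bound.} First we reduce to knowledge bases of the form $(\TBox,\{A(a)\})$ with a single individual. We use the standard ABox simplification (as in \cite{garcia-finite-s-arxiv}):
\begin{itemize}
\item guess, in exponential time, the types and the restriction of the model to the ABox individuals, together with which query fragments are matched around them;
\item split each CQ of $Q$ into parts that touch the ABox core and tree-like parts that hang off individuals;
\item encode the tree-like parts into the TBox and into rooted queries, which yields exponentially many instances, each of polynomial size in $\|\KB\|$ up to the standard doubly exponential bound in $|Q|$.
\end{itemize}
One point needs care here: $F$-finiteness must be preserved. The ABox part is finite, so $F$ restricted to it is finite. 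The subtrees rooted at the individuals are independent, so a partially finite model exists iff each per-individual subproblem has one; a finite union of finite sets is finite.

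For the rooted case we chain three results. Theorem~\ref{thm:piecewise-elementary} together with Lemma~\ref{lem:weaker-criterion} shows that $\KB\not\models_F Q$ implies the existence of a piecewise elementary $(\GG,a)$ with the stated bounds such that $\widetilde\GG\models\KB$ and $\widetilde{\GG_n}\not\models Q$. Lemma~\ref{lem:converse} gives the converse. Lemma~\ref{lmm:the-weakest-criterion}, with $n$ the maximal number of variables of a CQ in $Q$, lets us replace $\widetilde{\GG_n}\not\models Q$ by $\GG_n\not\models Q$. Lemma~\ref{lem:complexity} then decides existence of such a witness in 2ExpTime. Taking the complement and running over the exponentially many ABox guesses stays within 2ExpTime.

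\textbf{Main obstacle.} The delicate step is the ABox reduction: the guessed decomposition must interact correctly with the finiteness of $F$. Queries may span several individuals' subtrees, and the transitive roles connect individuals through the ABox. I would handle this by closing the ABox core under transitivity when guessing. Each subtree is then a model of a rooted KB whose root type is fixed by the guess, and partially finite countermodels can be glued and split along the individuals.
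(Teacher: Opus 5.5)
Your proposal matches the paper's proof. The lower bound is inherited from the known 2ExpTime-hardness of CQ entailment for $\mathcal{S}$, and your fresh-$F$ argument just makes that transfer explicit. The upper bound chains Theorem~\ref{thm:piecewise-elementary}, Lemmas~\ref{lem:weaker-criterion}, \ref{lem:converse}, \ref{lmm:the-weakest-criterion} and~\ref{lem:complexity} after the standard ABox simplification, exactly as the paper does.

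The one imprecision is in the splitting direction of the ABox simplification. A partially finite countermodel need not be forest-shaped, and forest-unravelling it can make $F$ infinite, as in the introduction's example. So instead of ``independent subtrees'', take for each individual the part of the original countermodel reachable from it, which still has finitely many critical elements. Then choose the forbidden rooted query fragments according to matches in that original model.
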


The lower bound holds already for finite and unrestricted entailment \cite{garcia-finite-s-arxiv}.
\section{An application of partially finite entailment}\label{sec:application-to-closed-predicates}

As our last contribution, we present a result illustrating that partially finite reasoning occurs naturally 
in problems related to closed predicates. 
In the problem of  \emph{query containment in the presence of closed predicates}, we are given two Boolean CQs $q_1$, $q_2$, a TBox $\mathcal{T}$, and a~set $\mathcal{F}$ of concept names, and we have to decide
whether $q_1 \subseteq_ {\TT,\mathcal{F}} q_2$, that is, whether for every  ABox $\ABox$ and every interpretation $\mathcal{I}$ such that $\II \models (\mathcal{T}, \mathcal{A})$ and $A^\intp = \{a \ |\ A(a)\in\ABox\}$ for each $A\in\mathcal{F}$, if  $\intp\models q_1$ then $\intp\models q_2$.  Notice that in the formulation of the problem we essentially quantify universally over partially finite interpretations $\II$ with $F$ defined as the union of predicates from $\mathcal{F}$. And indeed, this problem can be solved via a Turing reduction to partially finite entailment. 

\begin{theorem}
    \label{thm:containment-to-entailement}
    Using an oracle for the partially finite entailment problem, one can decide query non-containment in the presence of closed predicates in nondeterministic polynomial time.
\end{theorem}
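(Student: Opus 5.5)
We plan to guess a small certificate of non-containment in NP and check it with polynomially many oracle calls. Non-containment $q_1 \not\subseteq_{\TT,\mathcal{F}} q_2$ means there are an ABox $\ABox$ and a model $\II$ of $(\TT,\ABox)$ with $A^\II=\{a \mid A(a)\in\ABox\}$ for all $A\in\mathcal{F}$, $\II\models q_1$ and $\II\not\models q_2$. The first step is to normalise such a witness. Fix a match $\pi$ of $q_1$ in $\II$. We may assume that $\ABox$ consists of the image of $\pi$ together with the closed-predicate facts $A(a)$, $A\in\mathcal{F}$. Both parts are needed: closed predicates must be reflected in the ABox, while the open part of $\ABox$ can be enlarged freely to record the facts of $\II$ on $\pi(q_1)$, since $\II$ still satisfies it.

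The difficulty is that the $\mathcal{F}$-part of $\ABox$ may be arbitrarily large, so it cannot be guessed outright. Our key idea is to express closedness through partial finiteness instead. We take a fresh concept name $F$ and add to the TBox the inclusions $A\sqsubseteq F$ for $A\in\mathcal{F}$. In a partially finite model where $F$ is finite, each $A\in\mathcal{F}$ is finite, so every such model with a match of $q_1$ can be paired with the ABox consisting of its finite $\mathcal{F}$-extensions plus the image of the match. Conversely, a witness $\II$ for non-containment has all $A^\II$ finite, since ABoxes are finite, and interpreting $F$ as $\bigcup_{A\in\mathcal{F}}A^\II$ makes it a partially finite model. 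Thus non-containment becomes the statement that some model of $\TT'=\TT\cup\{A\sqsubseteq F\}$ with $F$ finite satisfies $q_1$ but not $q_2$.

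To eliminate the requirement ``satisfies $q_1$'', the NP machine guesses a quotient $q_1'$ of $q_1$, i.e.\ an identification of variables corresponding to a match, and a unary type, over the concept names of $\TT'$, for each remaining variable. It turns $q_1'$ into an ABox $\ABox'$ with fresh individuals, adding these types as assertions, and verifies that the types are locally consistent with the $\forall$ and $\sqcap\sqsubseteq\sqcup$ axioms. All of this is polynomial. It then asks the oracle whether $(\TT',\ABox')\not\models_F q_2$. Correctness in the backward direction: a partially finite counter-model of $(\TT',\ABox')$ satisfies $q_1$ through the ABox individuals, and its finitely many $\mathcal{F}$-elements form the closed ABox. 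Correctness in the forward direction: from a witness $\II$ we guess the quotient and types induced by $\pi$, and $\II$ itself, with $F$ set as above, is the required counter-model.

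The main obstacle lies in the normalisation step. Because the closed predicates fix the $\mathcal{F}$-extensions exactly, we must make sure that a model of $(\TT',\ABox')$ whose $\mathcal{F}$-part lies outside the ABox individuals still yields a legitimate ABox. This is resolved by letting the witnessing ABox be chosen after the model, which the definition of containment permits because it quantifies over all ABoxes. A further subtlety is the Standard Name Assumption for elements named in that ABox. We would address this by renaming elements of $\II$ to individuals, which is harmless since the query entailment is invariant under isomorphism. If the paper's oracle requires ABoxes of the form $\{A(a)\}$, we would additionally apply the standard ABox-simplification reduction mentioned before Theorem~\ref{thm:full-result}.
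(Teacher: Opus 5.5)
Your proposal is correct and follows the paper's proof. Both guess a homomorphic image $q_1'$ of $q_1$, turn it into an ABox, and make a single oracle call asking whether $(\mathcal{T}',\mathcal{A}')\not\models_F q_2$, where $F$ stands for the union of the closed predicates; in the backward direction, both choose the closed-predicate ABox after the model.

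The differences are cosmetic. Your fresh concept name $F$ with axioms $A\sqsubseteq F$ replaces the paper's direct use of $F_1\sqcup\dots\sqcup F_n$, and this neatly respects the paper's convention that $F$ is a concept name. The guessed unary types, their local-consistency check, and the preliminary normalisation of the witness ABox are harmless but unnecessary.
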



\begin{proof}
Let $q_1,q_2$ be conjunctive queries, $\mathcal{T}$ a~TBox, and $\mathcal{F}=\{F_1, F_2, \dots, F_n\}$ a set of concept names. We will show that  $q_1 \subsetneq_{\TT,\mathcal{F}} q_2$  iff  $(\mathcal{T},\ABox_{q'_1}) \not\models_F q_2$ for $F=F_1\sqcup F_2 \sqcup \dots \sqcup F_n$ and some ABox $\ABox_{q'_1}$ obtained from a homomorphic image $q'_1$ of $q_1$ by interpreting variables as individuals and atoms as assertions. 
%
This suffices, because the algorithm for query non-containment in the presence of closed predicates can then simply guess the correct homomorphic image of $q_1$ and test partially finite entailment.

Suppose that  $(\mathcal{T},\ABox_{q'_1}) \not\models_F q_2$ for some $\ABox_{q'_1}$ as above. Then, there is an interpretation $\mathcal{I}$ such that $\mathcal{I} \models (\TT,\ABox_{q'_1})$,  $\mathcal{I} \not \models q_2$, and $F^\mathcal{I}$ is finite. Consequently,  $q_1 \not\subseteq_{\TT,\mathcal{F}}q_2$ as witnessed by the ABox $\ABox = \bigcup_{i=1}^n\{F_i(a) \ | \ a \in F_i^\mathcal{I}\}$ and the interpretation $\mathcal{I}$.

    Conversely, suppose that $q_1 \not\subseteq_{\TT,\mathcal{F}}q_2$.  Then there is an ABox $\ABox$ and an interpretation $\mathcal{I}$ such that $\II \models (\mathcal{T}, \mathcal{A})$, $F_i^\intp = \{a \ |\ F_i(a)\in\ABox\}$ for each $i$,  $\intp\models q_1$, and $\intp\not\models q_2$.
    Clearly,  $F^\intp$ is finite. 
    Some homomorphic image $q'_1$ of $q_1$ maps to $\intp$ via an injective homomorphism $h$. Assuming that in $\ABox_{q'_1}$ variables of $q'_i$ are replaced with their images via $h$,
    $\intp \models (\TT,\ABox_{q'_1})$. Hence, $\intp$  witnesses that $(\TT,\ABox_{q'_1})\not\models_F q_2$. 
\end{proof}

    In the above argument we need to check all homomorphic images of query $q_1$ because we adopt the Standard Names Assumption in our partially finite entailment problem. Had we not done so, we could use a single ABox that could collapse to either image of the query and the algorithm would work in deterministic polynomial time. 

\section{Conclusion}
\label{sec:ending}
With this paper we aim to initiate a~systematic study of partially finite reasoning, where the reasoning tasks are performed under the assumption that some, explicitly defined, parts of the interpretations are necessarily finite. We have shown that in the case of query entailment this mode of reasoning, subsuming both query entailment and finite query entailment, does not come with an increased cost, retaining its $2$\=/$\mathrm{ExpTime}$ complexity. 
In the future, it would be interesting to inspect other logics without finite controllability, like $\mathcal{ALCIF}$, as well as other reasoning tasks including query containment, enumeration, or direct access.

\section*{Acknowledgements}
This work was supported by Poland’s NCN grant 2018/30/E/ST6/00042.

\bibliographystyle{named}
\bibliography{references}

@inproceedings{Murlak23,
  author       = {Filip Murlak},
  editor       = {Marco Console and
                  Boris Konev},
  title        = {Finite-Model Reasoning for Graph Queries and Description Logics},
  booktitle    = {Reasoning Web. Declarative Artificial Intelligence: Knowledge, Rules,
                  Logic - 19th International Summer School 2023 Oslo, Norway, September
                  21-24, 2023, Tutorial Lectures},
  series       = {Lecture Notes in Computer Science},
  volume       = {15400},
  pages        = {23--53},
  publisher    = {Springer},
  year         = {2023},
  url          = {https://doi.org/10.1007/978-3-031-80283-6\_2},
  doi          = {10.1007/978-3-031-80283-6\_2},
  bibsource    = {dblp computer science bibliography, https://dblp.org}
}

@inproceedings{Gogacz2018FiniteQA,
  title={Finite Query Answering in Expressive Description Logics with Transitive Roles},
  author={Tomasz Gogacz and Yazm{\'i}n Ang{\'e}lica Ib{\'a}{\~n}ez-Garc{\'i}a and Filip Murlak},
  booktitle={International Conference on Principles of Knowledge Representation and Reasoning},
  year={2018},
  url={https://api.semanticscholar.org/CorpusID:51954618}
}

@inproceedings{DBLP:conf/amw/AhmetajOS16,
  author       = {Shqiponja Ahmetaj and
                  Magdalena Ortiz and
                  Mantas Simkus},
  editor       = {Reinhard Pichler and
                  Altigran Soares da Silva},
  title        = {Polynomial Datalog Rewritings for Ontology Mediated Queries with Closed
                  Predicates},
  booktitle    = {Proceedings of the 10th Alberto Mendelzon International Workshop on
                  Foundations of Data Management, Panama City, Panama, May 8-10, 2016},
  series       = {{CEUR} Workshop Proceedings},
  volume       = {1644},
  publisher    = {CEUR-WS.org},
  year         = {2016},
  url          = {https://ceur-ws.org/Vol-1644/paper19.pdf},
  bibsource    = {dblp computer science bibliography, https://dblp.org}
}

@inproceedings{DBLP:conf/aaai/0002025,
  author       = {Lorenzo Marconi and
                  Riccardo Rosati},
  editor       = {Toby Walsh and
                  Julie Shah and
                  Zico Kolter},
  title        = {Consistent Query Answering over Existential Rules with Open and Closed
                  Predicates},
  booktitle    = {AAAI-25, Sponsored by the Association for the Advancement of Artificial
                  Intelligence, February 25 - March 4, 2025, Philadelphia, PA, {USA}},
  pages        = {15083--15091},
  publisher    = {{AAAI} Press},
  year         = {2025},
  url          = {https://doi.org/10.1609/aaai.v39i14.33654},
  doi          = {10.1609/AAAI.V39I14.33654},
  bibsource    = {dblp computer science bibliography, https://dblp.org}
}

@inproceedings{DBLP:conf/kr/GogaczLOS20,
  author       = {Tomasz Gogacz and
                  Sanja Lukumbuzya and
                  Magdalena Ortiz and
                  Mantas Simkus},
  editor       = {Diego Calvanese and
                  Esra Erdem and
                  Michael Thielscher},
  title        = {Datalog Rewritability and Data Complexity of {ALCHOIF} with Closed
                  Predicates},
  booktitle    = {Proceedings of the 17th International Conference on Principles of
                  Knowledge Representation and Reasoning, {KR} 2020, Rhodes, Greece,
                  September 12-18, 2020},
  pages        = {434--444},
  year         = {2020},
  url          = {https://doi.org/10.24963/kr.2020/44},
  doi          = {10.24963/KR.2020/44},
  bibsource    = {dblp computer science bibliography, https://dblp.org}
}

@inproceedings{DBLP:conf/kr/NgoOS16,
  author       = {Nhung Ngo and
                  Magdalena Ortiz and
                  Mantas Simkus},
  editor       = {Chitta Baral and
                  James P. Delgrande and
                  Frank Wolter},
  title        = {Closed Predicates in Description Logics: Results on Combined Complexity},
  booktitle    = {Principles of Knowledge Representation and Reasoning: Proceedings
                  of the Fifteenth International Conference, {KR} 2016, Cape Town, South
                  Africa, April 25-29, 2016},
  pages        = {237--246},
  publisher    = {{AAAI} Press},
  year         = {2016},
  url          = {http://www.aaai.org/ocs/index.php/KR/KR16/paper/view/12906},
  bibsource    = {dblp computer science bibliography, https://dblp.org}
}

@article{DBLP:journals/ngc/Katsuno90,
  author       = {H. Katsuno},
  title        = {Closed World Assumptions Having Precedence in Predicates},
  journal      = {New Gener. Comput.},
  volume       = {8},
  number       = {3},
  pages        = {185--209},
  year         = {1990},
  url          = {https://doi.org/10.1007/BF03037516},
  doi          = {10.1007/BF03037516},
  bibsource    = {dblp computer science bibliography, https://dblp.org}
}

@article{mccarthy1980circumscription,
  title={Circumscription—a form of non-monotonic reasoning},
  author={McCarthy, John},
  journal={Artificial intelligence},
  volume={13},
  number={1-2},
  pages={27--39},
  year={1980},
  publisher={Elsevier}
}

@article{gelfond1989relationship,
  title={On the relationship between circumscription and negation as failure},
  author={Gelfond, Michael and Przymusinska, Halina and Przymusinski, Teodor},
  journal={Artificial Intelligence},
  volume={38},
  number={1},
  pages={75--94},
  year={1989},
  publisher={Elsevier}
}

@inproceedings{DBLP:conf/ijcai/GogaczGIJM19,
  author       = {Tomasz Gogacz and
                  V{\'{\i}}ctor Guti{\'{e}}rrez{-}Basulto and
                  Yazm{\'{\i}}n Ib{\'{a}}{\~{n}}ez{-}Garc{\'{\i}}a and
                  Jean Christoph Jung and
                  Filip Murlak},
  editor       = {Sarit Kraus},
  title        = {On Finite and Unrestricted Query Entailment beyond {SQ} with Number
                  Restrictions on Transitive Roles},
  booktitle    = {Proceedings of the Twenty-Eighth International Joint Conference on
                  Artificial Intelligence, {IJCAI} 2019, Macao, China, August 10-16,
                  2019},
  pages        = {1719--1725},
  publisher    = {ijcai.org},
  year         = {2019},
  url          = {https://doi.org/10.24963/ijcai.2019/238},
  doi          = {10.24963/IJCAI.2019/238},
  bibsource    = {dblp computer science bibliography, https://dblp.org}
}

@inproceedings{DBLP:conf/ijcai/BienvenuB19,
  author       = {Meghyn Bienvenu and
                  Pierre Bourhis},
  editor       = {Sarit Kraus},
  title        = {Mixed-World Reasoning with Existential Rules under Active-Domain Semantics},
  booktitle    = {Proceedings of the Twenty-Eighth International Joint Conference on
                  Artificial Intelligence, {IJCAI} 2019, Macao, China, August 10-16,
                  2019},
  pages        = {1558--1565},
  publisher    = {ijcai.org},
  year         = {2019},
  url          = {https://doi.org/10.24963/ijcai.2019/216},
  doi          = {10.24963/IJCAI.2019/216},
  bibsource    = {dblp computer science bibliography, https://dblp.org}
}

@techreport{benedikt2018pspace,
  title={Pspace hardness of mixed world query answering for atomic queries under guarded tgds},
  author={Benedikt, Michael and Bourhis, Pierre},
  year={2018},
  institution={Technical report, University of Oxford}
}

@inproceedings{DBLP:conf/aaai/AndreselOS20,
  author       = {Medina Andresel and
                  Magdalena Ortiz and
                  Mantas Simkus},
  title        = {Query Rewriting for Ontology-Mediated Conditional Answers},
  booktitle    = {The Thirty-Fourth {AAAI} Conference on Artificial Intelligence, {AAAI}
                  2020, The Thirty-Second Innovative Applications of Artificial Intelligence
                  Conference, {IAAI} 2020, The Tenth {AAAI} Symposium on Educational
                  Advances in Artificial Intelligence, {EAAI} 2020, New York, NY, USA,
                  February 7-12, 2020},
  pages        = {2734--2741},
  publisher    = {{AAAI} Press},
  year         = {2020},
  url          = {https://doi.org/10.1609/aaai.v34i03.5660},
  doi          = {10.1609/AAAI.V34I03.5660},
  bibsource    = {dblp computer science bibliography, https://dblp.org}
}

@inproceedings{DBLP:conf/aaai/BajraktariOS18,
  author       = {Labinot Bajraktari and
                  Magdalena Ortiz and
                  Mantas Simkus},
  editor       = {Sheila A. McIlraith and
                  Kilian Q. Weinberger},
  title        = {Combining Rules and Ontologies into Clopen Knowledge Bases},
  booktitle    = {Proceedings of the Thirty-Second {AAAI} Conference on Artificial Intelligence,
                  (AAAI-18), the 30th innovative Applications of Artificial Intelligence
                  (IAAI-18), and the 8th {AAAI} Symposium on Educational Advances in
                  Artificial Intelligence (EAAI-18), New Orleans, Louisiana, USA, February
                  2-7, 2018},
  pages        = {1728--1735},
  publisher    = {{AAAI} Press},
  year         = {2018},
  url          = {https://doi.org/10.1609/aaai.v32i1.11565},
  doi          = {10.1609/AAAI.V32I1.11565},
  bibsource    = {dblp computer science bibliography, https://dblp.org}
}

@article{DBLP:journals/jair/BonattiLW09,
  author       = {Piero A. Bonatti and
                  Carsten Lutz and
                  Frank Wolter},
  title        = {The Complexity of Circumscription in DLs},
  journal      = {J. Artif. Intell. Res.},
  volume       = {35},
  pages        = {717--773},
  year         = {2009},
  url          = {https://doi.org/10.1613/jair.2763},
  doi          = {10.1613/JAIR.2763},
  bibsource    = {dblp computer science bibliography, https://dblp.org}
}

@article{DBLP:journals/tocl/DoniniNR02,
  author       = {Francesco M. Donini and
                  Daniele Nardi and
                  Riccardo Rosati},
  title        = {Description logics of minimal knowledge and negation as failure},
  journal      = {{ACM} Trans. Comput. Log.},
  volume       = {3},
  number       = {2},
  pages        = {177--225},
  year         = {2002},
  url          = {https://doi.org/10.1145/505372.505373},
  doi          = {10.1145/505372.505373},
  bibsource    = {dblp computer science bibliography, https://dblp.org}
}

@inproceedings{DBLP:conf/ijcai/Lifschitz91,
  author       = {Vladimir Lifschitz},
  editor       = {John Mylopoulos and
                  Raymond Reiter},
  title        = {Nonmonotonic Databases and Epistemic Queries},
  booktitle    = {Proceedings of the 12th International Joint Conference on Artificial
                  Intelligence. Sydney, Australia, August 24-30, 1991},
  pages        = {381--386},
  publisher    = {Morgan Kaufmann},
  year         = {1991},
  url          = {http://ijcai.org/Proceedings/91-1/Papers/059.pdf},
  bibsource    = {dblp computer science bibliography, https://dblp.org}
}

@article{DBLP:journals/ai/KnorrAH11,
  author       = {Matthias Knorr and
                  Jos{\'{e}} J{\'{u}}lio Alferes and
                  Pascal Hitzler},
  title        = {Local closed world reasoning with description logics under the well-founded
                  semantics},
  journal      = {Artif. Intell.},
  volume       = {175},
  number       = {9-10},
  pages        = {1528--1554},
  year         = {2011},
  url          = {https://doi.org/10.1016/j.artint.2011.01.007},
  doi          = {10.1016/J.ARTINT.2011.01.007},
  bibsource    = {dblp computer science bibliography, https://dblp.org}
}

@article{DBLP:journals/jacm/MotikR10,
  author       = {Boris Motik and
                  Riccardo Rosati},
  title        = {Reconciling description logics and rules},
  journal      = {J. {ACM}},
  volume       = {57},
  number       = {5},
  pages        = {30:1--30:62},
  year         = {2010},
  url          = {https://doi.org/10.1145/1754399.1754403},
  doi          = {10.1145/1754399.1754403},
  bibsource    = {dblp computer science bibliography, https://dblp.org}
}

@article{DBLP:journals/lmcs/LutzSW19,
  author       = {Carsten Lutz and
                  Inan{\c{c}} Seylan and
                  Frank Wolter},
  title        = {The Data Complexity of Ontology-Mediated Queries with Closed Predicates},
  journal      = {Log. Methods Comput. Sci.},
  volume       = {15},
  number       = {3},
  year         = {2019},
  url          = {https://doi.org/10.23638/LMCS-15(3:23)2019},
  doi          = {10.23638/LMCS-15(3:23)2019},
  bibsource    = {dblp computer science bibliography, https://dblp.org}
}

@inproceedings{DBLP:conf/aaai/LukumbuzyaOS20,
  author       = {Sanja Lukumbuzya and
                  Magdalena Ortiz and
                  Mantas Simkus},
  title        = {Resilient Logic Programs: Answer Set Programs Challenged by Ontologies},
  booktitle    = {The Thirty-Fourth {AAAI} Conference on Artificial Intelligence, {AAAI}
                  2020, The Thirty-Second Innovative Applications of Artificial Intelligence
                  Conference, {IAAI} 2020, The Tenth {AAAI} Symposium on Educational
                  Advances in Artificial Intelligence, {EAAI} 2020, New York, NY, USA,
                  February 7-12, 2020},
  pages        = {2917--2924},
  publisher    = {{AAAI} Press},
  year         = {2020},
  url          = {https://doi.org/10.1609/aaai.v34i03.5683},
  doi          = {10.1609/AAAI.V34I03.5683},
  bibsource    = {dblp computer science bibliography, https://dblp.org}
}

@incollection{existential-rules-cali,
  author       = {Andrea Cal{\`{\i}} and
                  Georg Gottlob and
                  Thomas Lukasiewicz},
  editor       = {Roberto De Virgilio and
                  Fausto Giunchiglia and
                  Letizia Tanca},
  title        = {Datalog Extensions for Tractable Query Answering over Ontologies},
  booktitle    = {Semantic Web Information Management - {A} Model-Based Perspective},
  pages        = {249--279},
  publisher    = {Springer},
  year         = {2009},
  url          = {https://doi.org/10.1007/978-3-642-04329-1\_12},
  doi          = {10.1007/978-3-642-04329-1\_12},
  bibsource    = {dblp computer science bibliography, https://dblp.org}
}

@article{cali-framework-2012,
title = {A general Datalog-based framework for tractable query answering over ontologies},
journal = {Journal of Web Semantics},
volume = {14},
pages = {57-83},
year = {2012},
note = {Special Issue on Dealing with the Messiness of the Web of Data},
issn = {1570-8268},
doi = {https://doi.org/10.1016/j.websem.2012.03.001},
url = {https://www.sciencedirect.com/science/article/pii/S1570826812000388},
author = {Andrea Calì and Georg Gottlob and Thomas Lukasiewicz},
}

@inproceedings{calvanese-ontologies-and-databases,
  author       = {Diego Calvanese and
                  Giuseppe De Giacomo and
                  Domenico Lembo and
                  Maurizio Lenzerini and
                  Antonella Poggi and
                  Mariano Rodriguez{-}Muro and
                  Riccardo Rosati},
  editor       = {Sergio Tessaris and
                  Enrico Franconi and
                  Thomas Eiter and
                  Claudio Gutierrez and
                  Siegfried Handschuh and
                  Marie{-}Christine Rousset and
                  Renate A. Schmidt},
  title        = {Ontologies and Databases: The DL-Lite Approach},
  booktitle    = {Reasoning Web. Semantic Technologies for Information Systems, 5th
                  International Summer School 2009, Brixen-Bressanone, Italy, August
                  30 - September 4, 2009, Tutorial Lectures},
  series       = {Lecture Notes in Computer Science},
  volume       = {5689},
  pages        = {255--356},
  publisher    = {Springer},
  year         = {2009},
  doi          = {10.1007/978-3-642-03754-2\_7},
}

@article{DBLP:journals/jcss/CalvaneseGLR12,
  author       = {Diego Calvanese and
                  Giuseppe De Giacomo and
                  Maurizio Lenzerini and
                  Riccardo Rosati},
  title        = {View-based query answering in Description Logics: Semantics and complexity},
  journal      = {J. Comput. Syst. Sci.},
  volume       = {78},
  number       = {1},
  pages        = {26--46},
  year         = {2012},
  url          = {https://doi.org/10.1016/j.jcss.2011.02.011},
  doi          = {10.1016/J.JCSS.2011.02.011},
  bibsource    = {dblp computer science bibliography, https://dblp.org}
}

@inproceedings{DBLP:conf/sdb/Perez-UrbinaMH08,
  author       = {H{\'{e}}ctor P{\'{e}}rez{-}Urbina and
                  Boris Motik and
                  Ian Horrocks},
  editor       = {Klaus{-}Dieter Schewe and
                  Bernhard Thalheim},
  title        = {Rewriting Conjunctive Queries over Description Logic Knowledge Bases},
  booktitle    = {Semantics in Data and Knowledge Bases, Third International Workshop,
                  {SDKB} 2008, Nantes, France, March 29, 2008, Revised Selected Papers},
  series       = {Lecture Notes in Computer Science},
  volume       = {4925},
  pages        = {199--214},
  publisher    = {Springer},
  year         = {2008},
  url          = {https://doi.org/10.1007/978-3-540-88594-8\_11},
  doi          = {10.1007/978-3-540-88594-8\_11},
  bibsource    = {dblp computer science bibliography, https://dblp.org}
}

@inproceedings{amendola-finite-control,
  author       = {Giovanni Amendola and
                  Nicola Leone and
                  Marco Manna},
  editor       = {J{\'{e}}r{\^{o}}me Lang},
  title        = {Finite Controllability of Conjunctive Query Answering with Existential
                  Rules: Two Steps Forward},
  booktitle    = {Proceedings of the Twenty-Seventh International Joint Conference on
                  Artificial Intelligence, {IJCAI} 2018, July 13-19, 2018, Stockholm,
                  Sweden},
  pages        = {5189--5193},
  publisher    = {ijcai.org},
  year         = {2018},
  doi          = {10.24963/IJCAI.2018/719},
  }

@inproceedings{DBLP:conf/ijcai/AmendolaLMV18,
  author       = {Giovanni Amendola and
                  Nicola Leone and
                  Marco Manna and
                  Pierfrancesco Veltri},
  editor       = {J{\'{e}}r{\^{o}}me Lang},
  title        = {Enhancing Existential Rules by Closed-World Variables},
  booktitle    = {Proceedings of the Twenty-Seventh International Joint Conference on
                  Artificial Intelligence, {IJCAI} 2018, July 13-19, 2018, Stockholm,
                  Sweden},
  pages        = {1676--1682},
  publisher    = {ijcai.org},
  year         = {2018},
  url          = {https://doi.org/10.24963/ijcai.2018/232},
  doi          = {10.24963/IJCAI.2018/232},
  bibsource    = {dblp computer science bibliography, https://dblp.org}
}

@article{DBLP:journals/dase/QinZYWFX21,
  author       = {Xiaoyu Qin and
                  Xiaowang Zhang and
                  Muhammad Qasim Yasin and
                  Shujun Wang and
                  Zhiyong Feng and
                  Guohui Xiao},
  title        = {{SUMA:} {A} Partial Materialization-Based Scalable Query Answering
                  in {OWL} 2 {DL}},
  journal      = {Data Sci. Eng.},
  volume       = {6},
  number       = {2},
  pages        = {229--245},
  year         = {2021},
  url          = {https://doi.org/10.1007/s41019-020-00150-0},
  doi          = {10.1007/S41019-020-00150-0},
  bibsource    = {dblp computer science bibliography, https://dblp.org}
}

@inproceedings{danecki1984propositional,
  title={Propositional dynamic logic with strong loop predicate},
  author={Danecki, Ryszard},
  booktitle={International Symposium on Mathematical Foundations of Computer Science},
  pages={573--581},
  year={1984},
  organization={Springer}
}

@article{gogacz-control-17,
title = {Converging to the chase – A tool for finite controllability},
journal = {Journal of Computer and System Sciences},
volume = {83},
number = {1},
pages = {180-206},
year = {2017},
issn = {0022-0000},
doi = {https://doi.org/10.1016/j.jcss.2016.08.001},
author = {Tomasz Gogacz and Jerzy Marcinkowski},
}

@inproceedings{figueira-control-20,
  author       = {Diego Figueira and
                  Santiago Figueira and
                  Edwin Pin Baque},
  editor       = {Diego Calvanese and
                  Esra Erdem and
                  Michael Thielscher},
  title        = {Finite Controllability for Ontology-Mediated Query Answering of {CRPQ}},
  booktitle    = {Proceedings of the 17th International Conference on Principles of
                  Knowledge Representation and Reasoning, {KR} 2020, Rhodes, Greece,
                  September 12-18, 2020},
  pages        = {381--391},
  year         = {2020},
  doi          = {10.24963/KR.2020/39},
}

@inproceedings{BednarczykK22,
  author       = {Bartosz Bednarczyk and
                  Emanuel Kieronski},
  title        = {Finite Entailment of Local Queries in the {Z} Family of Description
                  Logics},
  booktitle    = {Thirty-Sixth {AAAI} Conference on Artificial Intelligence, {AAAI}
                  2022, Thirty-Fourth Conference on Innovative Applications of Artificial
                  Intelligence, {IAAI} 2022, The Twelveth Symposium on Educational Advances
                  in Artificial Intelligence, {EAAI} 2022 Virtual Event, February 22
                  - March 1, 2022},
  pages        = {5487--5494},
  publisher    = {{AAAI} Press},
  year         = {2022},
  url          = {https://doi.org/10.1609/aaai.v36i5.20487},
  doi          = {10.1609/AAAI.V36I5.20487}
}

@article{garcia-finite-s-arxiv,
  author       = {Yazm{\'{\i}}n Ib{\'{a}}{\~{n}}ez{-}Garc{\'{\i}}a and
                  Jean Christoph Jung and
                  Vincent Michielini and
                  Filip Murlak},
  title        = {Revisiting Conjunctive Query Entailment for {S}},
  journal      = {CoRR},
  volume       = {abs/2511.07933},
  year         = {2025},
  doi          = {10.48550/ARXIV.2511.07933},
  eprinttype    = {arXiv},
  eprint       = {2511.07933},
}

@article{DBLP:journals/ai/Gutierrez-Basulto23,
  author       = {V{\'{\i}}ctor Guti{\'{e}}rrez{-}Basulto and
                  Yazm{\'{\i}}n Ib{\'{a}}{\~{n}}ez{-}Garc{\'{\i}}a and
                  Jean Christoph Jung and
                  Filip Murlak},
  title        = {Answering regular path queries mediated by unrestricted {SQ} ontologies},
  journal      = {Artif. Intell.},
  volume       = {314},
  pages        = {103808},
  year         = {2023},
  url          = {https://doi.org/10.1016/j.artint.2022.103808},
  doi          = {10.1016/J.ARTINT.2022.103808},
  bibsource    = {dblp computer science bibliography, https://dblp.org}
}

@inproceedings{eiter-unrestricted-s,
author = {Eiter, Thomas and Lutz, Carsten and Ortiz, Magdalena and \v{S}imkus, Mantas},
title = {Query answering in description logics with transitive roles},
year = {2009},
publisher = {Morgan Kaufmann Publishers Inc.},
address = {San Francisco, CA, USA},
booktitle = {Proceedings of the 21st International Joint Conference on Artificial Intelligence},
pages = {759–764},
numpages = {6},
location = {Pasadena, California, USA},
series = {IJCAI'09}
}

@article{shiq-infinite-s,
  author       = {Birte Glimm and
                  Carsten Lutz and
                  Ian Horrocks and
                  Ulrike Sattler},
  title        = {Conjunctive Query Answering for the Description Logic {SHIQ}},
  journal      = {J. Artif. Intell. Res.},
  volume       = {31},
  pages        = {157--204},
  year         = {2008},
  doi          = {10.1613/JAIR.2372},
}

@inproceedings{shoq-infinite-s,
  author       = {Birte Glimm and
                  Ian Horrocks and
                  Ulrike Sattler},
  editor       = {Diego Calvanese and
                  Enrico Franconi and
                  Volker Haarslev and
                  Domenico Lembo and
                  Boris Motik and
                  Anni{-}Yasmin Turhan and
                  Sergio Tessaris},
  title        = {Conjunctive Query Entailment for {SHOQ}},
  booktitle    = {Proceedings of the 2007 International Workshop on Description Logics
                  (DL2007), Brixen-Bressanone, near Bozen-Bolzano, Italy, 8-10 June,
                  2007},
  series       = {{CEUR} Workshop Proceedings},
  volume       = {250},
  publisher    = {CEUR-WS.org},
  year         = {2007},
}

@article{konig-lemma,
    author = {Dénes Kőnig},
    title = {Über eine Schlussweise aus dem Endlichen ins Unendliche.},
    journal = { Acta Sci. Math. (Szeged), 3(2-3):121–130},
    year = {1927}
}

@book{introduction-to-graph-theory,
    author = {Robin J. Wilson},
    title = {Introduction to Graph Theory, Wilson},
    publisher = {Longman},
    year = {2010}
}


\appendix

\clearpage
\section{Missing proofs for Section~\ref{sec:single-role}}

\subsection{Proof of Lemma~\ref{lem:strong}}

\lemStrong*
\begin{proof}
We first show that if $\intp' \models \TBox$ then $\intp \models \TBox$. 

By definition of TBox satisfaction, $\intp' \models \TBox$ if for each concept inclusion $C \sqsubseteq D$, $C^{\intp} \subseteq D^{\intp}$. We show that for all three normal form concept inclusion shapes, if $\intp' \models C \sqsubseteq D$ then $\intp \models C \sqsubseteq D$. 

\begin{itemize}
    \item If the inclusion is of the shape $A_1 \sqcap A_2 \sqcap \ldots \sqcap A_n \sqsubseteq B_1 \sqcup B_2 \sqcup \ldots \sqcup B_m$, then for all $a \in A_1^{\intp'} \sqcap A_2^{\intp'} \sqcap \ldots \sqcap A_n^{\intp'}$ there exists a concept $C$ such that $a' \in C^{\intp'}$ and $C = B_i$ for some $1 \leq i \leq m$. Let $a \in \Delta^{\intp}$ such that $h(a) = a'$. Since h is $t$-strong, we have that $\types{\intp}{a} = \types{\intp'}{a'}$, and so in particular $a$ must belong to $_1^{\intp} \sqcap A_2^{\intp} \sqcap \ldots \sqcap A_n^{\intp}$ and $C^{\intp}$. So the inclusion is satisfied for $\intp$. 

    \item If the inclusion is of the shape $A \sqsubseteq \forall t.B$, assume for contradiction that there exist two elements $a, b$ in $\Delta^{\intp}$ such that $a \in A^{\intp}, (a,b) \in t^{\intp}$ but $b \not\in B^{\intp}$. Let $a', b'$ be the two elements of $\Delta^{\intp'}$ such that $h(a) = a'$ and $h(b) = b'$. Since $h$ is a $t$-strong homomorphism, we know that $\types{\intp}{x} = \types{\intp'}{h(x)}$ for all $x \in \Delta^{\intp}$ and for all $(x, y) \in t^{\intp}$ it must also be the case that $(h(x), h(y)) \in t^{\intp'}$. In particular, we have that $a' \in A^{\intp'}, (a',b') \in t^{\intp'}$ and, since $\intp'$ satisfies the inclusion, $b'$ must be in $B^{\intp'}$. Since $b$ has the same type as $b'$, it must be in $B^{\intp}$. This contradicts our assumption so the inclusion is satisfied for $\intp$. 

    \item If the inclusion is of the shape $A \subseteq \exists t.B$, let $a \in \Delta^{\intp}, a' \in \Delta^{\intp'}$ be two elements such that $h(a) = a'$ and $a' \in A^{\intp'}$. Since $\intp'$ satisfies the inclusion, there exists an element $b' \in B^{\intp'}$ such that $(a',b') \in t^{\intp'}$. Note that this implies $B \in \rch{t}{\intp'}(a')$. Since $h$ is $t$-strong, $B$ must also be in $\rch{t}{\intp}(a)$. So there is a path via $t$-edges from $a$ to an element $b \in B^{\intp}$. Since $t$ is transitive, this path is also witnessed by an edge, so we have that $(a, b) \in t^{\intp}$ and the inclusion is satisfied for $\intp$.
\end{itemize}

We now show that if $\intp' \not\models Q$ then $\intp \not\models Q$. Without loss of generality we can assume that $Q$ is a single conjunctive query.
Assume for contradiction that $\intp \models Q$. Then there exists a match $\pi$ of $Q$ in $\intp$ that maps each variable of $Q$ to an element of $\Delta^{\intp}$. Let $\pi'$ be the function defined by $\pi'(x) = h(\pi(x))$. We show that if $\pi$ is a match of $Q$ in $\intp$ then $\pi'$ is a match of $Q$ in $\intp'$. Let $q$ be an atom of $Q$. 

If $q$ is of the shape $A(x)$ then there exists an element $a \in \Delta^{\intp}$ such that $\pi(x) = a$ and $a \in A^{\intp}$. Let $a' = \pi'(x)$. Since $h$ is $t$-strong, $a'$ must be in $A^{\intp'}$ and so the atom is also satisfied in $\intp'$.

If $q$ is of the shape $t(x,y)$ then there exist two elements $b, c \in \Delta^{\intp}$ such that $\pi(x) = b, \pi(y) = c$ and $(b,c) \in t^{\intp}$. Let $b' = \pi'(b)$ and $c' = \pi'(c)$. Since $h$ is a homomorphism, we know that $(h(b), h(c)) \in t^{\intp'}$ so $(b',c')\in t^{\intp'}$ and the atom is satisfied in $\intp'$. 

Since all atoms of $Q$ are satisfied in $\intp'$, we have shown that $\intp' \models Q$ which contradicts our initial assumption. 

Finally, we show that if If $h$ is surjective and $\intp\models \TBox$, then $\intp' \models \TBox$. We proceed as above, by showing that this holds for each normal-form concept inclusion shapes. 

\begin{itemize}
    \item If the inclusion is of the shape $A_1 \sqcap A_2 \sqcap \ldots \sqcap A_n \sqsubseteq B_1 \sqcup B_2 \sqcup \ldots \sqcup B_m$ then for all $a \in A_1^{\intp} \sqcap A_2^{\intp} \sqcap \ldots \sqcap A_n^{\intp}$ there exists a concept $C$ such that $a \in C^{\intp}$ and $C = B_i$ for some $1 \leq i \leq m$. Let $a' = h(a)$. Since $h$ is a $t$-strong homomorphism, the type of $a'$ is preserved and so in particular $a'$ must belong to $A_1^{\intp'} \sqcap A_2^{\intp'} \sqcap \ldots \sqcap A_n^{\intp'}$ and $C^{\intp'}$. So the inclusion is satisfied for $\intp'$. 
    
    \item If the inclusion is of the shape $A \sqsubseteq \forall t.B$, then for all pairs $(a,b) \in t^{\intp}$ such that $a \in A^{\intp}$, we have that $b \in B^{\intp}$. Let $(a',b')$ be a pair of elements of $t^{\intp'}$ such that $a \in A^{\intp'}$ but $b \not\in B^{\intp'}$. As $b$ and $b'$ do not have the same type, and since $h$ is a $t$-strong homomorphism, $b'$ cannot be the image of $b$ in $h$. Since $h$ is assumed to be surjective, there must exist another element $c \in \Delta^{\intp}$ such that $b' = h(c)$. As edges are preserved by homomorphism, we have that $(a,c) \in t^{\intp}$ and since $\intp \models \TBox$, $c \in B^{\intp}$. As $h$ is $t$-strong, this implies that $b' \in B^{\intp'}$ which contradicts our assumption so the inclusion is also satisfied in $\intp'$.

    \item If the inclusion is of the shape $A \sqsubseteq \exists t.B$, then for all $a \in A^{\intp}$, there exists a $b \in B^{\intp}$ such that $(a,b) \in t^{\intp}$. Let $a' = h(a)$ and $b' = h(b)$. Since $h$ is a $t$-strong homomorphism, we know that $\types{\intp}{x} = \types{\intp'}{h(x)}$ for all $x \in \Delta^{\intp}$ and for all $(x, y) \in t^{\intp}$ it must also be the case that $(h(x), h(y)) \in t^{\intp'}$. In particular we have that $a' \in A^{\intp'}, b' \in B^{\intp'}$ and $(a',b') \in t^{\intp'}$, So the inclusion is satisfied for $\intp'$.
\end{itemize}

\end{proof}

\subsection{Proof of Lemma~\ref{lem:unravelling}}

\lemUnravelling*

\begin{proof}
First, notice that since the construction of $\intp^*$ does not add new elements to the domain, $\Delta^{\intp^*} = \Delta^{\intp}$ so we refer to both domains without distinction. We will also use the fact that reachable concepts and element types are preserved by transitive closure.

We refer to elements of $\Delta^{\widetilde \intp}$ obtained from $\Delta^{\intp}_F \cup \{a\}$ as \emph{single-element} nodes, and to those of the form $(K, p)$ as \emph{cluster} nodes. 

Let $h$ be the function from $\Delta^{\widetilde \intp}$ to $\Delta^{\intp^{*}}$ defined as follows:
\begin{itemize}
    \item For $e \in \Delta^{\widetilde \intp}$ a single-element node, $h(e) = e$
    \item For $e \in \Delta^{\widetilde \intp}$ a cluster node of the shape $(K, pu)$ (with $p$ possibly empty), $h(e) = u$
\end{itemize}

We show that $h$ is 
\begin{enumerate}
    \item[(1).] a homomorphism from $(\widetilde \intp, a, \widetilde \lambda)$ to $(\intp^*, a, \lambda)$,
    \item[(2).] $t$-strong,
    \item[(3.)] surjective if all elements in $\intp$ are reachable from $a$.
\end{enumerate}

(1). Since, by definition of $h$, for all single-element nodes $h(e)=e$ and, by definition of quasi-unravelling, types are preserved for single-element nodes, we immediately have that $e \in A^{\widetilde \intp}$ implies $h(e) \in A^{\intp^{*}}$. For cluster nodes of the shape $(K, pu)$, the type is determined by the last element of the path $p$, i.e. $(K, pu) \in A^{\widetilde \intp}$ if $u \in A^{\intp}$. Since $h((K, pu)) = u$, types are preserved by $h$ for cluster nodes, and so also for all elements of the domain of $\widetilde \intp$.

Let $(e,e')$ be an edge in $\widetilde \intp$. By construction of $\widetilde \intp$, edges can occur when:
\begin{itemize}
    \item $e$ and $e'$ are both single-element nodes. In this case, $(h(e), h(e')) \in t^{\intp^{*}}$ by def of $t^{\widetilde \intp}$. 
    \item $e$ is a single-element node from some critical cluster $K$ and $e'$ is a cluster node of the shape $(K, pv)$. In this case, $h(e) = e$ and $h((K, pv)) = v$ and $v$ is reachable from the successor of some element in $K$ via $p$ in $\intp$. Since $K$ is a clique, all elements of $K$ are reachable from $e$, in particular the predecessor of $p$ in $K$, and so $v$ is also reachable from $e$. As $\intp^{*}$ is the transitive closure of the unique role $t$, there is a $t$-edge from $e$ to $v$, i.e. from $h(e)$ to $h((K, pv))$.
    \item $e$ are both cluster nodes of the shape $(K, pu)$ and $(K, puv)$ respectively. By construction of $\widetilde \intp$, the node $(K, puv)$ belongs to $\Delta^{\widetilde \intp}$ if $puv$ is a path of non-critical elements in $\intp$. In particular, there must be an edge between $u$ and $v$, and since $h((K, pu)) = u$, and $h((K, puv)) = v$, we have that $(h((K, pu)), h((K, puv)) \in t^{\intp^{*}}$.
    \item $e$ is a cluster node of the shape $(K, pu)$, $e'$ is a single-element node from some critical cluster $K'$ and $(u,v)$ is an edge in $\intp$. Since $h((K, pu)) = u$ and $h(v) =v$, and edges are preserved by transitive closure, we have that $(h((K, pu)), h(v) \in t^{\intp^{*}}$.
\end{itemize}

(2). Since $h$ is a homomorphism, we have that $\types{\widetilde \intp}{e} \subseteq \types{\intp^{*}}{h(e)}$ and $\rch{t}{\widetilde \intp}(e) \subseteq \rch{t}{\intp^{*}}(h(e))$. As concept names in $\widetilde \intp$ are directly inherited from $\intp$, we immediately have $\types{\widetilde \intp}{e} = \types{\intp^{*}}{h(e)}$. 

To see that $\rch{t}{\widetilde \intp}(a) = \rch{t}{\intp^{*}}(h(a))$, assume for contradiction that there is a concept name $C$ reachable in $\intp^{*}$ from some element $f$ that is not reachable in $\widetilde \intp$ for some $e$ such that $h(e) = f$. By definition of $\rch{}{}$, there exists an edge in $\intp^{*}$ from $f$ to an element $f'$ such that $f' \in C^{\intp^{*}}$. We show case-by-case that this leads to a contradiction.

\begin{itemize}
    \item If $f$ is a critical element of $\Delta^{\intp^{*}}$ then $e$ is a single-element node. \begin{itemize}
        \item If $f'$ is also a critical element of $\Delta^{\intp^{*}}$ then all $e' \in \Delta^{\widetilde \intp}$ such that $h(e') = f'$ are single-element nodes. Since $(f,f') \in t^{\intp^{*}}$, by definition of $t^{\widetilde \intp}$ we have that $(e,e') \in t^{\widetilde \intp}$. As $h(e') = f'$ and types are preserved under $h$ (as shown above), we have that $\types{\widetilde \intp}{e'} = \types{\intp^{*}}{f'}$ and, in particular, $e' \in C^{\widetilde \intp}$.
        \item If $f'$ is not a critical element, then let $K_f$ be the cluster of $f$ in $\Delta^{\intp^{*}}$. Since $f'$ is reachable from $f$, the pair $(K_f, f')$ is a legal node of $\Delta^{\widetilde \intp}$, and by definition of $t^{\widetilde \intp}$, there is an edge between $e$ and $(K_f, f')$. As $h((K_f, f')) = f'$ and types are preserved under $h$, we get that $(K_f, f') \in C^{\widetilde \intp}$.
    \end{itemize}

    \item If $f$ is not a critical element, them $e$ is a cluster node of the shape $(K, pf)$ for some $K$ and (possibly empty) $p$.

    \begin{itemize}
        \item If $f'$ is also not a critical element, then $(K, pff')$ is a legal cluster node of $\Delta^{\widetilde \intp}$, and by definition of $t^{\widetilde \intp}$, there is an edge $( (K, pf), (K, pff'))$ in $t^{\widetilde \intp}$. Once again by preservation of types under $h$, we get that $(K, pff') \in C^{\widetilde \intp}$.
        \item If $f'$ is a critical element, then all $e' \in \Delta^{\widetilde \intp}$ such that $h(e') = f'$ are single-element nodes. By definition of $t^{\widetilde \intp}$, since $(f, f') \in t^{\intp^{*}}$, $(K, pf) \in \Delta^{\widetilde \intp}$ and $f' \in \Delta_F^{\intp^{*}}$, there is an edge between $e$ and $e'$ in $t^{\widetilde \intp}$ and by preservation of types under $h$, $e' \in C^{\widetilde \intp}$.
    \end{itemize}
\end{itemize}

(3.) Assume for contradiction that all elements in $\intp$ are reachable from $a$ but $h$ is not surjective, i.e. there exists an element $f \in \Delta^{\intp^{*}}$ such that for all elements $e \in \Delta^{\widetilde \intp}, ~ h(e) \neq f$. Notice that since $\intp^{*}$ is transitively closed, whenever an element $a$ can reach an element $b$ in $\intp$, there is an edge $(a,b) \in t^{\intp^{*}}$. If $f$ is a critical element of $\intp$, then by definition of quasi-unravelling, $f$ is also an element of $\Delta^{\widetilde \intp}$ and, by definition of $h$, $h(f) = f$. Moreover, since $f$ is reachable from $a$, there is an edge $(a,f) \in t^{\widetilde \intp}$. If $f$ is not a critical element, then once again by definition of quasi-unravelling, $(K^{\intp}_a, f)$ is a valid node of $\Delta^{\widetilde \intp}$ and there is an edge $(a, (K^{\intp}_a, f)) \in t^{\widetilde \intp}$. By definition of $h$, $h((K^{\intp}_a, f)) = f$.
\end{proof}

\subsection{Proof of Lemma~\ref{cor:unravelling}}

\CorUnravelling*

\begin{proof}
This is a direct consequence of Lemmas~\ref{lem:strong} and ~\ref{lem:unravelling}. Indeed, by Lemma~\ref{lem:unravelling} and the hypothesis that all elements are reachable from $a$, we know that there exists a surjective $t$-strong homomorphism from $(\widetilde{\mathcal{I}}, a)$ to $(\mathcal{I}^{*}, a)$. By instantiating the implications of Lemma~\ref{lem:strong} with $(\widetilde{\mathcal{I}}, a)$ for $\mathcal{I}$ and $(\mathcal{I}^{*}, a)$ for $\mathcal{I}'$, we immediately get the above claim.
\end{proof}

\subsection{Proof of Lemma~\ref{lem:shortPaths}}

\lemShortPaths*

\begin{proof} We call an element $u$ of $\intp$ \emph{dispensable} if $\types{\intp}{u} \subseteq \rch{t}{\intp}(u)$, and \emph{indispensable} otherwise. {Because $t$ is transitive,} removing a~dispensable element does not affect concepts reachable from other elements: if $u \in \Delta^\intp$ is dispensable and  $\intp'$ is the  restriction of $\intp$ to $\Delta^\intp \setminus \{u\}$, then  $\rch{t}{\intp'}(v) = \rch{t}{\intp}(v)$ for all $v\in \Delta^{\intp'}$.

We claim that every simple path that only visits indispensable elements has length at most $|\conceptsIn{\intp}|$. Indeed, consider a~simple path $u_1 u_2 \dots u_m$ with $u_1, u_2, \dots, u_m$ indispensable. Because $u_{i+1}, u_{i+2}, \dots, u_m$ are successors of $u_i$ and are different from $u_i$, we have $\types{\intp}{u_i} \not\subseteq \bigcup_{j=i+1}^m \types{\intp}{u_j} \subseteq \rch{t}{\intp}(u_i)$ for all $i$. In consequence, 
\[\conceptsIn{\TBox} \supseteq \bigcup_{j=1}^m \types{\intp}{u_j} \supset \dots \supset \bigcup_{j=m}^m \types{\intp}{u_j} \supset \emptyset\,.\]
(The last strict inclusion holds because elements with empty types are dispensable.) This implies that $m \leq |\conceptsIn{\TBox}|$.


Now it suffices to ensure that $\JJ$ contains no dispensable critical elements different from $b$. We can construct a~suitable $\JJ$ by removing from $\intp$ dispensable critical elements different from $b$ one by one, until none are left.
\end{proof}
\clearpage
\section{Missing proofs for Section~\ref{sec:PartiallyfiniteTheorem}}
\label{app:PartiallyfiniteTheorem}
\subsection{Proof of Theorem \ref{thm:pfColourBlocking}}

Our goal is to show the following theorem.


\thmColourBlocking*

The rest of this section is devoted to a proof of this theorem.
{For $\distN = 0$ the statement is clear. Assume that $\distN > 0$ and }
that $\intp$ is coloured by $\kappa\from \dom(\kappa)\to C$ in a~$K$\=/sparse way.

Due to the existence of the {natural} quotient homomorphism $h_{\sim_K}\from \intp\to\intp/{\sim_K}$,
the ``only if'' part of the implication is 
clearly true. 
{ For the other direction, let } $h\from \intq \to\intp/{\sim_K}$ 
be a~homomorphism, we will show that there exists a~homomorphism from $\intq$ into $\intp$.

{ We construct the desired homomorphism inductively. Starting from single elements of $\intq$, we combine the partial homomorphism to bigger pieces, finally extending them to the whole interpretation. Before we describe the inductive step of the construction, we define some useful structures and operations used in the construction.}

\paragraph*{Pieces.}
We 
{ start with} turning $\intq$ into a~coloured interpretation by defining a~colouring function $\theta$ from a~subset of $\Delta^\intq$ into~$C${  in a way that agrees with both the homomorphism $h$ and colouring $\kappa$}.

For every element $e\in \Delta^\intq$ consider the value $h(e)$. If~$h(e)$ is a~non\=/coloured
element in $\intp/{\sim_K}$ { then, by definition of colour preserving,  $[h(e)]_{\sim_K} = \{h(e)\}$}
and $e$ is a~non\=/coloured element of $\intq$, i.e.~$e\notin\dom(\theta)$. On the other hand, if $h(e)$~is coloured then
$h(e)=[x]_{\sim_K}$ for some $x\in\dom(\kappa)$ and we define $\theta(e)$ as $\kappa(x)$ {, which makes}~$e$ a~coloured element of~$\intq$.

{ We} divide { the domain} $\Delta^\intq$ into two disjoint parts:
$\Del$~and $\Nel$, where $\Del \eqdef \dom(\theta)$ contains coloured elements and $\Nel~\eqdef~\Delta^\intq\setminus \Del$ contains non\=/coloured ones { in $\intq$}.

A~\emph{piece} is any subinterpretation $\intq'$ of $\intq$ such that if~$e~\in~\Delta^{\intq'}$ then for every concept name $A$ we have $e\in A^{\intq'}\Leftrightarrow e\in A^{\intq}$, i.e.~the unary types must be the same. Thus, a~piece is determined by its domain and the interpretations~$r^{\intq'}$ of 
all relevant role names $r$. 
{   Of course, since $\intq'$ is a subinterpretation of $\intq$, necessarily $r^{\intq'}\subseteq r^{\intq}$ for every role $r$. Moreover, we can extend the colouring $\theta$ to pieces of $\intq$ by restricting $\theta$ the respective domains. Hence, from now on, we will also assume that pieces of $\intq$ are coloured by~$\theta$.}

\paragraph*{Unions.}

During the induction process we will 
{combine pieces to expand the homomorphism in an iterative process. To combine pieces we will use a union operation.} 
A~union {$\intq_1\cup\intq_2$} of two interpretations $\intq_1$ and $\intq_2$ 
is the interpretation 
{ with}
\begin{itemize}\itemsep=0pt
    \item the domain $\Delta^{\intq_1\cup\intq_2} \eqdef \Delta^{\intq_1}\cup\Delta^{\intq_2}$, where
    \item for each concept name $A$, $A^{\intq_1\cup\intq_2}\eqdef A^{\intq_1}\cup A^{\intq_2}$,
    \item and for each role name $r$, $r^{\intq_1\cup\intq_2}\eqdef r^{\intq_1}\cup r^{\intq_2}$.
\end{itemize}
The domains $\Delta^{\intq_1}$ and $\Delta^{\intq_2}$ {are not necessarily} disjoint. However, if $\intq_1$ and $\intq_2$ are both pieces of $\intq$ then, by definition of a piece, $A^{\intq_i}=A^\intq\cap \Delta^{\intq_i}$ for $i=1,2$. 
Therefore, the unary types of elements are preserved when taking unions of pieces: {for $i=1,2$, for every concept name $A$, and every element $e \in \Delta^{\intq_i}$: $e \in A^{\intq_1\cup\intq_2}$ if and only if $e\in A^{\intq_i}$.}

We extend the above definition to arbitrary unions of interpretations { in the natural way}.

\paragraph*{Basic set.}

{Now we prepare the base step of the inductive construction.}
For this reason, we define a~certain set of \emph{basic pieces} of $\intq$, which will be used as a~base 
of the construction.

{For each coloured element $e\in \Del$ define a~trivial piece $\intq_e$ with domain $\{e\}$ and empty interpretation of every role name $r$, i.e. for every role name $r^{\intq_e} = \emptyset$.}
Intuitively, this piece consists of this single element and no roles.

Let $R$ be the set of pairs $(e,e')\in \Del\times \Del$ where $e,e'$ are coloured elements connected by an~edge (in any direction). For each $(e,e')\in R$ (possibly $e=e'$), define the~piece $\intq_{(e,e')}$ as $\intq\restr \{e,e'\}$. This piece contains both coloured elements $e$, $e'$ and all edges which connect them, i.e.~for every role name $r$, {$r^{\intq'}=r^{\intq}\,\restr\,{\{e,e'\}}$}.


{
For each non\=/coloured element $e\in \Nel$ define
$\intq_e$ as the interpretation $\intq\restr (\Nel(e)\cup \Del(e))$ \textbf{with all edges inside $\Del(e){\times}\Del(e)$ removed}, where
\begin{itemize}\itemsep=0pt
    \item $\Nel(e)$ is the set of all non\=/coloured elements $e'$ such $d(e,e')<\infty$ in $\intq\restr \Nel$, i.e.~there is a path between $e$ and $e'$ which only uses non\=/coloured elements of $\intq$;
    \item and $\Del(e)$ is the set of all coloured elements $f\in \Del$ such that there is an~element $e'\in \Nel(e)$ such that $f$ and $e'$ are connected by an~edge.
\end{itemize}

}


{
Note that for every pair of non\=/coloured elements $e_1,e_2$, either the neighbourhoods $\Nel(e_1)$ and $\Nel(e_2)$ are disjoint or $\Nel(e_1) \cap \Nel(e_2) \neq \emptyset$ and the pieces $\intq_{e_1}$, $\intq_{e_2}$ coincide, i.e. $\intq_{e_1} =\intq_{e_2}$.

The remaining types of basic pieces, i.e. $\intq_e$ for $e\in \Del$ and $\intq_{(e,e')}$, contain only coloured elements.
}

Consider the set of pieces
\[P=\{\intq_e\mid e\in\Delta^\intq\}\cup \{\intq_{(e,e')}\mid (e, e')\in R\}.\]
Here, the first set contains pieces of two types, depending on whether $e\in \Del$ or $e\in \Nel$, and the second set contains only the single\=/edge pieces. Clearly, $|P|\leq \distN + \distN^2\leq 2\cdot \distN^2$.

\paragraph*{Induction.}
Now we are ready to describe the inductive process of the construction. {We will process set $P$ iteratively until the pieces in $P$ have pairwise disjoint domains.
In each step we will replace some of the pieces in the set $P$ with their union.}
This will effectively reduce the size of $P$ each step $s=0,1,\ldots,N\leq |P|\leq  2\cdot \distN^2$ and will construct a sequence of sets of pieces $P=P_0,P_1,\ldots,P_N$. During these steps we will ensure the following invariants:
\begin{enumerate}
    \item The set $P_s$ has at most $|P|{-}s$ elements.
    \item Each piece in $P_s$ is connected as a~graph.
    \item If $\intq',\intq''\in P_s$ are distinct pieces then they do not share {non\=/coloured elements.} 
    \item The union $\bigcup P_s$ treated as an~interpretation equals $\intq$ (taking into account domain, concepts, and roles).
    \item For each piece $\intq'\in P_s$ there exists a~coloured homomorphism $h_{\intq'}\from \intq'\to\intp$.
    \item Consider $f_1,f_2\in \Del$ that are two coloured elements such that $f_1\in\Delta^{\intq_1}$ and $f_2\in\Delta^{\intq_2}$ with $\intq_1,\intq_2\in P_s$. If $\theta(f_1)=\theta(f_2)$ then $h_{\intq_1}(f_1)$ and $h_{\intq_2}(f_2)$ are $\sim_{\distN\cdot (|P|{-}s)}$\=/equivalent.
\end{enumerate}

All the conditions except the last one simply express how the pieces $P_s$ split $\intq$. The last condition is crucial: it preserves the ability to glue pieces whenever they share elements of the same colour $\theta(f_1)=\theta(f_2)$.

\paragraph*{Induction base.} We begin with the induction base.

\begin{lemma}
    The above invariants are met when $s=0$ (i.e.~for $P_0$).
\end{lemma}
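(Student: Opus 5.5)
We check the six invariants one by one for $P_0=P$, using the definitions of the basic pieces and of the colouring $\theta$, which is induced by $h$ and $\kappa$.

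Invariants 1--4 are structural. Invariant 1 holds trivially since $|P_0|=|P|$. For invariant 2, the single-element pieces $\intq_e$ with $e\in\Del$ and the edge pieces $\intq_{(e,e')}$ are connected by construction. A piece $\intq_e$ with $e\in\Nel$ is connected because $\Nel(e)$ is connected in $\intq\restr\Nel$, and each $f\in\Del(e)$ is adjacent to some element of $\Nel(e)$ via an edge that is kept, since only edges inside $\Del(e)\times\Del(e)$ were removed. For invariant 3, the only pieces containing non-coloured elements are the $\intq_e$ with $e\in\Nel$, and two such pieces are either equal (so they count as one element of the set $P$) or have disjoint $\Nel$-parts. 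For invariant 4, every element lies in some basic piece and unary types are inherited from $\intq$. For edges, an edge between two coloured elements lies in some $\intq_{(e,e')}$. An edge with at least one non-coloured endpoint $e$ lies in $\intq_e$: if both endpoints are non-coloured they lie in the same $\Nel(e)$, and otherwise the coloured endpoint belongs to $\Del(e)$.

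For invariant 5 we build, for each basic piece, a colour-preserving homomorphism into $\intp$ by lifting $h$ through $h_{\sim_K}$.
\begin{itemize}
\item For $\intq_e$ with $e\in\Del$, pick any $x\in h(e)$ and map $e$ to $x$. The type is preserved because concepts in the quotient come from representatives: if $h(e)\in A^{\intp/\sim_K}$ then some $y\sim_K x$ lies in $A^\intp$. We then use the homomorphic equivalence of neighbourhoods, mapping $y$ to $x$, to conclude that $x\in A^\intp$.
\item For an edge piece $\intq_{(e,e')}$, the edge $([x],[y])$ in the quotient is witnessed by some pair $(x,y)\in r^\intp$. If several roles or both directions occur, we first fix one witnessing pair and transport the other edges onto it. For this we use the equivalences given by homomorphisms of $K$-neighbourhoods, which send $x$ to a chosen representative and carry $y$ along; $y$ stays within distance $1$, so these maps preserve the adjacent edges and the colours.
\item For $\intq_e$ with $e\in\Nel$, the non-coloured elements form singleton classes. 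Hence $h$ restricted to $\Nel(e)$ lifts uniquely. For each $f\in\Del(e)$ adjacent to $e'\in\Nel(e)$, the quotient edge forces a representative of $h(f)$ adjacent to $h(e')$ in $\intp$. By $K$-sparseness, all these representatives for a given colour are one element, since they lie within a bounded neighbourhood of the connected image of $\Nel(e)$ and $\Nel(e)$ has at most $\distN$ elements. So $f$ has a well-defined image. Edges inside $\Del(e)\times\Del(e)$ were removed, so nothing more is required.
\end{itemize}
Colour preservation follows in every case because $\theta$ was defined as $\kappa$ of a representative, and $\sim_K$ only merges elements of equal colour.

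Invariant 6 should follow because each $h_{\intq'}(f)$ is chosen inside the class $h(f)$. Thus two coloured elements $f_1,f_2$ of the same colour have images that are both representatives of classes of that colour. The step I expect to be the main obstacle is showing these images are $\sim_{\distN|P|}$-equivalent when $h(f_1)\ne h(f_2)$. This presumably needs a convention: either read invariant 6 only when $h(f_1)=h(f_2)$, or use that $\theta$-equal elements have equal $h$-images up to the colouring. The fallback is that equal classes give $\sim_K$-equivalent images, and since $\distN\cdot|P|\le 2\distN^3=K$ and coarser radius is implied by finer, $\sim_K$ implies $\sim_{\distN|P|}$. The delicate technical point is the edge-piece and $\Del(e)$ lifting, where sparseness and neighbourhood homomorphisms must be combined carefully.
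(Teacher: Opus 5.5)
Your proposal is correct and follows the paper's proof essentially step for step: invariants 1--4 are checked structurally, and the homomorphisms for the three kinds of basic pieces are obtained by lifting $h$ through witnesses, with $K$-sparseness forcing uniqueness. In the edge-piece case, this uniqueness is exactly what makes the transported endpoint coincide with your chosen $y$. Invariant 6 then follows from $[h_{\intq'}(f)]_{\sim_K}=h(f)$ together with $\distN\cdot|P|\le K$.

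Your caution about invariant 6 is justified. The paper's argument, like your fallback, only yields the invariant when $h(f_1)=h(f_2)$, in particular when $f_1=f_2$, which is the case the gluing step relies on. Your alternative, that $\theta$-equal elements have equal $h$-images, is false in general, since distinct $\sim_K$-classes can share a colour.
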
 

\begin{proof}
Clearly $P_0=P$ has $|P|-0$ elements. Each piece is a basic piece and is connected by the definitions of a basic piece. The fact that distinct pieces do not share non\=/coloured elements follows from the construction: only pieces of the form $\intq_e$ for $e\in \Nel$ contain any non\=/coloured elements, and as previously observed, if such pieces share a non\=/coloured element then they coincide.

We will now show that the union $\bigcup P$ of pieces in P treated as an interpretation is exactly $\intq$. It is clear that the domain of the union covers the whole domain of $\intq$. This is achieved already by pieces $\intq_e$ for $e\in \Delta^\intq$. Since unary types in $\intq$ and in the pieces coincide, this also covers the concepts realisations.
Thus, it remains to see that every role representation in $\intq$ is present in some piece in $P$. Indeed, every role representation is preserved, which depends on what kind of elements it connects:
\begin{itemize}
    \item for two non\=/coloured elements $e$, $e'$,  the role is in $\intq_e {=} \intq_{e'}$;
    \item for a non\=/coloured element $e$ connected to a~coloured element $f$, the role is in $\intq_e$;
    \item for two coloured  elements $f$ and $f'$, the role is in $\intq_{(f,f')}$.
\end{itemize}

It remains to show the last two {invariants}, which concern the homomorphisms $h_{\intq'}\from \intq'\to\intp$ for {pieces} $\intq'\in P_0=P$. The goal is to define homomorphisms $h_{\intq'}$ so they coincide with $h\from \intq \to\intp/{\sim_K}$. We have three types of basic pieces to consider.

We begin with pieces of the form $\intq_f$ for a~coloured element $f \in \Del$. Then, the domain of $h_{\intq_f}$ is $\{f\}$, $h(f)=[f']_{\sim_K}$ for some $f' \in \dom(\kappa)$, and
$\theta(f)=\kappa(f')$, as defined in the \emph{Pieces} paragraph. Since $h$ is a homomorphism, setting $h_{\intq_f}(f)=f'$ defines a~homomorphism from $\intq_f$ to $\intp$.

{
Now, consider a~piece of the form $\intq_{(f_1,f_2)}$ for $(f_1,f_2)\in R$. Then, the domain of $h_{\intq_{(f_1,f_2)}}$ is $\{f_1,f_2\}$.
To define the homomorphism take any edge connecting $f_1$ and $f_2$ in the interpretation $\intq$. As $f_1$ and $f_2$ are connected by an edge, by symmetry we can assume that $(f_1,f_2)\in r^{\intq}$ for some role name $r$. 
Now, since $h$ is a~homomorphism, $(h(f_1)$, $h(f_2)) \in r^{\intp/(\sim_K)}$. Hence, there are $f_1' \in [f]_{\sim_K}$  and $f_2' \in [f_2]_{\sim_K}$ such that $(f_1',f_2') \in r^{\intp}$. We call such elements \emph{witnesses} for the respective edge.

We define $h_{\intq_{(f_1,f_2)}}$ by map $f_1\mapsto f'_1$ and $f_2\mapsto f'_2$. This map clearly defines the desired homomorphism.
Since for $i=1,2$ holds that $[f'_i]_{\sim_K}=h(f_i)$, the unary types of $f'_i$ and $h(f_i)$ agree. Preservation of role $r$ follows directly form the construction.
For any other role name $s$ such that $(f_1,f_2) \in s^{\intq}$, since $h$ is a homomorphism, there are $f''_1$ and $f''_2$ such that $(f_1'',f_2'') \in s^{\intp}$ and are $\sim_K$ equivalent respectively to $f'_1$ and $f'_2$. Thus, by $\sim_1$ equivalence of $f'_1$ and $f''_1$ ($K \geq 1$), we know that there is $f'''_2$ which is connected by an $s$ edge to $f'_1$. Note that $\kappa(f'_2)=\kappa(f''_2)=\kappa(f'''_2)=\theta(f_2)$. It remains to see that $f'''_2=f'_2$, but this follows directly from the fact that $\intp$ is $K$\=/sparse and $K{>}0$.
}

%
%
%
%

What remains is to define the homomorphism for pieces of the form $\intq_e$ for non\=/coloured elements $e\in \Nel$. Here the domain is $\Del(e) \cup \Nel(e)$.
The part $h\restr \Nel(e)$ is already a~proper homomorphism $h'$ of $\intq\restr \Nel(e)$ into $\intp$ because there is no ambiguity here: if $e'\in \Delta^\intp\setminus \dom(\theta)$ then $[e']_{\sim_K}=\{e'\}$.

What remains is to define $h'$ on arguments in $\Del(e)$. Consider any $f\in \Del(e)$ and take any edge $(e',f)$ which witnesses that $f \in \Del(e)$ with $e'\in \Nel(e)$. Since $h$ is a~homomorphism, there is an~edge between $h(e')$ and $h(f)$ in $\intp/(\sim_K)$. Let $f_{e',f}\in \dom(\kappa)\subseteq \Delta^\intp$ be a~witness for this edge, i.e.~there is a~respective edge between $h'(e')$ and $f_{e',f}$ in $\intp$.

Our goal is to prove that whenever there are edges between $e'$ and $f$ and between $e''$ and $f$ for $e',e''\in \Nel(e)$, then the chosen witnesses are equal, i.e. $f_{e',f}=f_{e'',f}$. However, since $e'$ and $e''$ are both in $\Nel(e)$.  both $e'$ and $e''$ are connected by a~path in $\intq\restr\Nel(e)$. Since $|\Delta^\intq|=\distN$, we know that there must exist such path of length at most $\distN$ in $\intq$. Thus, due to the existence of the homomorphism $h'\restr \Nel(e)\from \intq\restr\Nel(e)\to \intp$, there must be a~path of length at most $\distN$ {connecting $e'$ and $e''$} in $\intp$. {This path is } of the form:
\[f_{e',f}\to h'(e') \to \cdots \to h'(e'') \to f_{e'',f}.\]
    
{Since $\intp$ is $K$\=/sparse, and $\kappa(f_{e',f}) = \kappa(f_{e'',f})$, it holds that $f_{e',f}=f_{e'',f}$.}

Now it remains to see that initially, i.e. when $s=0$, we have 
{$h_{\intq_1}(f_1) \sim_{\distN\cdot |P|}  h_{\intq_2}(f_2)$}
whenever $\theta(f_1)=\theta(f_2)$. However, the above definition guarantees that $[h_{\intq'}(f)]_{\sim_K} = h(f)$. Since $K=2\cdot \distN^3$, we know that 
{$K\geq \distN\cdot |P|$},
so whenever $\theta(f_1)=\theta(f_2)$ we necessarily have 
{$h_{\intq_1}(f_1) \sim_{\distN\cdot |P|}  h_{\intq_2}(f_2)$.}
\end{proof}

\paragraph*{Inductive step.}
{We will now describe how to perform the iterative step preserving
the invariants.}


Assume that $\intq_1$ and $\intq_2$ are two pieces in $P_s$ that share some element in the domain. This element is necessarily coloured, i.e.~it is $f\in \Del$ such that $f\in \Delta^{\intq_1}\cap\Delta^{\intq_2}$. Remove from $P_s$ the pieces $\intq_1$ and $\intq_2$ and put instead $\intq_1\cup \intq_2$, obtaining $P_{s+1}$. We need to check that all the invariants are preserved.

{
Before we discuss the preservation of the invariants, let us explain how to infer the theorem.
First, observe that the inductive step can be performed only if the domains of the pieces are not pairwise disjoint.
Thus, the final set of pieces $P_s$ necessarily consists of pieces with pairwise disjoint domains. Note that in that case, we can define the homomorphism from $\intq$ to $\intp$ as union of the homomorphisms $h_{\intq'}$ for $\intq' \in P_s$. Since the domains of the pieces in $P_s$ cover the domain of $\intq$ and the homomorphism $h_{\intq'}$ are defined on 
disjoint domains, their union is a proper homomorphism form $\intq$ to $\intp$ which concludes the implication ``if'' implication of the theorem. 
Thus, to end the proof of the theorem it is enough to show that the inductive step preserves the invariants.
}

Preservation of the first four invariants is \st{obvious} {clear}, because the only thing the union does is merging pieces which share an~element in their domain. We need to verify the last two invariants, in particular we need to define the homomorphism $h_{\intq_1\cup \intq_2}$.

{

Let $h_1\from \intq_1\to \intp$ and $h_2\from\intq_2\to \intp$ be the two homomorphisms and $S = \Delta^{\intq_1} \cap \Delta^{\intq_2}$ be the set of the common elements.
By the assumption $|S|>0$, so we can fix a common element $f \in S$, and set $f_1=h_1(f)$, $f_2=h_2(f)$.
By Invariant~6, $f_1$ and $f_2$ are $\sim_{\distN\cdot (|P|{-}s)}$\=/equivalent and thus $\sim_{\distN}$\=/equivalent, as clearly $s < |P|$. 
Hence, there exists a~coloured homomorphism $h'\from \intp\restr N_\distN(f_1)\to \intp\restr N_\distN(f_2)$. Moreover, since $N_\distN(f_2)$ is connected, graph $\intp$ is $K$\=/sparse and $h_1$, $h_2$ are colour preserving, $h'(h_1(f')=h_(f')$ for all $f' \in S$.

We define the homomorphism $h_{\intq_1 \cup \intq_2} \from \intq_1 \cup \intq_2 \to \intp$ as $h'\circ h_1 \cup h_2$.
Since $h'\circ h_1$ and $h_2$ agree on set $S$, this is indeed a well\=/defined function.
To show Invariant 5, we need to prove that it is also a coloured homomorphism.

For unary types map $h_{\intq_1 \cup \intq_2}$ preserves them as $h_1$ and $h_2$ are homomorphisms.
For any role name $r$ it is easy to check that if $(x,y) \in r^{\intq_1 \cup \intq_2}$ then either $(x,y) \in r^{\intq_1}$ or $(x,y) \in r^{\intq_2}$.
Thus, either $h_1$ or $h_2$ will preserve this edge.

}



Finally, it remains to see that the last invariant speaking about $\sim_{\distN\cdot (|P|-s)}$\=/equivalence is satisfied. The only place where the actual values of homomorphisms are changed is~$h'_1$. Thus, consider $f'_1\in\dom(h'_1)$ which is a~coloured element, any piece $\intq'\in P_{s+1}$, and any $f'_2\in\Delta^{\intq'}$ which is also coloured.
We assume that $\theta(f'_1)=\theta(f'_2)$ and we need to show that $y_2\eqdef h'_1(f'_1)$ and $y'_2\eqdef h_{\intq'}(f'_2)$ are $\sim_{\distN\cdot (|P|-s-1)}$ equivalent.

First note two obvious consequences of the definition of $\distN$\=/equivalence.

\begin{fact}
\label{ft:k-neightbour-monotone}
    For $\distN\geq \distN'\geq 0$, if $x\sim_\distN y$ then also $x\sim_{\distN'} y$.
\end{fact}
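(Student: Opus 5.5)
This is a monotonicity fact, and the plan is to prove it directly from the definition of $\distN$\=/neighbourhood equivalence. Take $x \sim_\distN y$ with $\distN \geq \distN'$. If $x = y$, then $x \sim_{\distN'} y$ holds immediately by the first disjunct of the definition. Otherwise $x, y \in \dom(\kappa)$ and $\kappa(x) = \kappa(y)$, and these conditions do not depend on the radius, so they carry over unchanged. It remains to produce colour\=/preserving homomorphisms between $\intp\restr N_{\distN'}(x)$ and $\intp\restr N_{\distN'}(y)$ that map $x$ to $y$ and vice versa.

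The key step is restriction. Let $g\colon \intp\restr N_\distN(x) \to \intp\restr N_\distN(y)$ be the given colour\=/preserving homomorphism with $g(x) = y$. First I will show that $g$ does not increase distance from the root. If $e \in N_{\distN'}(x)$, there is an undirected path of length at most $\distN'$ from $x$ to $e$ in $\intp$. All vertices on this path lie at distance at most $\distN' \leq \distN$ from $x$, so the whole path lies inside $\intp\restr N_\distN(x)$. Its image under $g$ is an undirected path from $y$ to $g(e)$ of the same length, because homomorphisms preserve every edge regardless of direction. Hence $d_\intp(y, g(e)) \leq \distN'$, i.e.\ $g(e) \in N_{\distN'}(y)$.

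Therefore $g$ restricted to $N_{\distN'}(x)$ maps into $N_{\distN'}(y)$. Since both sides are induced subinterpretations, this restriction is a homomorphism $\intp\restr N_{\distN'}(x) \to \intp\restr N_{\distN'}(y)$. It remains colour\=/preserving, because the colourings of the smaller neighbourhoods are just restrictions of $\kappa$ and the conditions on $\dom(\kappa)$ are checked pointwise. It still maps $x$ to $y$. The symmetric argument, applied to the homomorphism from $y$'s neighbourhood to $x$'s, gives the reverse direction, so $x \sim_{\distN'} y$.

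The only point needing care is the observation that the witnessing path lies entirely within the $\distN$\=/neighbourhood. This is what makes $g$ applicable along the whole path, and it holds because prefixes of a path of length $\leq \distN'$ have length $\leq \distN'$. Everything else is routine, so I do not expect a genuine obstacle.
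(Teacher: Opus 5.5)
Your proof is correct and follows essentially the paper's approach. The paper calls this fact an obvious consequence of the definition and gives no proof, but it uses the same restriction argument in the proof of Lemma~\ref{lem:homo-decrease}: restrict the witnessing homomorphisms and use that homomorphisms do not increase distance (Fact~\ref{fact:homo-distance}). Your explicit check that the short path from $x$ stays inside $N_\distN(x)$ supplies a detail the paper leaves implicit when it applies Fact~\ref{fact:homo-distance} to homomorphisms between neighbourhoods.
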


\begin{fact}
\label{fact:homo-distance}
    If $h\from \intp\to\intb$ is a~homomorphism and $x,y\in\Delta^\intp$ then $d_\intb(h(x),h(y))\leq d_\intp (x,y)$. 
\end{fact}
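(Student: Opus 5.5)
The plan is to push an optimal undirected path in $\intp$ forward along $h$ and read off a walk of no greater length in $\intb$. First I dispose of the degenerate case: if $d_\intp(x,y)=\infty$ the inequality holds trivially, so I assume $d_\intp(x,y)=m<\infty$. Then there is an undirected path $x=z_0,z_1,\dots,z_m=y$ in $\intp$. By the definition of undirected paths, for each $i<m$ there is a role name $r_i$ with $(z_i,z_{i+1})\in r_i^\intp$ or $(z_{i+1},z_i)\in r_i^\intp$.

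Next I apply $h$ to every vertex of this path. Because $h$ is a homomorphism, $(u,v)\in r^\intp$ implies $(h(u),h(v))\in r^\intb$. Hence for each $i$ we get $(h(z_i),h(z_{i+1}))\in r_i^\intb$ or $(h(z_{i+1}),h(z_i))\in r_i^\intb$, so $h(z_i)$ and $h(z_{i+1})$ are adjacent in the undirected graph of $\intb$. Consequently $h(z_0),\dots,h(z_m)$ is an undirected walk of length $m$ from $h(x)$ to $h(y)$. It may repeat vertices, for instance when $h$ identifies some of the $z_i$.

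The only point needing care is that distance is defined via paths rather than walks. Any walk of length $m$ contains a path of length at most $m$ between the same endpoints, obtained by cutting out the segment between two occurrences of a repeated vertex. This gives $d_\intb(h(x),h(y))\le m=d_\intp(x,y)$.

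There is no real obstacle here. The argument uses only that homomorphisms preserve role edges; it does not need $h$ to preserve colours or unary types. This matters because the fact will be applied to the colour-preserving homomorphisms $h'$ and $h_{\intq'}$ in the inductive step, and the argument covers those as well.
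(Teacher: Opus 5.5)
Your proof is correct and matches the paper's intended reasoning: the paper states this fact as an obvious consequence of the definitions and gives no proof, and the implicit argument is exactly yours, namely that $h$ sends each edge of a shortest undirected path in $\mathcal{I}$ to an edge of $\mathcal{J}$. Your step shortening the image walk to a path is harmless but probably unnecessary, since the paper seems to allow paths to repeat elements (it reserves the word ``simple'' for paths that do not).
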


Observe that $y_1\eqdef h_1(f'_1)$ is $\sim_{\distN\cdot (|P|-s)}$ equivalent to $y'_2$ due to the inductive assumption about $h_1$ and $h_{\intq'}$. Fact~\ref{ft:k-neightbour-monotone} implies that $y_1$ and $y'_2$ are also $\sim_{\distN\cdot (|P|-s-1)}$ equivalent. Thus, due to transitivity of $\sim_{\distN\cdot (|P|-s-1)}$, it is enough to show that $y_1$ is $\sim_{\distN\cdot (|P|-s-1)}$ equivalent to $y_2$.

\begin{figure}[ht]
\colorlet{singleton}{myorange}
\colorlet{cycle}{ForestGreen}
\colorlet{cluster}{RoyalBlue}

\begin{tikzpicture}[scale=0.7, sm/.style={scale=0.7, circle, fill=black, inner sep=0pt,minimum size=5pt, outer sep=2pt}]

    \node[sm, label=above:$f_1 '$] (f1) at (-4,5) {};
    \node[sm, label=above:$f$] (f) at (-2,5) {};
    \node[sm, label=above:$f_2 '$] (f2) at (3,5) {};
    \draw[<->, decorate, decoration={coil,aspect=0, segment length = 2mm, amplitude = 0.6mm}] (f1) -- (f) node [midway, label=below:{\scriptsize{$\leq z$}}] {};
    
    \node (Q) at (-6,5) {\Large $Q$};
    \node (q1) at (-3, 6.5) {\color{cycle}$Q_1$};
    \fill [nearly transparent, cycle] (-3,5) circle [x radius = 1.5, y radius = 1];
    \node (q2) at (-1.6, 6.5) {\color{cluster}$Q_2$};
    \fill [nearly transparent, cluster] (-1.6,5) circle [x radius = 1.5, y radius = 1];
    \node (q') at (3,6.5) {\color{singleton}$Q'$};
    \fill [nearly transparent, singleton] (3,5) circle [x radius = 1.5, y radius = 1];

    \draw[dashed] (-6, 3.5) -- (5,3.5);

    \node[sm, label=below:$y_1$] (y1) at (-4.8, 2) {};
    \node[sm, label=below:$x_1$] (x1) at (-3.1, 2) {};
    \node[sm, label=below:$x_2$] (x2) at (-1, 2) {};
    \node[sm, label=below:$y_2 '$] (y2) at (3, 2) {};
    \draw[<->, decorate, decoration={coil,aspect=0, segment length = 2mm, amplitude = 0.6mm}] (y1) -- (x1) node [midway, label=above:{\scriptsize{$\leq z$}}] {};

    \node (I) at (-6, 2) {\Large $I$};

    \node at (-4,0.7) {\color{cycle} $h_1$};
    \filldraw [draw=cycle, semitransparent, pattern=north east lines, pattern color=cycle] (-4,2) circle [x radius = 1.4, y radius = 1]; 
    \draw[->, cycle] (f1) -- (y1);
    \draw[->, cycle] (f) -- (x1);

    \node at (-1, 0.7) {\color{cluster} $h_2$}; 
    \filldraw [draw=cluster, semitransparent, pattern=north east lines, pattern color=cluster] (-1,2) circle [x radius = 1.2, y radius = 1]; 
    \draw[->, cluster] (f) -- (x2);

    \node at (3, 0.7) {\color{singleton} $h_{Q'}$};
    \filldraw [draw=singleton, semitransparent, pattern=north east lines, pattern color=singleton] (3,2) circle [x radius = 1.2, y radius = 1]; 
    \draw[->, singleton] (f2) -- (y2);

    \draw[dashed] (-6, 0) -- (5, 0);


    \node[sm, label=below:$y_1$] at (-3, -1.5) {};
    \node[sm, label=below:{$x_1 = x_2$}] at (-1.5, -1.5) {};
    \node[sm, label=below:$y_2 '$] at (3, -1.5) {};
    

    \node (Im) at (-5.4,-1.5) {\Large $I$};

    \node at (-2.5, -3) {\color{cycle} $h_1'$};
    \filldraw [draw=cycle, semitransparent, pattern=north east lines, pattern color=cycle] (-2.5,-1.5) circle [x radius = 1.4, y radius = 1];
    \draw[->, thin, cycle] (-5.4, 2) -- (-3.9, -1.5); 
    \draw[->, thin, cycle] (-2.6, 2) -- (-1.1, -1.5); 

    \node at (-1.5, -3) {\color{cluster} $h_2'$};
    \filldraw [draw=cluster, semitransparent, pattern=north east lines, pattern color=cluster] (-1.5,-1.5) circle [x radius = 1.2, y radius = 1];
    \draw[->, thin, cluster] (-2.2, 2) -- (-2.7, -1.5); 
    \draw[->, thin, cluster] (0.2, 2) -- (-0.3, -1.5); 

    \node at (3, -3) {\color{singleton} $h'_{Q'}$};
    \filldraw [draw=singleton, semitransparent, pattern=north east lines, pattern color=singleton] (3,-1.5) circle [x radius = 1.2, y radius = 1];
    \draw[->, thin, singleton] (1.8, 2) -- (1.8, -1.5); 
    \draw[->, thin, singleton] (4.2, 2) -- (4.2, -1.5); 
\end{tikzpicture}
\caption{An illustration to the proof of Theorem \ref{thm:pfColourBlocking}.}
\end{figure}
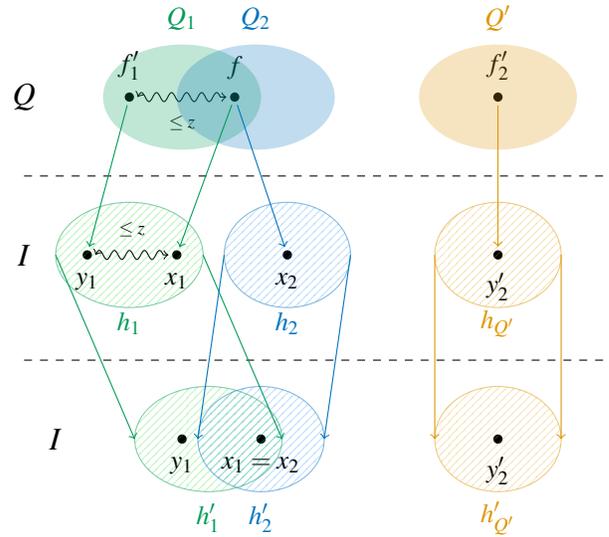

Note that both $f'_1$ and $f$ belong to $\Delta^{\intq_1}$ so due to the second invariant (and the assumption on the size of $\intq$) we know that $d(f'_1,f)\leq \distN$. This implies that for $x_1\eqdef f_1 = h_1(f)$ we have
\[d(y_1,x_1)=d(h_1(f'_1), h_1(f))\leq d(f'_1,f)\leq \distN.\]
Similarly, for $x_2=f_2=h_2(f)=h'_1(f)$ we have
\[d(y_2,x_2)=d(h'_1(f'_1), h'_1(f))\leq d(f'_1,f)\leq \distN.\]

Thus, we are in place to apply Lemma~\ref{lem:homo-decrease} as stated below, with:
\begin{itemize}
    \item $K=2\cdot \distN^3$ as fixed, $k=\distN\cdot(|P|-s)$, and $j=\distN\cdot(|P|-s-1)$;
    \item $y_1=h_1(f'_1)$, $y_2=h'_1(f'_1)=h'(y_1)$;
    \item $x_1=f_1=h_1(f)$, $x_2=f_2=h_2(f)=h'_1(f)$;
\end{itemize}

Thus, we know that $K\geq k$ because $|P|\leq 2\cdot \distN^2$. Clearly $k\geq j$ and $j\geq 0$ because $\distN>0$ and $|P|{-}s\geq 2$. Moreover, $x_1\sim_k x_2$ due to the inductive assumption about $h_1$ and $h_2$. The parameters are so that $k-j=\distN$, so we know that $d(y_1,x_1)\leq \distN$ and $d(y_2,x_2)\leq \distN$.

\paragraph*{Merging homomorphisms.}

Let $F=F^\intp$ be the set of coloured elements in $\intp$. First consider $x_1,x_2\in F$ which are two coloured
elements such that $x_1\sim_k x_2$. Let $\intp_1=\intp\restr N_\distN(x_1)$ and $\intp_2= \intp\restr N_\distN(x_2)$. The fact that $x_1\sim_k x_2$ means that $\kappa(x_1)=\kappa(x_2)$ and there exist homomorphisms $h_1\from \intp_1\to \intp_2$ and symmetrically $h_2\from \intp_2\to \intp_1$.

\begin{claim}
\label{cl:homo-iso}
Under the above assumptions, $(h_1\restr F)\from (\intp_1\restr F)\to(\intp_2\restr F)$ and $(h_2\restr F)\from (\intp_2\restr F)\to(\intp_1\restr F)$ are inverse isomorphisms between these two interpretations.
\end{claim}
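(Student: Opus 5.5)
The argument rests on $K$-sparseness of $\intp$ together with colour preservation. The setting is this: $x_1\sim_k x_2$, $\intp_1=\intp\restr N_\distN(x_1)$, $\intp_2=\intp\restr N_\distN(x_2)$, and $h_1,h_2$ are colour-preserving homomorphisms between them with $h_1(x_1)=x_2$ and $h_2(x_2)=x_1$. Here $F$ denotes the coloured elements, and we may assume $k\ge\distN$. We show three things in turn: each restriction $h_i\restr F$ is a well-defined map between the coloured parts, the two restrictions are mutually inverse, and each preserves all atoms in both directions.

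\emph{Well-definedness and bijectivity.} Since $h_1$ preserves colours, $y\in F\cap N_\distN(x_1)$ implies $h_1(y)\in\dom(\kappa)$, so $h_1(y)\in F\cap N_\distN(x_2)$. Thus $h_1\restr F$ maps $\Delta^{\intp_1\restr F}$ into $\Delta^{\intp_2\restr F}$, and symmetrically for $h_2$.

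Now take $y\in F\cap N_\distN(x_1)$. The map $h_2\circ h_1$ is colour-preserving, so $\kappa(h_2(h_1(y)))=\kappa(y)$. Both $y$ and $h_2(h_1(y))$ lie in $N_\distN(x_1)$, hence are within distance $2\distN\le K$ of each other, and in particular both lie in $N_K(x_1)$. Applying $K$-sparseness with centre $x_1$ (using $\distN\le K$), two distinct elements of $N_K(x_1)$ cannot share a colour, so $h_2(h_1(y))=y$. Symmetrically $h_1(h_2(y'))=y'$ for coloured $y'\in N_\distN(x_2)$. Hence the restrictions are mutually inverse bijections.

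\emph{Isomorphism.} Both restrictions are homomorphisms, since restricting a homomorphism to an induced subinterpretation preserves concept names and roles. For the reverse direction on atoms, suppose $h_1(y)\in A^\intp$. Then $y=h_2(h_1(y))\in A^\intp$ because $h_2$ is a homomorphism. The same argument works for role edges $(h_1(y),h_1(y'))\in r^\intp$. So each map reflects atoms, and the two are inverse isomorphisms of $\intp_1\restr F$ and $\intp_2\restr F$.

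The main obstacle is the sparseness bound. One has to check that the relevant elements fall inside a single $K$-ball: everything lies in $N_\distN(x_1)\subseteq N_K(x_1)$, because $K=2\distN^3\ge\distN$ once $\distN\ge1$ (the case $\distN=0$ is trivial). The definition of sparseness then applies with centre $x_1$ directly.
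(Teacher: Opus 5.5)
Your proof is correct and takes essentially the same route as the paper. Sparseness makes each colour occur at most once among the coloured elements of the neighbourhood, so colour preservation forces $h_2\circ h_1$ and $h_1\circ h_2$ to be the identity on coloured elements, and two mutually inverse homomorphisms give inverse isomorphisms of $\intp_1\restr F$ and $\intp_2\restr F$.

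Your check that atoms are reflected via $y=h_2(h_1(y))$ spells out what the paper compresses into ``since they are both homomorphisms''. Note that the assumption $k\ge\distN$ you introduce is never used; the argument only needs the neighbourhood radius to be at most $K$, which also covers the later use with radius $k$.
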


\begin{proof}
First, coloured homomorphisms need to preserve being a~coloured
element, so the domains and co\=/domains of the homomorphisms agree.

Due to the assumption of $\distN$\=/sparseness, $x_1$ and $x_2$ are the unique coloured
elements in $\intp_1$ and $\intp_2$ of colour $\kappa(x_1)=\kappa(x_2)$. Thus, $h_1(x_1)=x_2$ and vice\=/versa $h_2(x_2)=x_1$.

Now, if $y\in \Delta^{\intp_1}\cap F$ is the unique coloured
element in $\intp_1$ whose colour is $\kappa(y)$ then $h_1(y)$ needs to satisfy $\kappa(h_1(y))=\kappa(y)$ and thus it is determined uniquely, as the only $\kappa(y)$ coloured element in $\intp_2$. This means that $(h_1\restr F)$ and $(h_2\restr F)$ are colour\=/preserving bijections. However, since they are both homomorphisms, it guarantees that they also need to preserve edges.
\end{proof}

The following lemma allows us to conclude the inductive step of the construction.

\begin{lemma}
\label{lem:homo-decrease}
Let $K\geq k\geq j\geq 0$ and $x_1\sim_k x_2$ be two coloured
elements in an~interpretation $\intp$ which is coloured by $\kappa\from F\to C$ in a~$K$\=/sparse way. If $y_1,y_2\in F$ are two coloured
elements such that $\kappa(y_1)=\kappa(y_2)$, $d(y_1,x_1)\leq k-j$, and $d(y_2,x_2)\leq k-j$ then $y_1\sim_j y_2$.
\end{lemma}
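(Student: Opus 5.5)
Since $x_1\sim_k x_2$, the colours of $x_1$ and $x_2$ agree and there are colour-preserving homomorphisms $g_1\colon \intp\restr N_k(x_1)\to\intp\restr N_k(x_2)$ with $g_1(x_1)=x_2$, and $g_2$ in the opposite direction with $g_2(x_2)=x_1$. The plan is to obtain the homomorphisms witnessing $y_1\sim_j y_2$ by restricting $g_1$ and $g_2$ to $N_j(y_1)$ and $N_j(y_2)$. We also need $g_1(y_1)=y_2$ and $g_2(y_2)=y_1$, and this is where $K$-sparseness enters.

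\textbf{Step 1: the ball $N_j(y_1)$ sits inside $N_k(x_1)$.} By the triangle inequality, every $z$ with $d(z,y_1)\le j$ satisfies $d(z,x_1)\le j+(k-j)=k$. So $N_j(y_1)\subseteq N_k(x_1)$, and $\intp\restr N_j(y_1)$ is an induced subinterpretation of $\intp\restr N_k(x_1)$. The same holds for $y_2$ and $x_2$.

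\textbf{Step 2: $g_1$ sends $y_1$ to $y_2$.} The element $y_1$ lies in $N_k(x_1)$, so $g_1(y_1)$ is a coloured element of colour $\kappa(y_1)=\kappa(y_2)$ inside $N_k(x_2)$. The element $y_2$ is also in $N_k(x_2)$ and has the same colour. Since $k\le K$, the interpretation is $k$-sparse, so $N_k(x_2)$ contains at most one element of each colour. Hence $g_1(y_1)=y_2$, and symmetrically $g_2(y_2)=y_1$. This is essentially the argument of Claim~\ref{cl:homo-iso}.

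\textbf{Step 3: $g_1$ keeps $N_j(y_1)$ inside $N_j(y_2)$.} This is the main obstacle. Distances inside $\intp\restr N_k(x_1)$ may differ from distances in $\intp$, so Fact~\ref{fact:homo-distance} has to be applied to paths, not to global distances. Take $z\in N_j(y_1)$ and a shortest undirected path in $\intp$ from $y_1$ to $z$, of length at most $j$. Every vertex $w$ on this path has $d(w,y_1)\le j$, so by Step 1 the whole path lies in $N_k(x_1)$. It is therefore a path in the induced subinterpretation, and $g_1$ maps it to an undirected path of the same length in $\intp\restr N_k(x_2)\subseteq\intp$ starting at $g_1(y_1)=y_2$. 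So $d(y_2,g_1(z))\le j$, i.e.\ $g_1(z)\in N_j(y_2)$.

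\textbf{Step 4: conclude.} The restriction of $g_1$ to $N_j(y_1)$ preserves concepts and roles, because $\intp\restr N_j(y_2)$ is induced. It preserves colours and colour-domain membership, since these restrict along. So it is a coloured homomorphism $\intp\restr N_j(y_1)\to\intp\restr N_j(y_2)$ with $y_1\mapsto y_2$. By the symmetric argument, $g_2$ restricts to a homomorphism in the other direction with $y_2\mapsto y_1$. Together with $\kappa(y_1)=\kappa(y_2)$, this gives $y_1\sim_j y_2$.

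If $y_1=y_2$, the conclusion holds trivially by reflexivity, but the argument above covers this case anyway.
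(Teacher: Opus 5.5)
Your proof is correct and follows the paper's argument. You restrict the homomorphisms witnessing $x_1\sim_k x_2$ to $N_j(y_1)$ and $N_j(y_2)$, use sparseness (as in Claim~\ref{cl:homo-iso}) to force $y_1\mapsto y_2$ and $y_2\mapsto y_1$, and use the fact that homomorphisms do not increase distances to keep the images inside the $j$-balls. Your Step~3 also spells out what the paper leaves implicit when it cites Fact~\ref{fact:homo-distance}: a shortest path from $y_1$ to any $z\in N_j(y_1)$ stays inside $N_k(x_1)$, so the homomorphism can actually be applied to it.
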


\begin{proof}
Let $\intp_1=\intp\restr N_k(x_1)$ and $\intp_2=\intp\restr N_k(x_2)$. Take $h_1\from \intp_1\to\intp_2$ and $h_2\from \intp_2\to\intp_1$ witnessing the fact that $x_1\sim_k x_2$. Let $y_1$ and $y_2$ be as in the statement. Due to Claim~\ref{cl:homo-iso} we know that $h_1(y_1)=y_2$ and $h_2(y_2)=y_1$. Note that since $d(y_1,x_1)\leq k-j$, we know that $N_j(y_1)\subseteq N_k(x_1)$, analogously for $y_2$.

Recall that below $\range(h)$ is the set of values of a~homomorphism $h$. 

Let $h'_1=(h_1\restr N_j(y_1))\from \intp_1\restr N_j(y_1) \to \intp_2$ and $h'_2=(h_2\restr N_j(y_2))\from \intp_2\restr N_j(y_2) \to \intp_1$ be the restrictions of the homomorphisms $h_1$ and $h_2$ to the neighbourhoods of $y_1$ and $y_2$ respectively. Due to Fact~\ref{fact:homo-distance}, the set of values $\range(h'_1)$ is contained in $N_j(h_1(y_1))=N_j(y_2)$ and dually for $\range(h'_2)$. This means that $h'_1$ and $h'_2$ are colour preserving homomorphisms between $\intp\restr N_j(y_1)$ and $\intp\restr N_j(y_2)$, which, together with the fact that $\kappa(y_1)=\kappa(y_2)$ guarantees that $y_1\sim_j y_2$.
\end{proof}

\clearpage
\section{Missing proofs for Section~\ref{sec:multiple-roles}}
\begin{figure}[ht]
    \centering
    \begin{tikzpicture}
        \node (Intp) at (0,0) {$\intp$};
        \node (G) at (2,0) {$\GG$};
        \node (Hn) at (4,0) {$\HH_n$};
        \node (Gn) at (6,0) {$\GG_n$};
        \node (Gbarn) at (8,0) {$\bar \GG_n$};
        
        \node (G-unrav) at (2,-1.5) {$\widetilde{\GG}$};
        \node (Hn-unrav) at (4,-1.5) {$\widetilde{\HH_n}$};
        \node (Gn-unrav) at (6,-1.5) {$\widetilde{\GG_n}$};

        \node (Hn-truncated) at (4, -3) {$\HH^n$};

        \draw[->] (Intp) to[bend left] node[above] {\scriptsize Theorem~\ref{thm:piecewise-elementary}} (G);
        \draw[->] (G) to[bend left] node[above] {\scriptsize Loop blow-up} (Hn);
        \draw[->] (Hn) to[bend left] node[above] {\scriptsize Loop trans. clos.} (Gn);
        \draw[->] (Gn) to[bend left] node[above] {\scriptsize Break cycles} (Gbarn);
        \draw[->] (G) to node[left] {\scriptsize Quasi-unrav.} (G-unrav);
        \draw[->] (Hn) to node[left] {\scriptsize Quasi-unrav.} (Hn-unrav);
        \draw[->] (Gn) to node[left] {\scriptsize Quasi-unrav.} (Gn-unrav);
        \draw[->] (Hn-unrav) to node[right] {\scriptsize Truncate loops} (Hn-truncated);
    \end{tikzpicture}
    \caption{Overview of the constructions used in proofs of Section~\ref{sec:multiple-roles}.}
    \label{fig:app-C-overview}
\end{figure}
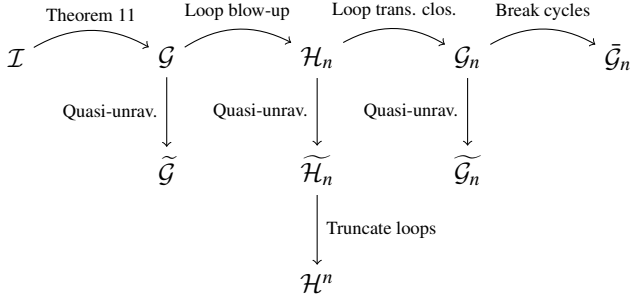

This Section of the Appendix will use multiple constructions to gradually transform an arbitrary counter model $\intp$ into a suitable piecewise elementary interpretation of exponential size. The diagram of Figure~\ref{fig:app-C-overview} shows an overview of this process together with the associated notation. 

\subsection{Proof of Theorem \ref{thm:piecewise-elementary}}
\thmpiecewiseelementary*

An overview of the proof of this Theorem is shown in Figure~\ref{fig:proof-thm-piecewise-elementary}. In relation to the diagram of Figure~\ref{fig:app-C-overview} it is a more detailed view of the first arrow (labelled ``Theorem 10''). 

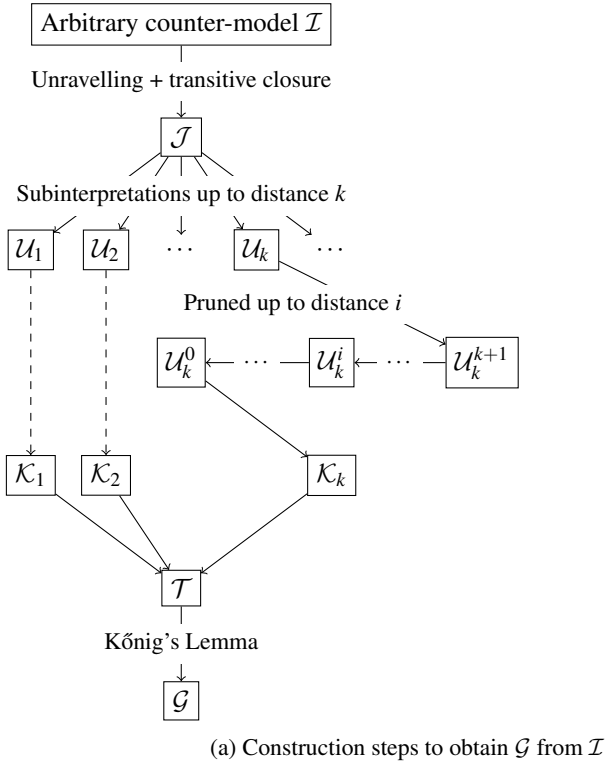
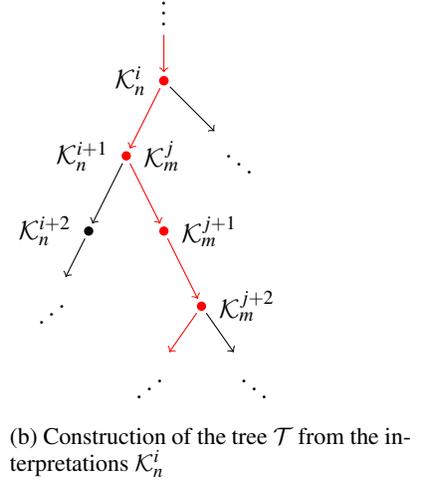
\begin{figure*}
    \centering
    \begin{subfigure}{0.6\textwidth}
        \begin{tikzpicture}
            \node[draw, rectangle] (I) at (0,0) {Arbitrary counter-model $\intp$};
            \node[draw, rectangle] (J) at (0,-1.5) {$\JJ$};
            \node[draw, rectangle] (U1) at (-2,-3) {$\UU_1$};
            \node[draw, rectangle] (U2) at (-1, -3) {$\UU_2$};
            \node (J-anchor-1) at (0, -3) {$\cdots$};
            \node[draw, rectangle] (Uk) at (1, -3) {$\UU_k$};
            \node (J-anchor-2) at (2, -3) {$\cdots$};
            \node[draw, rectangle] (Uk0) at (0, -4.5) {$\UU_k^0$};
            \node[draw, rectangle] (Uki) at (2, -4.5) {$\UU_k^i$};
            \node[draw, rectangle] (Ukk) at (4, -4.5) {$\UU_k^{k+1}$};
            \node[draw, rectangle] (K1) at (-2, -6) {$\KK_1$};
            \node[draw, rectangle] (K2) at (-1, -6) {$\KK_2$};
            \node[draw, rectangle] (Kk) at (2, -6) {$\KK_k$}; 
            \node[draw, rectangle] (T) at (0, -7.5) {$\TT$};
            \node[draw, rectangle] (G) at (0, -9) {$\GG$};

            \draw[->] (I) to node[pos=0.5, fill=white] {\footnotesize  Unravelling + transitive closure} (J);
            \draw[->] (J) -- (U1);
            \draw[->] (J) -- (U2);
            \draw[->] (J) -- (Uk);
            \draw[->] (J) -- (J-anchor-2);
            \draw[->] (J) to node[pos=0.5, fill=white] {\footnotesize  Subinterpretations up to distance $k$} (J-anchor-1); 
            \draw[->] (Uk) -- (Ukk);
            \draw[->] (Ukk) to node[pos=0.5, fill=white] {\footnotesize  $\cdots$} (Uki);
            \draw[->] (Uki) to node[pos=0.5, fill=white] {\footnotesize  $\cdots$} (Uk0);
            \path[draw=none] (Uk) to node[pos=0.5, fill=white] {\footnotesize  Pruned up to distance $i$} (Uki);
            \draw[->] (Uk0) -- (Kk);
            \draw[->, dashed] (U1) -- (K1);
            \draw[->, dashed] (U2) -- (K2);
            \draw[->] (K1) -- (T);
            \draw[->] (K2) -- (T);
            \draw[->] (Kk) -- (T);
            \draw[->] (T) to node[pos=0.5, fill=white] {\footnotesize Kőnig's Lemma} (G);
        \end{tikzpicture}
        \caption{Construction steps to obtain $\GG$ from $\intp$}
    \end{subfigure}
    \begin{subfigure}{0.3\textwidth}
        \begin{tikzpicture}[sm/.style={scale=0.7, circle, fill=black, inner sep=0pt,minimum size=5pt, outer sep=2pt}]
            \node (past) at (0,0) {$\vdots$};
            \node[sm, red, label=left:{$\KK_n^i$}] (Kni) at (0,-1) {$\cdot$};
            \node (alt) at (1, -2) {$\ddots$};
            \node[sm, red, label=left:{$\KK_n^{i+1}$}, label=right:{$\KK_m^j$}] (Kni+1) at (-0.5, -2) {}; 
            \node[sm, label=left:{$\KK_n^{i+2}$}] (Kni+2) at (-1, -3) {}; 
            \node (Kn++) at (-1.5, -4) {$\iddots$}; 
            \node[sm, red, label=right:{$\KK_m^{j+1}$}] (Kmj+1) at (0, -3) {};
            \node[sm, red, label=right:{$\KK_m^{j+2}$}] (Kmj+2) at (0.5, -4) {};
            \node (Kmj++) at (1.2, -5) {$\ddots$};
            \node (future) at (-0.2, -5) {$\iddots$};

            \draw[->, red] (past) -- (Kni);
            \draw[->, red] (Kni) -- (Kni+1);
            \draw[->] (Kni) -- (alt);
            \draw[->] (Kni+1) -- (Kni+2);
            \draw[->] (Kni+2) -- (Kn++);
            \draw[->, red] (Kni+1) -- (Kmj+1);
            \draw[->, red] (Kmj+1) -- (Kmj+2);
            \draw[->] (Kmj+2) -- (Kmj++);
            \draw[->, red] (Kmj+2) -- (future);
        \end{tikzpicture}
        \caption{Construction of the tree $\TT$ from the interpretations $\KK^i_n$}
    \end{subfigure}
    \caption{Overview of the proof strategy for Theorem~\ref{thm:piecewise-elementary}.}
    \label{fig:proof-thm-piecewise-elementary}
\end{figure*}
 
For convenience we will use at the beginning the classical notion of unravelling.

\begin{definition}[Unravelling] The unravelling of a~rooted interpretation $(\intp, a)$  is the rooted interpretation $(\widetilde\intp,a)$ obtained as follows. The domain of $\widetilde\intp$ is the set of paths in $\intp$ that start at $a$. For a~concept name $A$, we let  \[A^{\widetilde\intp}\ =  \big\{ pu \in \Delta^{\widetilde \intp} \bigm| u\in A^\intp \big\}\,.\] For a role $r$, let \[r^{\widetilde\intp}= \big\{ (pu,puv)  \bigm| (u,v)\in r^\intp \big\}\,.\]  
where $p$ is a path in $\intp$ that ends in a predecessor of $u$. 

\end{definition}

Start with an arbitrary counter model $\II$.  Let every critical element $f$ in $\II$ get a~unique concept name $Cr_f$. After introducing those new concepts, let $(\JJ,a)$ be the transitive closure of the classic unravelling of $\II$. The interpretation $\JJ$ homomorphically maps onto~$\II$.

Let $\UU_k$ be subinterpretation of $(\JJ,a)$ consisting of elements reachable from $a$ in at most $k$ steps. An element $v$ has height $h$ if $v\in \UU_h \setminus \UU_{h-1}$. Notice that due to transitivity, all elements of a maximally connected component over a transitive role $t$ that do not have predecessors in roles different than $t$ will belong to the same $\UU_k$.

We show by induction how to turn any $\UU_k$ into the quasi\=/unravelling of a piecewise elementary interpretation $\KK_k$. We will proceed bottom-up, from the $\UU_k$ that contains elements furthest away from the root to $\UU_1$. For each $\UU_k$, we consider elements for each height $i$ from $k$ to $1$ in descending order and remove some elements such that the following invariants are preserved at step $i$:

\begin{itemize}
    \item There exists a $t$-strong homomorphism from $\UU_k^i$ to $\UU_k$,
    \item All elements at height at most $i-1$ are identical in $\UU_k^i$ and $\UU_k$, 
    \item All outgoing edges from elements at height $i-1$ lead to the root of interpretation which is a quasi\=/unravelling of piecewise elementary interpretation of height $k-i$ 
\end{itemize}

(Base case) We start the induction at $i=k+1$ and define $\UU_k^{k+1}$ as $\UU_k$. All invariants are trivially true.

(Induction step) 
Let $C$ be a connected component of $\UU_k^{i}$. 

If $C$ is a non-transitive component, then it is necessarily a single element and so it forms a quasi\=/unravelling of piecewise elementary interpretation of height $k-i$. 



Consider now  $C$ is a component of a transitive role $s$ on level $i$. If $C$ contains multiple instances of the same critical element $f$, we can safely remove all copies but one and redirect all edges to and from the kept element. This preserves the mapping into $\II$, since all elements in the concept $Cr_f$ are mapped to the unique element of $\II$ of the same concept. Let $C'$ be the interpretation obtain from $C$ by applying this pruning. 

Let $v$ be an element in $C'$ at level $i$. The third invariant of the inductive hypothesis guarantees that all elements $w$ at level $i+1$ reachable from any $v$ are roots of a piecewise elementary interpretations of height $k-i-1$. Since there are only finitely many piecewise elementary interpretation of heigh $k-i-1$ we can obtain a labelling $\lambda$ with finite set of labels. Let $K$ be the elementary interpretation of size at most $(l+1)^{l+1^2}$ obtained from applying Theorem~\ref{thm:onerole} to $C'$ in which each element $v$ is labelled with the set of labels from $\lambda$ that correspond the the piecewise elementary interpretation of height $k-i-1$ reachable from $v$. 

From Theorem \ref{thm:onerole}, we know that the quasi-unravelling $\widetilde{K}$ of $K$ maps to $C'$ via a $t$-strong homomorphism. Moreover all elements of $\widetilde K$ that correspond to the same element in $K$ have isomorphic set of descendants, since this information was coded in the labelling. 

We replace $C'$ in $\UU_k^{i+1}$ by $\widetilde K$ and remove all unreachable elements. By materializing piecewise interpretations from $\lambda$ as children of $K$ we get suitable piecewise elementary interpretation describing  this part of $\UU_k^i$.

By definition $\UU_k^0$ is a quasi\=/unravelling of a piecewise elementary interpretation and we can take it as $\KK_k$.

We now build a tree $\TT$ from the obtained interpretations as follows. For each $n\in \mathbb{N}$, the interpretation $\KK_n$ and all of its truncations $\KK_n^i$ for $0 \leq i < n$ are nodes. Two nodes $K_n^i$ and $K_{n'}^{i+1}$ are connected if $K_n^i$ is a substructure of $K_{n'}^{i+1}$. We thus obtain an infinite tree of larger and larger interpretations. 



\begin{lemma}\label{lem:konig-lemma}[Kőnig's lemma]
    Let $G$ be a connected locally finite infinite graph. Then, for any vertex $v$ of $G$, there exists a one-way infinite path with initial vertex $v$ \cite{konig-lemma,introduction-to-graph-theory}.
\end{lemma}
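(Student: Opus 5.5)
We prove Kőnig's lemma by building the infinite path greedily, always stepping to a neighbour that still has infinitely many vertices ``behind'' it. Fix $v$. Since $G$ is connected and infinite, infinitely many vertices are reachable from $v$. Call a vertex $u$ \emph{good} with respect to a finite simple path $P$ ending in $u$ if infinitely many vertices of $G$ can be reached from $u$ by paths that avoid the vertices of $P$ other than $u$. We set $v_0=v$ and the initial path $P_0=v_0$. Then $v_0$ is good, since the only constraint is to start at $v_0$, and connectedness together with infiniteness gives infinitely many reachable vertices.

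Inductive step: suppose $P_n=v_0v_1\cdots v_n$ is a simple path and $v_n$ is good with respect to $P_n$. Let $R$ be the infinite set of vertices reachable from $v_n$ avoiding $\{v_0,\dots,v_{n-1}\}$. Every element of $R\setminus\{v_n\}$ is reached by a path whose second vertex is a neighbour $w$ of $v_n$ not in $P_n$, and whose remainder avoids $v_0,\dots,v_n$. By local finiteness, $v_n$ has only finitely many such neighbours $w$. Then $R\setminus\{v_n\}$ is a finite union, over these $w$, of the sets reachable from $w$ avoiding $P_n$. By the pigeonhole principle, some $w$ has infinitely many such vertices. We set $v_{n+1}=w$. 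Then $P_{n+1}=P_n v_{n+1}$ is simple, and $v_{n+1}$ is good with respect to $P_{n+1}$.

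The union of the $P_n$ is then a one-way infinite simple path starting at $v$. The only delicate point is the pigeonhole step. It needs the reachable set to be split according to the \emph{first} edge leaving $v_n$, with paths forced to avoid the vertices already chosen, so that the resulting path never revisits a vertex. Local finiteness is exactly what makes this split a finite union. The axiom of dependent choice is used implicitly, which is standard.

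In the application to the tree $\TT$, the graph is connected and infinite because $\KK_n$ exists for every $n$. It is locally finite because there are only finitely many piecewise elementary truncations of bounded height, piece size and degree, up to isomorphism. So the lemma yields the infinite chain from which $\GG$ is obtained.
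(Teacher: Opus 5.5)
Your proof is correct and is the standard proof of K\H{o}nig's lemma. The paper does not prove this lemma itself but cites K\H{o}nig and Wilson's textbook, and the proof there is the same greedy construction: local finiteness and pigeonhole give, at each step, a neighbour through which infinitely many vertices stay reachable; your rule that reachability must avoid the prefix chosen so far is what keeps the path simple, which the textbook achieves by restricting to the infinitely many paths that extend the current prefix.
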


Since, by previous construction, all interpretations $\KK_k$ have bounded degree, and by construction the tree $\TT$ is locally finite, by Lemma~\ref{lem:konig-lemma} we obtain that $\TT$ necessarily contains an infinite branch. Let $\GG$ be the limit interpretation of this branch. 

\begin{lemma}
 $(\widetilde{\GG},a)$ is a counter\=/model for $Q$
\end{lemma}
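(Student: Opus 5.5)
The plan is to verify the two halves of ``counter-model'' separately: $\widetilde\GG\not\models Q$ and $\widetilde\GG\models\KB$. Both rest on one locality observation. Let $\KK_{n_1}^{i_1}\subseteq\KK_{n_2}^{i_2}\subseteq\cdots$ be the infinite branch of $\TT$ given by Lemma~\ref{lem:konig-lemma}, with $\GG$ its limit. Every piece of every $\KK_k$ has size at most $(\ell+1)^{(\ell+1)^2}$ and the degree is bounded, so each finite-depth prefix of $\GG$ (measured in pieces) is finite. Such a prefix is already fixed in all sufficiently late nodes of the branch. Quasi-unravelling acts piece by piece, and transitive edges never cross piece boundaries except along the single parent--child edge. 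Hence the depth-$d$ prefix of $\widetilde\GG$ coincides with the depth-$d$ prefix of the quasi-unravelling of the corresponding branch node, and that prefix in turn sits inside some $\UU_k^0=\KK_k$ for $k$ large.

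First step, $\widetilde\GG\not\models Q$. Suppose, towards a contradiction, that some CQ in $Q$ has a match $\pi$ in $\widetilde\GG$. The match uses finitely many elements, so it lies within some finite depth $d$. By the locality observation, $\pi$ is also a match in the corresponding prefix of $\KK_k$ for some large $k$. The first invariant of the construction, composed along $i=k+1,\dots,0$, gives a homomorphism $\KK_k=\UU_k^0\to\UU_k\subseteq\JJ$. There is also the homomorphism $\JJ\to\II$; it respects the fresh concepts $Cr_f$, which is precisely why merging copies of a critical element was harmless. Composing these, $\II\models Q$, contradicting the choice of $\II$ as a counter-model. This argument deliberately avoids a compactness/limit argument for homomorphisms into the infinite $\II$: a query match is finitary, so one stage suffices.

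Second step, $\widetilde\GG\models\KB$. The ABox $\{A(a)\}$ is immediate, since the root piece is preserved throughout. For the TBox we check each normalised inclusion at an arbitrary element $e$ of $\widetilde\GG$ at depth $d$, working inside a stage $\KK_k$ with $k\gg d$, so that $e$ is an interior element of $\KK_k$ and not one at the truncation boundary.
\begin{itemize}
\item Boolean inclusions $\bigsqcap_i A_i\sqsubseteq\bigsqcup_j B_j$ depend only on the type of $e$. Types are preserved by the $t$-strong homomorphisms into $\JJ$, and $\JJ\models\TBox$ because it is a transitively closed unravelling of a model.
\item Inclusions $A\sqsubseteq\exists t.B$ and $A\sqsubseteq\forall t.B$ for a transitive $t$ are decided inside the transitive piece containing $e$, or in the child piece entered by a $t$-edge. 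Here we use Lemma~\ref{lem:strong}, applied to the $t$-strong homomorphism of Theorem~\ref{thm:onerole} from the quasi-unravelling $\widetilde K$ into the component $C'$. This transfers both kinds of constraint from $C'\subseteq\JJ$. The labelling $\lambda$ guarantees that the child pieces hanging below copies in $\widetilde K$ realise the same reachable concepts as in $C'$.
\item For non-transitive roles $r$, the third invariant ensures that every outgoing edge of $e$ is kept together with the root of the piecewise elementary interpretation below it. So the $r$-successors of $e$, and their types, match those in $\JJ$.
\end{itemize}

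I expect the main obstacle to be the $\exists$-constraints, that is, ensuring that no witness is lost in the limit. A witness required for $e$ might live below the truncation level in every finite $\UU_k$ considered. It might also be pruned when copies of a critical element are merged, or when $C$ is replaced by $\widetilde K$. The first thing I would try is the following chain: choose $k$ larger than $d$ plus one piece; check that the elements removed in passing from $\UU_k^{i+1}$ to $\UU_k^{i}$ are unreachable ones only; and then invoke that $\rch{t}{}$ is preserved exactly, not merely included, by $t$-strong homomorphisms. With these, every witness demanded at $e$ appears within one more piece. That piece is fixed from some point on along the branch of $\TT$, and therefore survives into $\widetilde\GG$.
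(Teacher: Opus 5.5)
Your proposal is essentially the paper's argument. A match of $Q$ in $\widetilde\GG$ uses only a finite prefix, which already occurs in some stage $\KK_k$ that maps into the counter-model $\II$; and every element of height $i$ gets all its required witnesses in every stage $k>i$, hence in the limit $\GG$.

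One small correction: compose directly into $\II$ rather than through $\UU_k\subseteq\JJ$. Merging the copies of a critical element $f$ inside one transitive component can create a cycle, even a self-loop, which has no homomorphic image in the acyclic $\UU_k$. It still maps into $\II$, because all copies carry $Cr_f$ and are sent to $f$. This is how the paper phrases it (``$\widetilde{\KK_k}$ maps into $\II$'').
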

\begin{proof}
    If a CQ $q \in Q$ had a match in $\GG$, then it would have a match in a finite prefix of $\GG$. This prefix would be a subinterpretation for certain $\widetilde{\KK_k}$. But $\widetilde{\KK_k}$ maps into $\II$. Therefore $q$ would have a match in $\II$, which is not possible.
    
    Every element $g$ of $\GG$ of height $i$  is in a $\GG_k$ for $k>i$. Since all elements of $\GG_k$ have all necessary witnesses for elements except last layer,
    $g$ has all required witnesses in $\GG_k$. Hence also in $\GG$.
\end{proof}

The size bounds from Theorem \ref{thm:onerole} give necessary bounds in Theorem \ref{thm:piecewise-elementary}.


\subsection{Proof of Lemma \ref{lem:weaker-criterion}}
\weakercriterion*

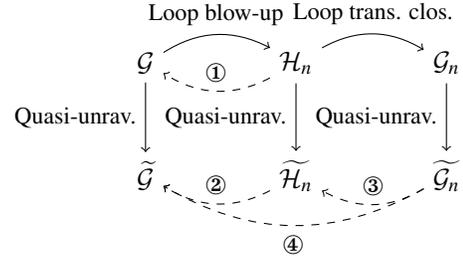
\begin{figure}
    \centering
    \begin{tikzpicture}
        \node (G) at (0,0) {$\GG$};
        \node (Hn) at (2,0) {$\HH_n$};
        \node (Gn) at (4,0) {$\GG_n$};
        
        \node (G-unrav) at (0,-1.5) {$\widetilde{\GG}$};
        \node (Hn-unrav) at (2,-1.5) {$\widetilde{\HH_n}$};
        \node (Gn-unrav) at (4,-1.5) {$\widetilde{\GG_n}$};

        \draw[->] (G) to[bend left] node[above] {\footnotesize Loop blow-up} (Hn);
        \draw[->] (Hn) to[bend left] node[above] {\footnotesize Loop trans. clos.} (Gn);
        \draw[->] (G) to node[left] {\footnotesize  Quasi-unrav.} (G-unrav);
        \draw[->] (Hn) to node[left] {\footnotesize  Quasi-unrav.} (Hn-unrav);
        \draw[->] (Gn) to node[left] {\footnotesize  Quasi-unrav.} (Gn-unrav);

        \draw[dashed, ->] (Hn) to[bend left] node[above] {\ding{172}} (G);
        \draw[dashed, ->] (Hn-unrav) to[bend left] node[above] {\ding{173}} (G-unrav);
        \draw[dashed, ->] (Gn-unrav) to[bend left] node[above] {\ding{174}} (Hn-unrav);
        \draw[dashed, ->] (Gn-unrav) to[bend left] node[below] {\ding{175}} (G-unrav);
    \end{tikzpicture}
    \caption{The proof overview for Lemma~\ref{lem:weaker-criterion}. Dashed arrows show homomorphisms between the structures.}
    \label{fig:proof-lem-weaker-criterion}
\end{figure}

\begin{proof}
A diagrammatic overview of this proof is presented in Figure~\ref{fig:proof-lem-weaker-criterion}.
Let $(\HH_n,a)$ be the rooted interpretation obtained from $(\GG_n,a)$ by skipping the edges included in the final step of the construction (i.e., without the transitive closure of the first $n$ copies of the original cycle). 
By construction, $(\HH_n,a)$ maps homomorphically into $(\GG,a)$ (represented as \ding{172} in the Figure). Consequently, the quasi-unravelling $(\widetilde{\HH_n},a)$ of $(\HH_n,a)$ maps homomorphically into the quasi-unravelling $(\widetilde \GG,a)$ of $(\GG,a)$ (represented as \ding{173} in the Figure). Moreover, the quasi-unravelling $(\widetilde{\GG_n},a)$ of $(\GG_n,a)$ maps homomorphically into $(\widetilde{\HH_n},a)$: when mapping paths in $\GG_n$ to paths in $\HH_n$ we simply replace each $t$-edge $(x,y)$ absent in $\HH_n$ with the shortest directed $t$-path from $x$ to $y$ in $\HH_n$, which exists because only edges following by transitivity are missing from $\HH_n$ (represented as \ding{174} in the Figure). We conclude by composing  the homomorphisms (respresented as \ding{175} in the Figure).
\end{proof}

\subsection{Proof of Lemma \ref{lem:converse}}
\converse*

Let $\intp$ denote $\widetilde{\GG_n}$. The interpretation $\intp$ will be quotiented into a partially finite countermodel by Theorem \ref{thm:pfColourBlocking}.

Since quasi\=/unravelling does not multiply critical elements, the sets of critical elements in both ${\GG_n}$ and $\widetilde{\GG_n}$ are identical. Moreover, distances between critical elements are preserved. Therefore, in order to define $\ell$-sparse colouring on $\intp$ we can just define the colouring on $\GG_n$. 

Let $\ell=2\cdot |\Delta^{Q}|^{3}$.
The colouring will be done inductively in a greedy fashion. Take any critical element $e$ of $\GG_n$ and assign it the smallest natural number not present in the $2\ell$-neighbourhood of $e$. $\GG_n$ has bounded degree, so the $2\ell$-neighbourhoods have a bounded size. Therefore, such a greedy procedure will use only finitely many colours. Moreover, it is $\ell$-sparse. Suppose that we have $y,y'\in N_\ell(x)$ such that $\kappa(y)=\kappa(y')$. Then $d(y,y')\leq 2\ell$, so whichever element get its colour second, it could not get the colour of first element. 

We can now apply Theorem \ref{thm:pfColourBlocking}. Let $\JJ=\intp\slash\sim_\ell$. Notice that in $\JJ$ there are as many critical elements as we have equivalence classes of $\sim_\ell$. So we just need to argue that there are finitely many equivalence classes of $\sim_\ell$ to finish the proof. Since $\GG_n$ has neighbourhoods of bounded size, it has finitely many different neighbourhoods even considering the colouring. Thus, partial finiteness of $\JJ$ follows from the following lemma.
\begin{lemma}
    Take any critical element $e$ in $\GG_n$. Then $\widetilde{N_\ell^{\GG_n}(e)}$ is homomorphically equivalent to ${N_\ell^{\intp}}(e)$
\end{lemma}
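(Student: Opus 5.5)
We need homomorphisms in both directions between the quasi\=/unravelling of the $\ell$\=/neighbourhood of $e$ in $\GG_n$ (rooted at $e$, with the critical elements and their colours inherited) and the $\ell$\=/neighbourhood of $e$ in $\intp=\widetilde{\GG_n}$. Both must be colour preserving and map $e$ to $e$. Let $h\colon \widetilde{\GG_n}\to\GG_n$ be the projection sending every copy to its original, as in Lemma~\ref{lem:unravelling} applied piece by piece. Critical elements are not multiplied by quasi\=/unravelling, so $h$ is the identity on critical elements, and by Fact~\ref{fact:homo-distance} it sends $N^{\intp}_\ell(e)$ into $N^{\GG_n}_\ell(e)$.

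\textbf{From $N_\ell^{\intp}(e)$ into the unravelled neighbourhood.} I would not simply compose $h$ with something. Instead I would define the map directly. Every element $x\in N^\intp_\ell(e)$ is a copy indexed by a path of the unravelling, and this path runs from the nearest critical element or piece root above it. Send $x$ to the copy of $h(x)$ indexed by the same path, now read inside $\widetilde{N^{\GG_n}_\ell(e)}$. The main point to check is that this path lies inside the $\ell$\=/neighbourhood. Since $d_\intp(e,x)\le\ell$, the undirected path witnessing this projects to a path in $\GG_n$ of length at most $\ell$. I expect to exploit the tree\=/like structure of $\widetilde{\GG_n}$: the copy $x$ is determined by its ancestor chain up to the last critical element or root on the way back towards $e$, and that chain projects into $N^{\GG_n}_\ell(e)$. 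Edges are preserved because both unravellings use the same local rules: child edges, edges to critical elements, and transitive closure within pieces.

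\textbf{From the unravelled neighbourhood into $N_\ell^{\intp}(e)$.} Take the restriction of $\GG_n$ to $N^{\GG_n}_\ell(e)$. The identity map is a homomorphism from it into $\GG_n$, and it fixes critical elements. Quasi\=/unravelling is functorial for homomorphisms that are injective on critical elements. So this identity lifts to a homomorphism $\widetilde{N^{\GG_n}_\ell(e)}\to\widetilde{\GG_n}$ that sends each copy to the copy indexed by the same path. Its image stays within distance $\ell$ of $e$: each element of the source is joined to $e$ by a path of length at most $\ell$ in the source, and homomorphisms do not increase distances (Fact~\ref{fact:homo-distance}). Colours are preserved because colours live only on critical elements, and these are fixed.

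\textbf{Main obstacle.} The hardest step is the first direction, specifically the claim that copies in $N^\intp_\ell(e)$ are indexed by paths entirely contained in $N^{\GG_n}_\ell(e)$. A distance\=/$\ell$ path in $\intp$ may reach $x$ via a transitive shortcut, while its copy index descends through far\=/away elements. I would handle this by choosing, for each $x$, a canonical copy in the target rather than insisting on the same index. Concretely, send $x$ to the copy of $h(x)$ reached by the projection of a shortest undirected path from $e$. Consistency of edges then has to be argued separately. Transitive shortcuts are mimicked by the transitive closure in the target, and non\=/transitive pieces are singletons joined by single edges. Finally, combining this lemma with the bounded size of neighbourhoods in $\GG_n$ gives finitely many $\sim_\ell$\=/classes among critical elements, as the surrounding proof requires.
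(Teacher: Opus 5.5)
\textbf{The main gap: your first direction does not define a map.} The problem is the direction $N^{\intp}_\ell(e)\to\widetilde{N^{\GG_n}_\ell(e)}$. You rightly see that keeping the same index fails, but your replacement is not a well-defined homomorphism.

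Copies in an unravelling are indexed by \emph{directed} paths starting at the root or at a critical cluster. So the projection of an undirected path from $e$ can only be lifted step by step, and backward steps are neither always available nor unique:
\begin{itemize}
\item stepping from $e$ back to a non-critical predecessor, you must pick one of the (in general many) copies of it that point to $e$;
\item from a non-critical copy, you can only step back to elements lying above it in the unravelling.
\end{itemize}
Hence the image of $x$ depends on the chosen shortest path and lifts, and nothing makes these choices agree along edges. For example, let $y\in N^{\intp}_\ell(e)$ be a predecessor of $e$ and let $x$ be a child of $y$ in the index tree. The copy of $h(y)$ you pass through when placing $x$ need not be the copy you chose as the image of $y$, and then the edge $(y,x)$ is lost. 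So ``consistency of edges then has to be argued separately'' defers the whole content of this direction.

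The missing idea is the paper's: do not re-route at all. Keep the index path of $x$ and cut it down to its maximal final segment inside $N^{\GG_n}_\ell(e)$. This involves no choices and is compatible with the tree structure of indices: the truncation of $pu$ is the truncation of $p$ followed by $u$ whenever $u$ and the last element of $p$ lie in the neighbourhood. That is what carries the parent--child edges of the unravelling, and the edges into critical elements, over to $\widetilde{N^{\GG_n}_\ell(e)}$.

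\textbf{Rooting at $e$.} Rooting the quasi-unravelling of $N^{\GG_n}_\ell(e)$ at $e$ makes the first direction impossible in general. That unravelling contains only $e$, the critical clusters, and copies of elements reachable from them. Non-critical elements of the neighbourhood reachable from no critical element therefore get no copy at all. An example is the loop-tree ancestors of $e$ in its own piece, when nothing critical lies above them. Yet their copies in $\intp$ are predecessors of $e$, so a homomorphism fixing $e$ has nowhere to send them.

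The paper instead starts the unravelling of the neighbourhood at the node $r$ lying $\ell$ levels above $e$. For the direction $\widetilde{N^{\GG_n}_\ell(e)}\to N^{\intp}_\ell(e)$, it prepends to every index a fixed path from the root of $\GG_n$ to $r$. This is your lifting of the inclusion, except that the inclusion is then not root-preserving. Your appeal to functoriality must therefore be supplemented by exactly this choice of a copy of $r$ in $\intp$; with the root at $e$, the step only looks trivial.

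\textbf{The projection $h$.} The map $h\colon\widetilde{\GG_n}\to\GG_n$ is a homomorphism only into the transitive closure of $\GG_n$. The reason is that $\widetilde{\GG_n}$ is transitively closed, whereas $\GG_n$ is deliberately left unclosed on blown-up cycles; compare the proof of Lemma~\ref{lmm:the-weakest-criterion}, where the natural projection becomes a homomorphism only after restriction. So Fact~\ref{fact:homo-distance} does not show that undirected paths of length at most $\ell$ in $\intp$ project to paths of $\GG_n$ inside $N^{\GG_n}_\ell(e)$, which your first direction uses.
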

\begin{proof}
    For the homomorphism from right to left:
    
    Each element in $\intp$ is associated with a path $p$ in $\GG_n$. Consider the suffix of $p$ that goes only through the elements of $N_\ell^{\GG_n}(e)$. This is an element of $\widetilde{N_\ell^{\GG_n}(e)}$. This mapping is a homomorphism.
    
    For the homomorphism from left to right: 

    Take the root $r$ of  the elementary interpretation (or a node if this is the piece) that is $\ell$ levels above $e$. Let $p$ be a path from the root of $\GG_n$ to $r$. Prepend $p$ to all paths defining elements in $\widetilde{N_\ell^{\GG_n}(e)}$. We will get path in $\GG_n$, so elements of $\intp$. This mapping is a homomorphism. 
\end{proof}

\subsection{Proof of Lemma \ref{lmm:the-weakest-criterion}}
\theweakestcriterion*
We start with the following lemma.

\begin{lemma}\label{lem:gk_To_Hn}
$\widetilde{\GG_n}$ homomorphically maps into $\widetilde{\HH_n}$ where $\HH_n$ is the rooted interpretation obtained from $(\GG_n,a)$ by skipping the edges included in the final step of the construction (i.e., without the transitive closure of the first $n$ copies of the original cycle). 
\end{lemma}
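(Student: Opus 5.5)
We reuse the argument from the proof of Lemma~\ref{lem:weaker-criterion}, step \ding{174}. The plan is to define a map $g$ from $\widetilde{\GG_n}$ to $\widetilde{\HH_n}$ path by path. Recall that $\GG_n$ and $\HH_n$ have the same domain. The only edges present in $\GG_n$ but missing in $\HH_n$ are the $t$-edges added when transitively closing the first $n$ copies of each blown-up cycle. Each such edge $(x,y)$ lies inside a single blown-up cycle, within its first $n$ copies. Hence there is a directed $t$-path from $x$ to $y$ in $\HH_n$ that runs along the cycle and visits only non-critical elements; loop trees contain no critical elements on cycles.

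The elements of $\widetilde{\GG_n}$ (besides critical elements and the root, which $g$ fixes) are pairs $(K,p)$, where $p$ is a path in $\GG_n$ avoiding critical elements and the root. They arise along the piecewise unravelling, which concatenates such paths across pieces. We define $g$ on these paths by replacing every edge $(x,y)$ of $\GG_n\setminus\HH_n$ occurring in $p$ with the fixed shortest $\HH_n$-path $x\to y$ along the cycle. The replacement path $p'$ is a legal path in $\HH_n$. It avoids critical elements and the root, because the inserted vertices are non-critical cycle elements distinct from $a$. Its last element is the same as that of $p$. We set $g(K,p)=(K,p')$, and apply the same substitution piecewise in the multi-piece unravelling.

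Next we check that $g$ is a homomorphism. Unary types and labels are preserved, since they are determined by the last element, which is unchanged. For $t$-edges, note that an edge $((K,p),(K,pu))$ maps to a pair $((K,p'),(K,p'q))$, where $q$ is the replacement of the last step, so $q$ is nonempty. This pair is connected by a chain of edges in the quasi-unravelling, so the transitive closure in $\widetilde{\HH_n}$ contains it. Edges from a critical element $u\in K$ to $(K,p)$ are preserved because the first vertex of $p$ is unchanged. Edges from $(K,pu)$ to a critical $v$ with $(u,v)\in t$ are preserved: if $(u,v)$ were a new edge of $\GG_n$, then $v$ would lie on a blown-up cycle, which is impossible since $v$ is critical, so $(u,v)\in t^{\HH_n}$. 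Edges of non-transitive roles between pieces are untouched. Since $g$ preserves all generating edges and $t^{\widetilde{\HH_n}}$ is transitive, $g$ preserves the transitive closure.

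The main obstacle is the bookkeeping needed to ensure that the substitution is compatible with prefixes, i.e.\ that $g$ maps path extensions to path extensions. We handle this by making the replacement deterministic and local, edge by edge, so that $(pu)'=p'q$ holds by construction.
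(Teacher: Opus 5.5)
Your proposal takes the same route as the paper: the paper's proof, and the third homomorphism in the proof of Lemma~\ref{lem:weaker-criterion}, likewise map paths of $\GG_n$ to paths of $\HH_n$ by replacing each edge absent from $\HH_n$ with the shortest directed $t$-path in $\HH_n$, and your edge-by-edge check is more explicit than the paper's sketch, so it is correct up to one small point. That point is the implicit first step from $K$ into $p$: if $K=\{b\}$ for a non-critical piece root $b$ lying on a blown-up cycle, this step can itself be a new edge, in which case $(K,p')$ with an unchanged first vertex is not an element of $\widetilde{\HH_n}$, so substitute that step too (prepend the intermediate cycle vertices, which are non-critical and distinct from $b$), and read ``the root'' throughout as the piece root rather than $a$.
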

\begin{proof}
    interpretations $\GG_n$ and $\HH_n$ have the same set of elements. Each transitive edge $e$ in $\HH_n$ which is not present in $\GG_n$ can be translated into a path $p$ whose transitive closure evaluates to $e$. Therefore the set of paths in $\GG_n$ (and thus $\widetilde{\GG_n}$) can be translated into the set of paths in $\HH_n$ (and thus $\widetilde{\HH_n}$).
\end{proof}

\begin{definition}
 Let $\HH^n$ denote a subinterpretation of $\widetilde{\HH_n}$ where we take only those elements whose construction paths traverse only the first $n$ copies of elements of $\GG$ in each loop.
\end{definition}
\begin{lemma}
    $\HH^n$ homomorphically maps into $\GG_n$.
\end{lemma}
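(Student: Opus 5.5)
The plan is to build the homomorphism $g\colon \HH^n \to \GG_n$ by following construction paths. Every element of $\widetilde{\HH_n}$, and so every element of $\HH^n$, is a copy of an element of $\HH_n$. That element is either a critical element kept as itself, or a pair whose last path element determines it. The domains of $\HH_n$ and $\GG_n$ coincide, so we set $g(e)$ to be the element of $\HH_n$ of which $e$ is a copy. This is the map $h$ of Lemma~\ref{lem:unravelling}, applied piece by piece through the unravelling of the piecewise structure. Unary types are inherited verbatim, so they are preserved. Non-transitive edges between pieces come from edges of $\HH_n$, and these are also edges of $\GG_n$, so they are preserved as well.

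The substantive part is the transitive roles. An edge $t(e,e')$ in $\HH^n$ arises in one of two ways. In the first case it is a generating edge of the quasi-unravelling, which projects to an edge of $\HH_n \subseteq \GG_n$. In the second case it comes from the transitive closure in $\widetilde{\HH_n}$, and then there is a directed $t$-path $e=e_0,e_1,\dots,e_m=e'$ of generating edges in $\widetilde{\HH_n}$. Since $\HH^n$ is defined by a prefix-closed condition on construction paths, every intermediate $e_i$ also lies in $\HH^n$. The projection $g(e_0),\dots,g(e_m)$ is therefore a $t$-path in $\HH_n$. We must show that $(g(e),g(e'))\in t^{\GG_n}$. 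Recall that $\GG_n$ contains every edge of the transitive closure of $\HH_n$ except those joining two elements of the same blown-up cycle that lie outside the first $n$ copies, where the closure was deliberately withheld. So the only problematic case is a projected path that starts and ends on the same blown-up cycle while wrapping along that cycle.

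This is where the definition of $\HH^n$ is used. Construction paths of elements of $\HH^n$ traverse only the first $n$ copies of each blown-up cycle. Consequently both $g(e)$ and $g(e')$, when they lie on a blown-up cycle, lie among its first $n$ copies. Those copies were transitively closed in the last step of building $\GG_n$, so the required edge is present. If the endpoints lie on different cycles, or if the path leaves the cycle, the edge is present by the partial transitive closure. I expect the main obstacle to be the case analysis showing that a path in $\widetilde{\HH_n}$ between two elements of $\HH^n$ whose projections lie on the same cycle cannot pass through copies beyond the $n$-th. The first thing I would try is to argue via prefix-closedness that such a path stays inside $\HH^n$. Critical elements need separate care, since they are shared rather than copied, but they are never on loop-tree cycles, so they fall under the partial-closure case.
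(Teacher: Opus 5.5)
Your proposal is correct and follows the paper's route. The paper also takes the natural projection from $\widetilde{\HH_n}$ to $\HH_n$, restricts it to $\HH^n$, and concludes in one line that this is a homomorphism into $\GG_n$ because $\GG_n$ is transitively closed on its image; your case analysis spells that remark out. One point to tighten: prefix-closedness does not keep arbitrary generating paths inside $\HH^n$, since they can pass through critical elements. In the only delicate case, however, where both endpoints project to the same blown-up cycle, the path cannot leave that cycle. It is therefore a forward walk within the first $n$ copies, and the closure of those copies added to $\GG_n$ contains exactly these forward edges.
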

\begin{proof}
Let $f$ be the function from $\widetilde{\HH_n}$ to $\HH_n$ defined by the natural mapping induced by the unravelling procedure. This function is not necessarily an homomorphism since some edges generated by the transitive closure might not be present in the original structure. Let $f'$ denote $f$ restricted to the elements of $\HH^n$. Since $\GG_n$ is, by definition, transitively closed on the image of $f'$, the function $f'$ naturally defines a homomorphism from ${\HH^n}$ to $\GG_n$.

\end{proof}

Suppose  $\widetilde\GG_n \models Q$. Then there is a homomorphism $h:Q\rightarrow\widetilde\GG_n$. Since, by Lemma~\ref{lem:gk_To_Hn}, $\widetilde {\GG_n}$ homomorphically maps to $\widetilde{\HH_n}$, we get a homomorphism $k:Q\rightarrow\widetilde{\HH_n}$. We will show that $k$ can be changed in such a way that $\Ima(k) \subseteq \HH^n$.

Given an element $a$, we denote by $sub_{G}(a)$ the subinterpretation rooted in $a$ in the graph $G$ and by $Anc(a)$ the set of ancestors of $a$ in $\widetilde{\HH_n}$. 

Consider $x\in var(Q)$ such that $k(x) \not \in \HH^n$. It means that $k(x)$ was removed from $\widetilde{\HH_n}$ by truncating a branch. Let $B$ be the highest branch  in $\widetilde{\HH_n} $ such that $B\cap Anc(k(x))\neq \emptyset$ and let $x'$ be a query variable such that $B\cap Anc(k(x'))\neq \emptyset$ is the smallest possible set, i.e. the path defined by  $Anc(k(x'))$ leaves the branch $B$ the earliest. 

Let $v$ be the first element in $B$ such that $k(x')$ and $v$ come from the same element in $\GG$. Subinterpretations rooted in $k(x')$ and $v$ are isomorphic due to the definition of quasi\=/unravelling, let this isomorphism be denoted $f$. We claim that if we move images of all variables from $sub_{\widetilde{\HH_n}}(k(x'))$ to $sub_{\widetilde{\HH_n}}(v)$  through $f$ we will get function $k':var(Q)\rightarrow \widetilde{\HH_n}$ which is a function from $Q$. We have to prove it is a homomorphism.

Since $f$ is an isomorphism, the property is trivially true for all pairs of elements within $sub_{\widetilde{\HH_n}}(v)$.  
The only edges  incoming  to $k(x')$ are in the same transitive relation $s$ as in $B$. Therefore, by construction of elementary interpretations, if there is an edge from an element $u$ to $k(x')$, then there is also an edge from $u$ to $v$. By construction of quasi\=/unravelling, all elements in $sub_{\widetilde{\HH_n}}(k(x'))$ are reachable from outside through $k(x')$ or through a critical element, and they have to be reachable by an $s$ edge. If they are reachable through $k(x')$, then they also reachable through $v$. Since $f$ is defined as the identity on critical elements and their rooted subinterpretations, all incoming edges to those elements stay the same.

By analogical reasoning we can fix $k'$ such that the variable $y$ is within the second biggest $B\cap Anc(k''(y))\neq \emptyset$ and gets mapped into the second loop unravelling in $B$ and so on. Finally, we get a homomorphism $h$ such that for each variable $z$ we have $B\cap Anc(h(z)) \subseteq \HH^n$. We can iterate above procedure to correct homomorphisms against any branch in $\HH_n$. Since there are only finitely many branches in all $Anc(h(x))$'s, after finitely many branch corrections of $h$, we get a homomorphism whose image is in $\HH^n$. 

Suppose now that $\GG_n\models Q$. Since, by construction, loops in $\GG_n$ are much longer than the number of variables in $Q$, the satisfaction of $Q$ by $\GG_n$ cannot be triggered by loops. More formally, for every loop $l$ in $\GG_n$ there is an element $e_l\in l\setminus \Ima(Q)$ such that after removing from $\GG_n$ all elements $e_l$ the resulting interpretation will not have directed cycles. Let us denote this interpretation $\bar\GG_n$. By definition $Q$ homomorphically maps into $\bar\GG_n$, so to prove that $\widetilde{\GG_n}\models Q$ it is enough to prove the following lemma.

\begin{lemma}
    $\bar\GG_n$ homomorphically maps into $\widetilde{\GG_n}$.
\end{lemma}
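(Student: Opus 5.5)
We construct the homomorphism $g\colon \bar\GG_n\to\widetilde{\GG_n}$ by unravelling: every element of $\bar\GG_n$ is sent to the copy in $\widetilde{\GG_n}$ indexed by a path that leads to it. The key fact is that $\bar\GG_n$ has no directed cycles outside critical clusters. Removing one element $e_l$ from every blown-up loop $l$ destroys all cycles coming from loop trees. The only SCCs left are the critical clusters, and these are kept verbatim as single elements in the quasi-unravelling. Hence the quotient of $\bar\GG_n$ by its critical clusters is a DAG that is rooted at the images of the piece roots.

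We then define $g$ top-down, piece by piece, following the tree of pieces. Critical elements and the root $a$ are mapped to themselves, since they are their own copies in $\widetilde{\GG_n}$. For a non-critical element $u$ of $\bar\GG_n$, we pick one canonical path to it. This path starts from $a$ or from the nearest critical cluster $K$ above $u$ in the same transitive piece, and it uses only non-critical elements; because the relevant subgraph is acyclic, it exists and is simple. Concretely, we take the tree-edge path inside the cut loop tree. Since $e_l$ is removed, the part of the blown-up loop that remains is a directed path, so the unravelled tree path is unique. We set $g(u)=(K,p)$ with $p$ this path. Across pieces, the element is sent into the fresh copy of the child quasi-unravelling that hangs below $g$ of the parent attachment point, and we recurse.

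Next we verify edge preservation, which is a case analysis following the four kinds of edges in the definition of quasi-unravelling. Tree edges of loop trees map to edges $((K,p),(K,pu))$. Edges into critical elements map to the fourth kind of edge. Edges from critical elements into the subtree below them map to the second kind. Edges between pieces are reproduced because the unravelling of piecewise interpretations attaches a copy below every copy of the source. The remaining edges come from transitive closure in $\GG_n$: ordinary closure edges, and the partial closure among the first $n$ copies on a blown-up loop. These form the main obstacle. If $(x,y)\in t^{\GG_n}$ with both $x,y$ non-critical in the same loop tree, we need $g(y)$ to be $t$-reachable from $g(x)$ in $\widetilde{\GG_n}$. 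That holds exactly when the canonical path to $y$ extends the canonical path to $x$ (or passes through a critical element reachable from $g(x)$). The partial transitive closure may contain an edge from a later copy back to an earlier one only within the first $n$ copies, and such edges run backwards along the path. We therefore have to check that cutting at $e_l$ is done so that within each loop the surviving closure edges all go forward along the remaining path. We would choose $e_l$ among the last $n+1$ copies, which is possible because at most $|Q|\le n$ elements lie in the image of $Q$ and there are $n+1$ candidates. Then every surviving edge in the closed segment goes forward along the remaining path, so the targets are descendants in the unravelling and transitivity of $t^{\widetilde{\GG_n}}$ gives the edge. Finally, unary types are preserved because every copy inherits the type of its original.
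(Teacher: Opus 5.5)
There is a genuine gap. Assigning to each element its tree-edge path in the cut loop tree does not preserve the ordinary closure edges that run from a blown-up loop into the $t$-subtrees hanging off it, and your verification only checks edges among the loop elements themselves. Every element of a blown-up loop reaches, by going around the loop, every element of the $t$-subtree attached at each copy $u_k$ of the loop's root. Such a pair is not ``between elements of the same blown-up cycle'', so it is a $t$-edge of $\GG_n$, and it survives in $\bar\GG_n$.

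Let $x$ be the last element of the residue path (the predecessor of $e_l$), and let $y$ be a non-critical element of the $t$-subtree at a copy $u_k\neq x$. With $2n+1\geq 3$ copies of $u$, such a $u_k$ exists wherever $e_l$ is cut, as soon as $u$ has a non-critical $t$-child. Under your map $g(y)=(K,p_y)$, and $p_y$ leaves the residue at $u_k$, whereas $p_x$ continues along the residue beyond $u_k$. So $p_x$ is not a prefix of $p_y$. In $\widetilde{\GG_n}$ the only elements with $t$-edges to a non-critical copy $(K,p)$ are those reaching $K$ and the copies indexed by proper prefixes of $p$. Since $g(x)$ is neither, $(g(x),g(y))\notin t^{\widetilde{\GG_n}}$.

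The missing idea is that the subtrees hanging off a loop must be mapped below the image of the \emph{entire} residue, not below the image of their own attachment point. The paper does this as follows:
\begin{itemize}
\item it sends the loop root $u$ to the second copy of $u$ on the branch $B$ of $\widetilde{\GG_n}$ that unravels the blown-up loop;
\item it places the residue between the first and third copies of $u$, with the part preceding $u$ going into the first lap;
\item it sends the $t$-children of $u$ to children of the \emph{third} copy of $u$, so transitivity of $t^{\widetilde{\GG_n}}$ supplies the edges from every residue image;
\item children via other roles stay under the copy of their parent.
\end{itemize}
Within your own framework you get the same effect by routing the canonical path to each such subtree through the last residue element. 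The closure edge from that element to the subtree's root is a genuine edge of $\GG_n$, so this is a legal path.

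As a side remark, closing the first $n$ copies adds only forward edges, so there are no backward edges to worry about. Your choice of $e_l$ outside the first $n$ copies is still the right one: a cut inside them would let forward closure edges jump over it and leave a directed cycle in $\bar\GG_n$.
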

\begin{proof}
    We will find desired homomorphism $h$ by induction on the distance from the root. It will be done in such a way that every element in $\bar\GG_n$ will be mapped into one of its copy in $\widetilde{\GG_n}$.

    Let  $v$ be an element of $\bar\GG_n$, $u$ a successor of $v$ and $v'$ the image by $h$ of $v$ in $\widetilde{\GG_n}$. We will map $u$ into descendants of $v'\in \widetilde{\GG_n}$. If $u$ is not a root of a cycle in $\GG_n$, then we map $u$ to the corresponding successor of $v'$. If $u$ is a root of a cycle $c$ of relation $r$ in $\HH_n$, then we do the following. Let $B$ be the branch in $\widetilde{\GG_n}$ that is the unravelling of $c$ we map $u$ to the second copy of $u$ in $B$. The residue of $c$ in $\bar \GG_n$ is mapped into $B$ between first and third copy of $u$. The non $r$-children of $u$ will be mapped to descendants of the second copy of $u$ in $B$. The $r$-children of $u$ will be mapped to children of third copy of $u$,  instead of children of the second copy. Notice that due to transitivity, $r$-children of the third $u$ copy will be $r$ connected to the second $u$ copy as well.

    As for the base step of this procedure, we add a virtual root to both interpretations and apply the inductive step. That is,  the root of $\bar\GG_n$ (the successor of the virtual root) will be mapped to the  descendants of the virtual root of $\widetilde{\GG_n}$.

\end{proof}

\subsection{Proof of Lemma \ref{lem:complexity}}
\complexity*

\begin{proof}
There are pros and cons for trying to find existence of a  $\GG_n$ directly. On the pro side $\GG_n \models Q$ iff $\widetilde{\GG_n} \models Q$ so query testing is direct. On the con side $\GG_n$ is a result of a variant of an unravelling, so the interpretation is not local. Therefore, our approach will be somewhere in between. We will look for $G$, but will check query satisfiability not in pieces of $G$, but in pieces of $\GG_n$.

The algorithm is a variant of the type elimination procedure. We would like to build  tree $\GG$ iteratively level by level with a Safety Condition. To this end we need to turn query non satisfiability into local condition. 

Our building blocks will be elementary interpretations with ports (for transitive roles) and level one trees with ports (for non transitive roles). Those blocks will be annotated additionally with sets of subqueries of $\QQ$ at the root and at the ports. The intended semantics is that if a copy of a block $b$ will end up in $\GG$, then subqueries form the  annotation of the block root $v$  will not be satisfied in subtree of $\GG_n$ rooted in a copy of $v$, provided that subqueries from port annotations are not satisfied in their respected subtrees.

The first step will be finding the set of valid blocks. Elementary interpretations have exponential size, so there are doubly exponentially many of them. Set of subqueries is exponential in size of conjunctive query disjuncts and linear in the number of disjuncts. Therefore for a given  elementary interpretation there are doubly exponentially many possible annotations. This means there are doubly exponentially many candidates for valid blocks, each of exponential size. When blowing up the structure as in Lemma~\ref{lem:weaker-criterion}, we copy annotations accordingly.  Naive validation of a single block will suffice. We check if a query form the root annotation is not satisfied in the blown-up block by considering all possible mappings of query variables to the block. Each block candidate will be checked in doubly exponential time. Multiply that time by the number of candidates to check, and we get preprocessing done in doubly exponential time.

Let $B_0$ be the set of blocks obtained in the preprocessing. If we would use a block $b\in B_0$ to construct $\GG$, then ports of $b$ will have to be filled with some other blocks. In general it could not be possible. So we will reduce the set $B_0$ by a fixpoint procedure:

\begin{quote}
$B_{i+1}$ is the set of those $b\in B_i$ whose witness ports can be fulfilled by blocks from $B_i$. 
\end{quote}

This fixpoint procedure has to stop after $|B_0|$ many steps, which is doubly exponential. Each step can be done in polynomial time in the size of $B_i$. Hence, the whole fixpoint procedure will take doubly exponential time.

If we end up with an empty $B_i$, it means that $\GG$  cannot exist. If we end up with nonempty $B_i$, then $\GG$ can be constructed in the greedy fashion.
\end{proof}

\ignore{
\clearpage

\section{OLD: Missing proofs for multiple roles case}
\subsection{Proof of Lemma~\ref{lem:existence_LoopDAGtree}}
\existenceLoopDAGtree*

Let $\UU_k$ be subinterpretation of $\JJ_t$ consisting of elements reachable from the root in $k$ steps. Notice that due to transitivity, all elements of a transitive component will be taken to one of $\UU_k$ at once. Element $v$ has height $i$ if $v\in \UU_i \setminus \UU_{i-1}$.

By inductive pruning bottom-up we can turn $\UU_k$ into quasi\=/unravelling  of a elementary multi\=/interpretation $\GG_k$. We will construct interpretations $\UU_k^i$ with the following properties
\begin{itemize}
    \item $\UU_k^i \subseteq \UU_k$
    \item On the first $i-1$ levels interpretations $\UU_k^i$ and $\UU_k$ are identical, but all edges from $i-1$ level lead to interpretation which are unravelling of elementary multi\=/interpretation of height $k-i$ 
\end{itemize}
We can set $\UU_k^{k+1}$ as just $\UU_k$. Suppose we already constructed $\UU_k^{i+1}$. To construct $\UU_k^{i}$ want to 'clean' the $i$-th level. There are two cases: transitive and non transitive.
If an element $v$ on the $i$-th level came as a witness of a non transitive role, we do not have to do anything. $v$ together with its descendants already form an unravelling of elementary multi\=/interpretation of height $k-i$.

Consider now a connected component $C$ of a transitive role $s$ on the level $i$. For each $i$ glue together all elements in the concept $Cr_i$. This not spoil mapping into $\II$, since all elements in the concept $Cr_i$ are mapped to the unique element of $\II$ of the same concept.
we leave only one set of witnessing elementary multi\=/interpretation for each quotient element, and denote resulting structure by $C'$.
We next label each elements $v \in C'$  by the the elementary multi\=/interpretation to which unravellings $v$ is connected.
Since there is only finitely many  elementary multi\=/interpretation of heigh $k-i-1$ we get labelling $\lambda$ with finite set of labels. To labelled component $C'$ we apply Theorem \label{thm:regularization} obtaining elementary interpretation $K$, such that $\widetilde K\subseteq C'$. Moreover all elements  of $\widetilde K$ that come form unravelling of a the same element in $K$ have isomorphic set of descendants, since this information was coded in the labelling. We remove from $\UU_k^{i+1}$ all elements in $C'\setminus \widetilde K$ and all its descendants. By realizing elementary multi\=/interpretation form $\lambda$ as children of $K$ we get suitable elementary multi\=/interpretation describing  this part of $\UU_k^i$.

By definition $\UU_k^0$ is an unravelling of a elementary multi\=/interpretation and we can take it as $\GG_k$.

We will use now the K\"onig lemma. As nodes for K\"onig tree we will take structures $\GG_k$ and all their truncations to smaller heights. We say there is an edge in K\"onig tree between nodes $t$ and $t'$ if $t$ is a truncation of $t'$ to height one less than height of $t'$. By  K\"onig lemma this tree has an infinite branch. Let $\GG$ be the limit structure of this branch

\begin{lemma}
 $\GG$ is a counter\=/model for $Q$
\end{lemma}
\begin{proof}
    If $Q$ would map into $\GG$, then it would map into a finite prefix of $\GG$. This prefix would be a substructure for certain $\GG_k$. But $\GG_k$ maps into $\II$. Therefore $Q$ would map into $\II$, which is not possible.

    Every element $a$ of $\GG$ on depth $i$ is in a $\GG_k$ for $k>i$. Since all elements of $\GG_k$ have all necessary witnesses for elements except last layer,
    $a$ has all required witnesses in $\GG_k$. Hence also in $\GG$.
\end{proof}

\subsection{Proof of Lemma \ref{lem:Hnnperfect}}
\Hnnperfect*
We start with the following lemma.

\begin{lemma}\label{lem:gk_To_Hn}
$\widetilde{\HH_n}$ homomorphically maps into $\widetilde{core(\HH_n)}$
\end{lemma}
\begin{proof}
    Structures $\HH_n$ and $core(\HH_n)$ have the same set of elements. Each transitive edge $e$ in $\HH_n$ which is not present in $core(\HH_n)$ can be translated into a path $p$ whose transitive closure evaluates to $e$. Therefore the set of paths in $\HH_n$ (and thus $\widetilde{\HH_n}$) can be translated into the set of paths in $core(\HH_n)$ (and thus $\widetilde{core(\HH_n)}$)
\end{proof}

\begin{definition}
 Let $\HH^n$ denote a substructure of $\widetilde{core(\HH_n)}$ where we take only those elements, whose construction paths traverse each loop only first $n$ copies of $\GG$ loop elements.
\end{definition}
\begin{lemma}
    $\HH^n$ homomorphically maps into $\HH_n$.
\end{lemma}
\begin{proof}
    On the image of natural mapping induced by unravelling, $\HH_n$ is transitively closed.
\end{proof}

Suppose  $\widetilde\HH_n \models Q$. Then there is a homomorphism $h:Q\rightarrow\widetilde\HH_n$. Since  $\widetilde {\HH_n}$ homomorphically maps to $\widetilde{core(\HH_n)}$, we get a homomorphism $k:Q\rightarrow\widetilde{core(\HH_n)}$. We will show that $k$ can be changed in such a way, that $\Ima(k) \subseteq \HH^n$, 

For element $v \in \GG$ let $substructure(v)$ denote substructure rooted in $v$.

Consider a branch $B$ in $\widetilde{core(\HH_n)} $ that comes from the unravelling of a loop in $core(\HH_n)$, and consider $x\in var(Q)$ such that $h(x)\in B$ and $h(x)$ is the highest such element in the $\Ima(h)$. Let $v$ be the first element in $B$ such that $h(x),v$ come from the same element in $\GG$. Substructures rooted in $h(x),v$ are isomorphic due to the definition of quasi\=/unravelling, let this isomorphism be denoted $f$. We claim that if we move images of all variables from $substructure(h(x))$ to $substructure(v)$  through $f$ we will get function $h':var(Q)\rightarrow \widetilde{core(\HH_n)}$ which is a homomorphism on $Q$.

We do not have care about relations between pair of elements in $substructure(v)$, since we applied an isomorphism. The only edges  incoming  to $h(v)$ are in the same transitive relation $s$ as in $B$. Therefore, due to construction of loop-DAGs, if there is an edge from $u$ to $h(x)$ there is an edge from $u$ to $v$. Due to construction of quasi\=/unravelling, all elements in $substructure(h(x))$ are reachable from outside through $h(x)$ or through a critical element, and they have to be reachable by an $s$ edge. if they are reachable through $h(x)$, then they also reachable through $v$. Since $f$ is identity on critical elements and their rooted substructures, all incoming edges to those elements stay the same.

By analogical reasoning we can fix $h'$ such that second highest element $\Ima(h') \cap B$ is in the second loop unravelling in $B$ and so on. Finally, all variable images will land in $\HH^n$

Suppose now that $\HH_n\models Q$. Loops in $\HH_n$ are too long to be noticeable by $Q$. More formally, for every loop $l$ in $\HH_n$ there is an element $e_l\in l\setminus \Ima(Q)$ such that after removing from $\HH_n$ all elements $e_l$ resulting strucrure will not have directed cycles. Lets denote this structure $\bar\HH_n$. By definition $Q$ homomorphically maps into $\bar\HH_n$, so to prove that $\widetilde{\HH_n}\models Q$ it is enough to prove the following lemma.
\begin{lemma}
    $\bar\HH_n$ homomorphically maps into $\widetilde{\HH_n}$.
\end{lemma}
\begin{proof}
    We will find desired homomorphism $h$ by induction on the distance from the root. It will be done in such a way that every element in $\bar\HH_n$ will be mapped into one of its copy in $\widetilde{\HH_n}$.

    Suppose that  $v\in \bar\HH_n$, $u$ is a successor of $v$, and we looking to map $u$ into descendants of $v'\in \widetilde{\HH_n}$. If $u$ is not a root of a cycle in $\HH_n$, then we map $u$ to the corresponding successor of $v'$. If $u$ is a root of a cycle $c$ of relation $r$ in $\HH_n$, then we do the following. Let $B$ be the branch in $\widetilde{\HH_n}$ that is the unravelling of $c$ we map $u$ to the second copy of $u$ in $B$. The residue of $c$ in $\bar \HH_n$ is mapped into $B$ between first and third copy of $u$. The non $r$-children of $u$ will be mapped to descendants of the second copy of $u$ in $B$. The $r$-children of $u$ will be mapped to children of third copy of $u$,  instead of children of the second copy. Notice that due to transitivity, $r$-children of the third $u$ copy will be $r$ connected to the second $u$ copy as well.

    As for the base step of this procedure, we add a virtual root to both structures and apply the inductive step. That is,  the root of $\bar\HH_n$ (the successor of the virtual root) will be mapped to the  descendants of the virtual root of $\widetilde{\HH_n}$.

\end{proof}

\subsection{Proof of Lemma \ref{lem:intoFinite}}
\lemintoFinite*

Let $\intp$ denote $\widetilde{\HH_n}$. The structure $\intp$ will be a quotient in a partially finite countermodel by Theorem \ref{thm:pfColourBlocking}.

Since quasi\=/unravelling does not multiply critical elements, set of critical elements in both structures are identical. Moreover, distances between critical elements are preserved. Therefore, in order to define $\ell$-sparse colouring on $\intp$ we can just define the colouring on $\HH_n$. Let $\ell=|var(Q)|+1$. The $+1$ is here for the following annoying reason. For The reasoning about $\HH_n$ we want to use as a distance the distance in loop\=/DAG\=/tree. But it could be bigger by one compared to the standard distance in the compiled structure. 

The colouring will be done inductively in a greedy fashion. Take any critical element $e$ of $HH_n$ and assign it the smallest natural number not present in the $2\ell$-neighbourhood of $e$. $\HH_n$ has bounded degree, so the $2\ell$-neighbourhoods have a bounded size. Therefore, such a greedy procedure will use only finitely many colours. Moreover, it is $\ell$-sparse. Suppose that we have $y,y'\in N_\ell(x)$ such that $\kappa(y)=\kappa(y')$. Then $d(y,y')\leq 2\ell$, so whichever element get its colour second, it could not get the colour of first element. 

We can now apply Theorem \ref{thm:pfColourBlocking}. Let $\JJ=\intp\slash\sim_\ell$. Notice that in $\JJ$ there are as many critical elements as we have equivalence classes of $\sim_\ell$. So we just need to prove that there are finitely many equivalence classes of $\sim_\ell$ to finish the proof of Lemma \ref{lem:intoFinite}. Since $\HH_n$ has neighbourhoods of bounded size, it has finitely many different ones even considering the colouring. Thus, partially finiteness of $\JJ$ follows from the following lemma.
\begin{lemma}
    Take any critical element $e$ in $\HH_n$. Then $\widetilde{N_\ell^{\HH_n}(e)}$ is homomorphically equivalent to ${N_\ell^{\intp}}(e)$
\end{lemma}
\begin{proof}
    For the homomorphism from right to left:
    
    Each element in $\intp$ is associated with a path $p$ in $\HH_n$. Consider the suffix of $p$ that goes only through the elements of $N_\ell^{\HH_n}(e)$. This is an element of $\widetilde{N_\ell^{\HH_n}(e)}$. This mapping is a homomorphism.
    For the homomorphism from left to right: 

    Take the root $r$ of  the loop\=/DAG that is $\ell$ levels above $e$. Let $p$ be a path from the root of $\HH_n$ to $r$. Prepend $p$ to all paths defining elements in $\widetilde{N_\ell^{\HH_n}(e)}$. We will get path in $\HH_n$, so elements of $\intp$. This mapping is a homomorphism 
\end{proof}

\subsection{Proof of Lemma \ref{lem:twoExpTime}}

\lemTwoExpTime*

\begin{proof}
There are pros and cons for trying to find existence of a  $\HH_n$ directly. On the pro side $\HH_n$ in $n$-perfect, so query testing is easy. On the con side $\HH_n$ is a variant of an unravelling, so the structure is not local. Therefore our approach will be somewhere in between. We will look for $G$, but will check query satisfiability in $G$, but in $\HH_n$.

The algorithm is a variant of the type elimination procedure. We would like to build  tree $\GG$ iteratively level by level with a Safety Condition. To this end we need to turn query non satisfiability into local condition. 

Our building blocks will be elementary interpretations with ports (for transitive roles) and level one trees with ports (for non transitive roles). Those blocks will be annotated additionally with sets of subqueries of $\QQ$ at the root and at the ports. With intended semantics that if a copy of a block $b$ will end up in $\GG$, then subqueries form the  annotation of the block root $v$  will not be satisfied in subtree of $\HH_n$ rooted in a copy of $v$, provided that subqueries from port annotations are not satisfied in their respected subtrees.

The first step will be finding the set of valid blocks. Elementary interpretations have exponential size, so there are doubly exponentially many of them. Set of subqueries is exponential as well. Therefore for a given  Elementary interpretation there are doubly exponentially many possible annotations. At the end we have  doubly exponentially many candidates for valid blocks, each of exponential size. If we $n$-extend a block, we copy annotations acordingly.  Naive validation of a single block will suffice. We check if a query form the root annotation is not satisfied in $n$-extended block by considering all possible mappings of query variables to the block. Each block candidate will be checked in doubly exponential time. Multiply that time by the number of candidates to check, and we get preprocessing done in doubly exponential time.

Let $B_0$ be the set of blocks obtained in the preprocessing. If we would use a block $b\in B_0$ to construct $\GG$, then ports of $b$ will have to be filled with some other blocks. In general it could be not possible. So we will reduce the set $B_0$ by a fixpoint procedure:

\begin{quote}
$B_{i+1}$ is the set of those $b\in B_i$ whose witness ports can be fulfilled by blocks from $B_i$. 
\end{quote}

This fixpoint procedure has to stop after $|B_0|$ many steps, with is doubly exponential. Each step can be done in polynomial time in the size of $B_i$. Hence, the whole fixpoint procedure will take doubly exponential time.

If we end up with an empty $B_i$, it means that $\GG$  cannot exists. If we end up with nonempty $B_i$, then $\GG$ can be constructed in the greedy fashion.
\end{proof}
}

\end{document}